\documentclass[a4paper,UKenglish,cleveref, autoref, thm-restate]{lipics-v2021}
\usepackage[utf8]{inputenc}
\usepackage[noline,noend,linesnumbered,ruled]{algorithm2e}
\DontPrintSemicolon
\usepackage{tikz}
\usetikzlibrary{calc,decorations.pathmorphing,shapes,shapes.geometric,matrix,positioning}

\usepackage{xcolor}

\nolinenumbers
\hideLIPIcs

\newcommand{\Squares}{\textsf{Squares}}

\renewcommand{\alph}{\textsf{Alph}}

\usepackage{hyperref}

\newcommand{\defdsproblem}[3]{
  \vspace{2mm}
  \noindent
  \textbf{#1}\\
  {\bf{Input:}} #2  \\
  {\bf{Operation:}} #3
  \vspace{2mm}
}

\newcommand{\E}{\mathcal{E}}
\newcommand{\EE}{\mathbf{E}}

\title{Fast Computation of \texorpdfstring{$k$}{k}-Runs, Parameterized Squares, and Other Generalised Squares}

\begin{CCSXML}
<ccs2012>
<concept>
<concept_id>10003752.10003809.10010031.10010032</concept_id>
<concept_desc>Theory of computation~Pattern matching</concept_desc>
<concept_significance>500</concept_significance>
</concept>
</ccs2012>
\end{CCSXML}
\ccsdesc[500]{Theory of computation~Pattern matching}

\authorrunning{Y.\ Nakashima, J.\ Radoszewski, and T.\ Waleń}

\keywords{string algorithm; $k$-mismatch square; parameterized square; order-preserving square; maximum gapped repeat}

\def\dd{\mathinner{.\,.}}
\newcommand{\floor}[1]{\left\lfloor #1 \right\rfloor}
\newcommand{\ceil}[1]{\left\lceil #1 \right\rceil}
\newcommand{\Oh}{\mathcal{O}}

\newcommand{\LCP}{\textsf{LCP}\xspace}

\newcommand{\LPF}{\textsf{LPF}\xspace}

\newcommand{\TR}{\overrightarrow{T}}
\newcommand{\TL}{\overleftarrow{T}}
\renewcommand{\P}{\mathcal{P}}
\newcommand{\PR}{\overrightarrow{\P}}
\newcommand{\PL}{\overleftarrow{\P}}
\newcommand{\R}{\mathcal{R}}
\renewcommand{\output}{\mathsf{output}}

\author{Yuto Nakashima}{Department of Informatics, Kyushu University, Fukuoka, Japan}{nakashima.yuto.003@m.kyushu-u.ac.jp}{https://orcid.org/0000-0001-6269-9353}{JSPS KAKENHI Grant Number JP25K00136.}
\author{Jakub Radoszewski}{Institute of Informatics, University of Warsaw, Poland}{jrad@mimuw.edu.pl}{https://orcid.org/0000-0002-0067-6401}{Supported by the Polish National Science Center, grant no.\ 2022/46/E/{\allowbreak}ST6/\allowbreak00463.}
\author{Tomasz Wale\'n}{Institute of Informatics, University of Warsaw, Poland}{walen@mimuw.edu.pl}{https://orcid.org/0000-0002-7369-3309}{}

\EventEditors{Anne Benoit, Haim Kaplan, Sebastian Wild, and Grzegorz Herman}
\EventNoEds{4}
\EventLongTitle{33rd Annual European Symposium on Algorithms (ESA 2025)}
\EventShortTitle{ESA 2025}
\EventAcronym{ESA}
\EventYear{2025}
\EventDate{September 15--17, 2025}
\EventLocation{Warsaw, Poland}
\EventLogo{}
\SeriesVolume{351}
\ArticleNo{6}

\Copyright{Yuto Nakashima, Jakub Radoszewski, and Tomasz Waleń}

\begin{document}

\maketitle

\begin{abstract}
A $k$-mismatch square is a string of the form $XY$ where $X$ and $Y$ are two equal-length strings that have at most $k$ mismatches.
Kolpakov and Kucherov [\emph{Theor.\ Comput.\ Sci.}, 2003] defined two notions of $k$-mismatch repeats, called $k$-repetitions and $k$-runs, each representing a sequence of consecutive $k$-mismatch squares of equal length.
They proposed algorithms for computing $k$-repetitions and $k$-runs working in $\Oh(nk\log k+\output)$ time for a string of length $n$ over an integer alphabet, where $\output$ is the number of the reported repeats. We show that $\output=\Oh(nk \log k)$, both in case of $k$-repetitions and $k$-runs, which implies that the complexity of their algorithms is actually $\Oh(nk \log k)$.
We apply this result to computing parameterized squares.

A parameterized square is a string of the form $XY$ such that $X$ and $Y$ parameterized-match, i.e., there exists a bijection $f$ on the alphabet such that $f(X)=Y$.
Two parameterized squares $XY$ and $X'Y'$ are equivalent if they parameterized match.
Recently Hamai et al.\ [SPIRE 2024] showed that a string of length $n$ over an alphabet of size $\sigma$ contains less than $n\sigma$ non-equivalent parameterized squares, improving an earlier bound by Kociumaka et al.\ [\emph{Theor.\ Comput.\ Sci.}, 2016]. We apply our bound for $k$-mismatch repeats to propose an algorithm that reports all non-equivalent parameterized squares in $\Oh(n\sigma \log \sigma)$ time. We also show that the number of non-equivalent parameterized squares can be computed in $\Oh(n \log n)$ time. This last algorithm applies to squares under any substring compatible equivalence relation and also to counting squares that are distinct as strings. In particular, this improves upon the $\Oh(n\sigma)$-time algorithm of Gawrychowski et al.\ [CPM 2023] for counting order-preserving squares that are distinct as strings if $\sigma=\omega(\log n)$.
\end{abstract}

\section{Introduction}
A string of the form $XX$, for any string $X$, is called a \emph{square} (or a \emph{tandem repeat}). 
Squares are a classic notion in combinatorics on words (see, e.g., the early works of Thue~\cite{Thue} on avoidability of square substrings), text algorithms (starting from an algorithm for computing square substrings by Main and Lorentz~\cite{DBLP:journals/jal/MainL84}), and bioinformatics (see Gusfield's book~\cite{DBLP:books/cu/Gusfield1997}).
A string of length $n$ contains at most $n$ distinct square substrings~\cite{brlek2022numbersquaresfiniteword} and all of them can be computed in $\Oh(n)$ time~\cite{DBLP:conf/cpm/BannaiIK17,DBLP:conf/spire/Charalampopoulos20,DBLP:journals/tcs/CrochemoreIKRRW14,DBLP:journals/jcss/GusfieldS04} or even $\Oh(n/\log_\sigma n)$ time~\cite{MFCS25} assuming that the string is over an integer alphabet of size $\sigma$. A maximal sequence of consecutive squares in a string is called a run (or a generalised run; cf.\ \cref{sec:KK++}); see~\cite{DBLP:conf/focs/KolpakovK99}. A string of length $n$ contains at most $n$ runs~\cite{DBLP:journals/siamcomp/BannaiIINTT17} and they can all be computed in $\Oh(n)$ time~\cite{DBLP:journals/siamcomp/BannaiIINTT17,DBLP:conf/icalp/Ellert021}. Our work is devoted to efficient algorithms for computing known generalizations of squares and runs: $k$-mismatch squares represented as $k$-runs or $k$-repetitions, parameterized squares, and generalised squares that include parameterized squares, order-preserving squares, and Cartesian-tree squares.

We assume that positions in a string $X$ are numbered from 1 to $|X|$, so that $X[i]$ is the $i$th character of $X$. For integers $i,j$ such that $1 \le i \le j \le |X|$, by $X[i \dd j]=X[i \dd j+1)$ we denote the substring composed of characters $X[i],X[i+1],\ldots,X[j]$. We use similar notation for integer intervals: $[i \dd j]=[i \dd j+1)=\{i,i+1,\ldots,j\}$.

We say that a length-$n$ string is over an integer alphabet if its letters belong to $[0 \dd n^{\Oh(1)}]$.
We use the word-RAM model of computation.

\subsection{\texorpdfstring{$k$}{k}-Mismatch Squares and \texorpdfstring{$k$}{k}-Runs}
For two equal-length strings $X$ and $Y$, their Hamming distance is defined as $d_H(X,Y)=|\{i \in [1 \dd |X|]\,:\, X[i] \ne Y[i]\}|$. A \emph{$k$-mismatch square} (also known under the name of $k$-mismatch tandem repeat) is a string $XY$ such that $|X|=|Y|$ and $d_H(X,Y) \le k$.
Let $T$ be a string of length $n$.
A \emph{$k$-run of period $\ell$} in $T$ (cf.~\cite{Kucherov2014}) is a maximal substring $T[a \dd b)$ such that $T[i \dd i+2\ell)$ is a $k$-mismatch square for every $i \in [a \dd b-2\ell]$; see \cref{fig:kruns}. Maximality means that the $k$-run is extended to the right and left as much as possible provided that the definition is still satisfied.

\begin{figure}[htpb]
\centering
\begin{tikzpicture}[xscale=0.3]
    \foreach \x/\a/\c in {
    1/a/black,
    2/b/black,
    3/a/black,
    4/c/black,
    5/a/black,
    6/a/black,
    7/b/black,
    8/a/black,
    9/a/black,
    10/b/black,
    11/a/black,
    12/b/black,
    13/a/black,
    14/a/black,
    15/c/black,
    16/a/black,
    17/a/black,
    18/b/black,
    19/c/black,
    20/b/black,
    21/a/black,
    22/a/black,
    23/b/black,
    24/a/black,
    25/c/black,
    26/a/black}{
        \draw (\x,0) node[above] {\textcolor{\c}{\texttt{\a}}};
    }
    \begin{scope}[yshift=-0.9cm]
    \foreach \x/\a/\c in {
    1/a/black,
    2/b/black,
    3/a/black,
    4/c/red,
    5/a/black,
    6/a/black,
    7/b/red,
    8/a/black,
    9/a/black,
    10/b/black,
    11/a/black,
    12/b/red,
    13/a/black,
    14/a/black,
    15/c/red,
    16/a/black}{
        \draw (\x,0) node[above] {\textcolor{\c}{\texttt{\a}}};
    }
    \draw (0.5,0.2) -- (0.5,0) -- (16.5,0) -- (16.5,0.2);
    \draw (8.5,0) -- (8.5,0.2);
    \end{scope}

    \begin{scope}[yshift=-1.4cm]
    \foreach \x/\a/\c in {
    2/b/black,
    3/a/black,
    4/c/red,
    5/a/black,
    6/a/black,
    7/b/red,
    8/a/black,
    9/a/black,
    10/b/black,
    11/a/black,
    12/b/red,
    13/a/black,
    14/a/black,
    15/c/red,
    16/a/black,
    17/a/black}{
        \draw (\x,0) node[above] {\textcolor{\c}{\texttt{\a}}};
    }
    \draw[xshift=1cm] (0.5,0.2) -- (0.5,0) -- (16.5,0) -- (16.5,0.2);
    \draw[xshift=1cm] (8.5,0) -- (8.5,0.2);
    \end{scope}

    \begin{scope}[yshift=-1.9cm]
    \foreach \x/\a/\c in {
    3/a/black,
    4/c/red,
    5/a/black,
    6/a/black,
    7/b/red,
    8/a/black,
    9/a/black,
    10/b/black,
    11/a/black,
    12/b/red,
    13/a/black,
    14/a/black,
    15/c/red,
    16/a/black,
    17/a/black,
    18/b/black}{
        \draw (\x,0) node[above] {\textcolor{\c}{\texttt{\a}}};
    }
    \draw[xshift=2cm] (0.5,0.2) -- (0.5,0) -- (16.5,0) -- (16.5,0.2);
    \draw[xshift=2cm] (8.5,0) -- (8.5,0.2);
    \end{scope}

    \begin{scope}[yshift=0.5cm]
    \foreach \x/\a/\c in {
    5/a/black,
    6/a/black,
    7/b/red,
    8/a/black,
    9/a/black,
    10/b/black,
    11/a/red,
    12/b/black,
    13/a/black,
    14/a/black,
    15/c/red,
    16/a/black,
    17/a/black,
    18/b/black,
    19/c/red,
    20/b/black}{
        \draw (\x,0) node[above] {\textcolor{\c}{\texttt{\a}}};
    }
    \draw[xshift=4cm] (0.5,0.2) -- (0.5,0) -- (16.5,0) -- (16.5,0.2);
    \draw[xshift=4cm] (8.5,0) -- (8.5,0.2);
    \end{scope}

    \begin{scope}[yshift=1cm]
    \foreach \x/\a/\c in {
    6/a/black,
    7/b/red,
    8/a/black,
    9/a/black,
    10/b/black,
    11/a/red,
    12/b/black,
    13/a/black,
    14/a/black,
    15/c/red,
    16/a/black,
    17/a/black,
    18/b/black,
    19/c/red,
    20/b/black,
    21/a/black}{
        \draw (\x,0) node[above] {\textcolor{\c}{\texttt{\a}}};
    }
    \draw[xshift=5cm] (0.5,0.2) -- (0.5,0) -- (16.5,0) -- (16.5,0.2);
    \draw[xshift=5cm] (8.5,0) -- (8.5,0.2);
    \end{scope}
    
    \begin{scope}[yshift=1.5cm]
    \foreach \x/\a/\c in {
    7/b/red,
    8/a/black,
    9/a/black,
    10/b/black,
    11/a/red,
    12/b/black,
    13/a/black,
    14/a/black,
    15/c/red,
    16/a/black,
    17/a/black,
    18/b/black,
    19/c/red,
    20/b/black,
    21/a/black,
    22/a/black}{
        \draw (\x,0) node[above] {\textcolor{\c}{\texttt{\a}}};
    }
    \draw[xshift=6cm] (0.5,0.2) -- (0.5,0) -- (16.5,0) -- (16.5,0.2);
    \draw[xshift=6cm] (8.5,0) -- (8.5,0.2);
    \end{scope}

    \begin{scope}[yshift=2cm]
    \foreach \x/\a/\c in {
    8/a/black,
    9/a/black,
    10/b/black,
    11/a/red,
    12/b/black,
    13/a/black,
    14/a/black,
    15/c/red,
    16/a/black,
    17/a/black,
    18/b/black,
    19/c/red,
    20/b/black,
    21/a/black,
    22/a/black,
    23/b/red}{
        \draw (\x,0) node[above] {\textcolor{\c}{\texttt{\a}}};
    }
    \draw[xshift=7cm] (0.5,0.2) -- (0.5,0) -- (16.5,0) -- (16.5,0.2);
    \draw[xshift=7cm] (8.5,0) -- (8.5,0.2);
    \end{scope}

    \begin{scope}[yshift=2.5cm]
    \foreach \x/\a/\c in {
    9/a/black,
    10/b/black,
    11/a/red,
    12/b/black,
    13/a/black,
    14/a/black,
    15/c/red,
    16/a/black,
    17/a/black,
    18/b/black,
    19/c/red,
    20/b/black,
    21/a/black,
    22/a/black,
    23/b/red,
    24/a/black}{
        \draw (\x,0) node[above] {\textcolor{\c}{\texttt{\a}}};
    }
    \draw[xshift=8cm] (0.5,0.2) -- (0.5,0) -- (16.5,0) -- (16.5,0.2);
    \draw[xshift=8cm] (8.5,0) -- (8.5,0.2);
    \end{scope}
    \foreach \x in {1,...,26}{
    \draw (\x,-0.3) node[above] {\tiny \x};
    }

\end{tikzpicture}
\caption{String $T[1 \dd 26]$ contains two 2-runs with period 8, $T[1\dd 18]$ and $T[5 \dd 24]$. The two 2-runs represent three 2-mismatch squares (below) and five 2-mismatch squares (above), respectively; mismatches are shown in red. Strings $T[x \dd x+16)$ for $x\in \{4,10,11\}$ are not 2-mismatch squares.}
\label{fig:kruns}
\end{figure}

Landau, Schmidt and Sokol~\cite{DBLP:journals/jcb/LandauSS01} showed an algorithm for computing $k$-runs in a string of length $n$ that works in $\Oh(nk\log(n/k))$ time; hence, the total number of $k$-runs reported is $\Oh(nk\log(n/k))$. Kolpakov and Kucherov~\cite{DBLP:conf/esa/KolpakovK01,DBLP:journals/tcs/KolpakovK03} showed that all $k$-runs (called there runs of $k$-mismatch tandem repeats) in a string of length $n$ can be computed in $\Oh(nk\log k+\output)$ time, where $\output$ is the number of $k$-runs. 
Many other algorithms for computing approximate tandem repeats under various metrics, in the context of computational biology and with the aid of statistical methods and heuristics, were proposed~\cite{DBLP:conf/recomb/Benson98,Benson,ex,doi:10.1089/cmb.2007.0018,DBLP:conf/swat/KaplanPS06,mreps,10.1093/bioinformatics/bth311,e1,10.1093/bioinformatics/btq209,Reneker,404256,DBLP:journals/bioinformatics/RivalsDDDDHO97,DBLP:journals/jcb/SagotM98,Sobreira,10.1093/bioinformatics/btl309,SOKOL2014103,10.1145/974614.974644,4472917}.
In \cref{sec:KK++} we show the following theorem.

\begin{theorem}\label{thm:KK++}
A string of length $n$ contains $\Oh(nk\log k)$ $k$-runs.
\end{theorem}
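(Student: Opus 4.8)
The plan is to count, for each period $\ell$ separately, the number of $k$-runs of that period and then sum over $\ell$. Fix $\ell$ and set $m_\ell(j)=1$ iff $T[j]\ne T[j+\ell]$, so that $T[i\dd i+2\ell)$ is a $k$-mismatch square exactly when the length-$\ell$ window of $m_\ell$ starting at $i$ contains at most $k$ ones. A $k$-run of period $\ell$ is then a maximal block of consecutive window-start positions whose window has at most $k$ ones, so the number of such $k$-runs equals the number of maximal \emph{good} blocks of this sliding window. Since the window sum changes by at most one as $i$ increases by one, this number is, up to an additive constant (summing to $\Oh(n)$ over all periods), the number of \emph{down-crossings}: positions $a$ where the window at $a-1$ has $k+1$ ones and the window at $a$ has $k$ ones. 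A short computation shows that at such an $a$ one has a mismatch at $p:=a-1$ (i.e.\ $T[p]\ne T[p+\ell]$), a match at $p+\ell$, and \emph{exactly} $k+1$ mismatches in $[p\dd p+\ell)$ on the diagonal $\ell$; equivalently, $T[p\dd p+2\ell)$ is a tight $(k+1)$-mismatch square whose leftmost mismatch sits at its first position. So it suffices to bound the total number of such tight configurations over all $p$ and $\ell$.

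First I would dispose of short periods. If $\ell\le k$ then every length-$\ell$ window trivially contains at most $k$ ones, so there is at most one $k$-run of period $\ell$, and these contribute only $\Oh(k)$ in total; hence I may assume $\ell>k$, which also forces the $k+1$ witnessing mismatches to span at least $k$ positions.

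For the main count I would fix the diagonal $\ell$ and group the down-crossings by the block $q_0<\dots<q_k$ of $k+1$ consecutive diagonal-$\ell$ mismatches they witness (with $q_0=p$). A down-crossing requires this block to fit in a length-$\ell$ window while the enlarged block $q_{-1}<q_0<\dots<q_k$ does not, so each down-crossing is charged to one block of $k+1$ consecutive mismatches that is \emph{tight} for the current $\ell$. The naive bound — one charge per tight block per diagonal — only gives $\Oh(\sum_\ell \#\{\text{diagonal-}\ell\text{ mismatches}\})$, which is $\Theta(n^2)$ in the worst case, so the whole difficulty is to argue that a fixed text position can be the leftmost mismatch $p$ of a tight configuration for only $\Oh(k\log k)$ periods on average. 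I would obtain this by a recursion on the budget: a length-$\ell$ window with $k+1$ diagonal-$\ell$ mismatches has at most $\lceil(k+1)/2\rceil$ of them in one of its two halves, which turns the configuration into an approximate ($\lceil k/2\rceil$-mismatch) \emph{gapped} repeat on diagonal $\ell$; bounding these sub-objects, adding an $\Oh(nk)$ term to account for the splitting and for boundary effects, and unrolling the resulting recurrence over its $\Oh(\log k)$ levels yields the claimed $\Oh(nk\log k)$.

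The hard part is exactly this last step: because the offset and the window length are both equal to $\ell$ (the defining feature of a \emph{square}, as opposed to an arbitrary gapped repeat), two tight configurations with close periods cannot be treated independently, and one must invoke the periodicity (Fine--Wilf) and three-squares lemmas to show that overlapping approximately periodic fragments are few. Controlling this interaction so that the per-position count telescopes to $\Oh(k\log k)$ — rather than the $\Oh(k\log(n/k))$ that a plain divide-and-conquer over the text (as in Landau--Schmidt--Sokol) would give — is where the real work lies, and the $\log k$ in place of $\log(n/k)$ is precisely the signature of recursing on the mismatch budget $k$ instead of on the text length $n$.
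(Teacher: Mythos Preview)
Your setup is sound: the sliding-window reformulation, the reduction to counting down-crossings, and the observation that each down-crossing yields a block of exactly $k+1$ consecutive diagonal-$\ell$ mismatches fitting in a length-$\ell$ window are all correct and useful. The difficulty, as you yourself say, is the final counting step, and there the proposal does not amount to a proof.

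The recursion you sketch does not close. After one halving you have a $\lceil k/2\rceil$-mismatch match between two length-$\ell/2$ segments at offset $\ell$, i.e.\ an approximate \emph{gapped} repeat with gap ratio $2$; these are not objects of the same type as the tight squares you started from, so you cannot simply re-apply the down-crossing count. To continue you would need, at level $t$, a bound on $\lceil k/2^t\rceil$-mismatch gapped repeats of gap ratio $2^t$, and you would need a charging argument showing that each original down-crossing maps to at most $\Oh(1)$ objects at the next level (the ``adding an $\Oh(nk)$ term'' clause is asserted, not derived). Invoking Fine--Wilf and the Three-Squares Lemma is natural for \emph{exact} periodicities, but here every segment is only $k/2^t$-approximately periodic, and these lemmas do not transfer without work; you have not indicated how to make them apply. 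As written, the plan stops precisely at the point where the $\log k$ (as opposed to $\log(n/k)$) must be earned.

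The paper proceeds differently and avoids the recursion altogether. It relates each (uniform) $k$-run of period $\ell$ directly to the \emph{maximal exact} gapped repeats on diagonal $\ell$ whose left arm meets the $k$-run; each MGR induces at most $2k+1$ uniform $k$-runs, and conversely each $k$-run not covered by a generalised run is induced by a collection of $(2k+2)$-MGRs of total weight (arm length over period) at least $1/4$. The known $\Oh(n\alpha)$ bound on $\alpha$-MGRs then gives a weighted double-counting inequality whose optimum, over all feasible distributions of MGRs by gap ratio, is a harmonic sum $\sum_{\alpha\le 2k+2} \Oh(nk)/\alpha = \Oh(nk\log k)$. So the $\log k$ arises from a harmonic sum over gap ratios, not from a halving recursion on $k$; the key external input is the tight MGR bound of Gawrychowski et al.\ and I--K\"oppl, which replaces the periodicity lemmas you were reaching for.
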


\cref{thm:KK++} provides the first $\Oh(n)$ upper bound on the number of $k$-runs for $k=\Oh(1)$ and implies that the algorithm of Kolpakov and Kucherov computing $k$-runs actually works in $\Oh(nk\log k)$ time. Actually, we show a stronger condition that a string $T$ of length $n$ contains $\Oh(nk\log k)$ \emph{uniform} $k$-runs. Intuitively, a uniform $k$-run is a maximal sequence of consecutive $k$-mismatch squares of the same length in which the mismatches of all squares are at the same positions. (See~\cref{sec:KK++} for a formal definition.) Our bound on the number of uniform $k$-runs implies the bound for $k$-runs as well as an $\Oh(nk\log k)$ upper bound on the number of $k$-repetitions as defined in~\cite{DBLP:conf/esa/KolpakovK01,DBLP:journals/tcs/KolpakovK03} (called there $k$-mismatch globally defined repetitions).

To prove \cref{thm:KK++}, we explore a combinatorial relation between $k$-runs and maximum gapped repeats~\cite{DBLP:journals/jda/KolpakovPPK17} and apply the optimal $\Oh(n\alpha)$ bound on the number of maximal $\alpha$-gapped repeats in a length-$n$ string~\cite{DBLP:journals/mst/GawrychowskiIIK18,DBLP:journals/tcs/IK19}.

\subsection{Parameterized Squares}
For a string $X$, by $\alph(X)$ we denote the set of characters of $X$.
Two strings $X$, $Y$ \emph{parameterized match} if $|X|=|Y|$ and there is a bijection $f : \alph(X) \mapsto \alph(Y)$ such that $f(X)=Y$ (i.e., $|X|=|Y|$ and $f(X[i])=Y[i]$ for all $i \in [1 \dd |X|]$). A \emph{parameterized square} (\emph{p-square}, in short) is a string $XY$ such that $X$ parameterized matches $Y$ (see~\cref{fig:3gensq}). Two p-squares $XY$ and $X'Y'$ are called \emph{equivalent} if they parameterized match.

Parameterized matching was introduced by Baker~\cite{DBLP:journals/jcss/Baker96} motivated by applications in code refactoring and plagiarism detection. The notion of p-squares was introduced by Kociumaka, Radoszewski, Rytter, and Waleń~\cite{DBLP:journals/tcs/KociumakaRRW16} who showed that a string of length $n$ over alphabet of size $\sigma$ contains at most $2\sigma!n$ non-equivalent p-squares. They also considered avoidability of parameterized cubes. The bounds from~\cite{DBLP:journals/tcs/KociumakaRRW16} were recently improved by Hamai, Taketsugu, Nakashima, Inenaga, and Bannai~\cite{DBLP:conf/spire/HamaiTNIB24}; they showed that a length-$n$ string over alphabet of size $\sigma$ contains less than $\sigma n$ non-equivalent p-squares. This bound automatically implies that such a string contains  at most $\sigma! \cdot \sigma \cdot n$ p-squares that are distinct as strings (improving the $2(\sigma!)^2 n$ upper bound from \cite{DBLP:journals/tcs/KociumakaRRW16}). We show that all non-equivalent p-squares can be computed efficiently; our approach also extends to p-squares that are distinct as strings. See \cref{sec:2}.

\begin{theorem}\label{thm:main2}
All non-equivalent p-square substrings in a string of length $n$ over alphabet of size $\sigma$ can be computed in $\Oh(n \sigma \log \sigma)$ time.

All p-square substrings that are distinct as strings can be computed in $\Oh(n \sigma \log \sigma+\output)$ time, where $\output$ is the number of p-squares reported.
\end{theorem}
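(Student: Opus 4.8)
The plan is to reduce the computation of p-squares to the computation of $k$-mismatch squares in an auxiliary string for $k=\Oh(\sigma)$, and then to invoke the machinery behind \cref{thm:KK++} in its uniform-run refinement. First I would compute the predecessor encoding $P=\mathsf{prev}(T)$, where $P[j]$ is the distance from $j$ to the previous occurrence of the letter $T[j]$ in $T$ (and $P[j]=0$ if there is none); this takes $\Oh(n)$ time. The classical characterisation of parameterized matching is that $X$ and $Y$ p-match iff $\mathsf{prev}(X)=\mathsf{prev}(Y)$, with $\mathsf{prev}$ taken inside each string. The encoding of a window $T[i \dd i+\ell)$ differs from the restriction $P[i \dd i+\ell)$ only at the first occurrence of each distinct letter inside the window, of which there are at most $\sigma$. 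Hence, if $T[i \dd i+2\ell)$ is a p-square then its two halves agree after this local correction, so $\Ham(P[i \dd i+\ell),P[i+\ell \dd i+2\ell)) \le 2\sigma$; that is, every p-square of $T$ is a $2\sigma$-mismatch square of $P$. I would also record the exact converse test: a $2\sigma$-mismatch square of $P$ is a genuine p-square iff, at each position where the two halves of $P$ disagree, both corresponding entries are $0$ or point outside their window, so that both local values collapse to $0$.

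Next I would run the Kolpakov--Kucherov algorithm with $k=2\sigma$ on $P$ to produce all uniform $2\sigma$-runs; by the refinement of \cref{thm:KK++} there are $\Oh(n\sigma\log\sigma)$ of them and they are computed within the same time, the $\log\sigma$ factor arising for free from the $k\log k$ term with $k=\Theta(\sigma)$. Every p-square of $T$ lies in one such run. The decisive property of a uniform run is that all of its $k$-mismatch squares share the same relative set of mismatch positions $j_1,\dots,j_m$ with $m\le 2\sigma$. Combined with the converse test above, being a p-square becomes, for a starting position $i$ inside the run, the conjunction of $m$ conditions of the form ``$P[i+j_t-1]$ and $P[i+\ell+j_t-1]$ are $0$ or exceed $j_t-1$'', since at every non-mismatch position the two local encodings coincide automatically. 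I would evaluate these conditions over the run using predecessor and \LCE queries to carve out exactly the genuine p-squares.

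Finally I would remove repetitions. For the \emph{distinct as strings} variant, two p-squares coincide iff they have equal length and equal contents, which I test with ordinary \LCE queries on $T$; reporting each distinct representative contributes the additive $\output$ term. For the \emph{non-equivalent} variant, two p-squares are equivalent iff they have equal length and equal $\mathsf{prev}$-encoding, tested with parameterized-\LCE queries supported by a parameterized suffix structure on $T$. Mimicking the classical charging of distinct squares to runs, I would argue that each uniform run supplies only $\Oh(1)$ previously unseen equivalence classes, so that the total number of reported classes respects the bound of less than $\sigma n$ of Hamai et al.~\cite{DBLP:conf/spire/HamaiTNIB24} and, crucially, the whole procedure stays within the $\Oh(n\sigma\log\sigma)$ budget inherited from the $k=2\sigma$ run computation.

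I expect the hard part to be the amortised per-run accounting in the non-equivalent case: one must show that scanning a uniform run, filtering out the genuine p-squares, and performing the equivalence deduplication can all be charged to the $\Oh(n\sigma\log\sigma)$ uniform runs (and to the $<\sigma n$ classes) rather than to the potentially far larger number of individual $2\sigma$-mismatch squares. Keeping the converse p-square test cheap enough per run, and establishing that a single uniform run introduces only a constant number of new equivalence classes, is where the main combinatorial work lies.
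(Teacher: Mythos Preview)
Your high-level reduction matches the paper's: encode $T$, observe that every p-square becomes an $\Oh(\sigma)$-mismatch square of the encoding, compute uniform $\Oh(\sigma)$-runs, and harvest p-squares from them. Your converse test via local $\mathit{prev}$ values is also correct. But the two steps you yourself flag as hard are exactly where the paper supplies ideas your outline lacks. For filtering genuine p-squares inside a uniform run, your per-mismatch test (even after a monotonicity you did not state, which makes the p-square starting positions a single sub-interval of the run) costs $\Theta(\sigma)$ per run to locate the threshold, or $\Theta(\log n)$ by binary search with a p-$\LCP$ oracle---either way the total exceeds $\Oh(n\sigma\log\sigma)$. The paper instead computes uniform $\sigma$-runs in \emph{two} encodings, a forward $\TR$ and a backward $\TL$, and intersects their interval families; the key combinatorial fact (\cref{lem:comb1} and Claim~2 inside \cref{lem:red}) is that on each intersection interval the p-square predicate is \emph{constant}---all or nothing---so a single $\Oh(1)$-time p-$\LCP$ check per interval suffices. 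Your one-sided encoding does not deliver this.

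For deduplication, your conjecture that a single uniform run introduces $\Oh(1)$ new p-equivalence classes is unsupported; there is no analogue of the ``few primitively-rooted squares per run'' argument here, and the paper does not attempt any such claim. Instead, once the p-square intervals are in hand, it computes the $\LPF^\approx$ array (via the $\approx$-suffix tree for the encoding $\EE$, in $\Oh(n\sigma)$ time by \cref{lem:EE_enc2} and \cref{cor:CH}) together with an RmQ structure, and for each interval recursively extracts the positions $i$ with $\LPF^\approx[i]<2\ell$---precisely the leftmost occurrences of their equivalence class---in $\Oh(r+\mathsf{output})$ total time, the standard Bannai--Inenaga--K\"oppl trick. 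The output is below $\sigma n$ by Hamai et al., so everything fits in $\Oh(n\sigma\log\sigma)$. For the distinct-as-strings variant one simply swaps in the ordinary $\LPF$ array, picking up the additive $\mathsf{output}$ term.
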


By the above discussion, all p-square substrings that are distinct as strings can be computed in $\Oh(n (\sigma+1)!)$ time.
The key ingredient in the algorithm behind \cref{thm:main2} is a relation between p-squares and $\sigma$-mismatch squares (and uniform $\sigma$-runs). 

\subsection{Generalised Squares under Substring Consistent Equivalence Relations}
We then show an alternative algorithm for counting p-squares whose complexity does not depend on the alphabet size $\sigma$. The algorithm is stated for squares under any \emph{substring consistent equivalence relation} (\emph{SCER} in short). An SCER is a relation $\approx$ on strings such that $X \approx Y$ implies that (1) $|X|=|Y|$ and (2) $X[i\dd j] \approx Y[i\dd j]$ for all $1 \le i \le j \le |X|$; see~\cite{DBLP:journals/tcs/MatsuokaAIBT16}. Known examples of SCERs include parameterized matching, order-preserving matching \cite{DBLP:journals/tcs/KimEFHIPPT14,DBLP:journals/ipl/KubicaKRRW13}, Cartesian tree matching \cite{DBLP:journals/tcs/ParkBALP20}, and palindrome pattern matching~\cite{DBLP:journals/tcs/IIT13}. Two strings $X$ and $Y$ \emph{order-preserving match} if $|X|=|Y|$, sets $\alph(X)$ and $\alph(Y)$ are totally ordered, and there is a bijection $f : \alph(X) \mapsto \alph(Y)$ that is increasing (i.e., if $x<y$ then $f(x)<f(y)$) such that $f(X)=Y$. Strings $X$, $Y$ \emph{Cartesian-tree match} if they have the same shape of a Cartesian tree (cf.\ \cite{10.1145/358841.358852}). Finally, strings $X$ and $Y$ \emph{palindrome match} if for all $1 \le i \le j \le |X|$, $X[i \dd j]$ is a palindrome if and only if $Y[i \dd j]$ is a palindrome.

An \emph{$\approx$-square} is a string $XY$ such that $X \approx Y$ under SCER $\approx$. Thus p-squares and order-preserving squares~\cite{DBLP:journals/tcs/KociumakaRRW16} (op-squares, in short) as well as Cartesian-tree squares~\cite{DBLP:journals/tcs/ParkBALP20} (CT-squares) and squares in the sense of palindrome matching (palindrome-squares) are $\approx$-squares for the respective SCERs $\approx$. See \cref{fig:3gensq} for an example.

\begin{figure}[htpb]
\begin{center}
\begin{tikzpicture}
    \foreach \c [count=\x] in {1,3,2,2,4,3,4,4,1,2,3,2,3,,,} {
        \node[above] (c\x) at (\x*0.35,0) {\c};
    }
    \node[above] at (-0.2,0) {$S=$};

    \foreach \a/\b/\c/\label in {1/4/7/$X$, 8/11/14/$Z$} {
      \draw ($(c\a.south)+(-0.175cm,0)$)--+(0,-0.1cm);
      \draw ($(c\b.south)+(-0.175cm,0)$)--+(0,-0.1cm);
      \draw ($(c\c.south)+(-0.175cm,0)$)--+(0,-0.1cm);
      \draw ($(c\a.south)+(-0.175cm,-0.1cm)$)--($(c\c.south)+(-0.175cm,-0.1cm)$);
      \node[below] at ($(c\b.south)+(-0.175cm,-0.1cm)$) {\label};
    }
    \foreach \a/\b/\c/\label in {3/7/11/$Y$} {
      \draw ($(c\a.north)+(-0.175cm,0)$)--+(0,0.1cm);
      \draw ($(c\b.north)+(-0.175cm,0)$)--+(0,0.1cm);
      \draw ($(c\c.north)+(-0.175cm,0)$)--+(0,0.1cm);
      \draw ($(c\a.north)+(-0.175cm,0.1cm)$)--($(c\c.north)+(-0.175cm,0.1cm)$);
      \node[above] at ($(c\b.north)+(-0.175cm,0.1cm)$) {\label};
    }
\end{tikzpicture}
\end{center}
\caption{For string $S=1322434412323$,
$X=132\ 243$ is an op-square (hence, automatically, a p-square and a CT-square),
$Y=2243\ 4412$ is a p-square,
and $Z=412\ 323$ is a CT-square. Among the three substrings, only $Z$ is \emph{not} a palindrome-square, as its second half is a palindrome whereas its first half is not.}
\label{fig:3gensq}
\end{figure}

In a prefix consistent equivalence relation (PCER), condition (2) of an SCER only needs to hold for $i=1$. An SCER is a PCER. We say that $\E:\Sigma^* \rightarrow \mathbb{Z}$ is an \emph{encoding function} if
\begin{equation}\label{eq:E}
X \approx Y \quad\Leftrightarrow\quad (\,|X|=|Y|\ \land\ \forall_{i \in [1 \dd |X|]}\,\E(X[1 \dd i])=\E(Y[1 \dd i])\,).
\end{equation}

\begin{observation}\label{obs:general}
For every PCER $\approx$ on strings over integer alphabet there exists an encoding function.
\end{observation}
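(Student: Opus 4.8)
The plan is to let the encoding record, for each string, a canonical integer label of its $\approx$-equivalence class, and then check that agreement of all prefix-labels characterises $\approx$ exactly as in \eqref{eq:E}. Since the alphabet is integer, $\Sigma^*$ is countable, so I would fix any bijection $\beta : \Sigma^* \to \mathbb{N}$ (which exists precisely because $\Sigma^*$ is countable) and define
$\E(Z) = \min\{\beta(W) : W \approx Z\}$,
i.e.\ the $\beta$-index of the smallest representative of the class $[Z]_\approx$. The minimum exists because every nonempty subset of $\mathbb{N}$ has a least element, so $\E$ is a genuine map into $\mathbb{Z}$. By construction $\E$ is constant on each equivalence class, and distinct classes are disjoint and hence receive distinct least representatives, so $\E(X)=\E(Y)$ holds if and only if $[X]_\approx = [Y]_\approx$, that is, if and only if $X \approx Y$.

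Next I would establish both directions of \eqref{eq:E}. For the forward implication, assume $X \approx Y$. Then $|X|=|Y|$ by the length condition of a PCER, and for every $i \in [1 \dd |X|]$ the prefix-consistency property gives $X[1\dd i] \approx Y[1\dd i]$; since $\E$ depends only on the equivalence class, this yields $\E(X[1\dd i]) = \E(Y[1\dd i])$. For the converse, assume $|X|=|Y|$ and $\E(X[1\dd i]) = \E(Y[1\dd i])$ for all $i \in [1 \dd |X|]$. Specialising to $i = |X|$ gives $\E(X) = \E(Y)$, hence $[X]_\approx = [Y]_\approx$ and therefore $X \approx Y$. Note that the converse uses only the index at $i=|X|$, whereas prefix consistency of the PCER is exactly what makes all the shorter prefix-indices agree in the forward direction.

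I expect the only genuinely essential point to be that $\E$ must output a single integer per argument, which is where the integer-alphabet hypothesis enters: it guarantees that $\Sigma^*$, and hence the quotient $\Sigma^*/{\approx}$, is countable, so an integer labelling actually exists. Everything else is routine bookkeeping, and since the statement asserts only existence (no efficiency requirement on $\E$), the class-index construction above suffices.
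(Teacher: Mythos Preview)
Your proof is correct and follows essentially the same approach as the paper: both pick an integer label for each $\approx$-class (the paper enumerates classes among strings of a fixed length, you take the minimum index in a global enumeration of $\Sigma^*$), then verify \eqref{eq:E} using prefix consistency for the forward direction and the single index $i=|X|$ for the converse. The only cosmetic difference is the choice of labelling scheme; the logical structure is identical.
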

\begin{proof}
Let $\E(X)$ be defined as the number of the equivalence class under $\approx$ of $X$ among all strings of length $|X|$. Let us verify that $\E$ satisfies equivalence~\eqref{eq:E}. $(\Rightarrow)$ If $X \approx Y$, then, by definition, $|X|=|Y|$ and $X[1 \dd i] \approx Y[1 \dd i]$ for all $i \in [1 \dd |X|]$. Then indeed $\E(X[1 \dd i])=\E(Y[1 \dd i])$. $(\Leftarrow)$ Taking $i=|X|=|Y|$, we obtain $\E(X)=\E(Y)$. As $|X|=|Y|$, we have $X \approx Y$.
\end{proof}

The encoding function defined in the proof of \cref{obs:general} could be hard to compute efficiently for a particular PCER $\approx$. It is also stronger, as it satisfies $X \approx Y\ \Leftrightarrow\ (\,|X|=|Y|\ \land\ \E(X)=\E(Y))$. Luckily, for the aforementioned known SCERs efficient encoding functions are known, as shown in the following \cref{ex:particular}.

For an encoding function $\E$, let $\pi^\E(n)$, $\rho^\E(n)$ be integer sequences such that for a string $T$ of length $n$ over integer alphabet, after $\pi^\E(n)$ preprocessing time, $\E(T[i \dd j])$ for any $i,j$ can be computed in $\rho^\E(n)$ time. 

\begin{example}\label{ex:particular}
For parameterized matching, there exists a classic $\mathit{prev}$-encoding, see \cite{DBLP:journals/jcss/Baker96}, that is an encoding function:
\[\mathit{prev}(X)=\left.
  \begin{cases}
    |X|-i & X[i]=X[|X|]\text{ and }X[j]\ne X[|X|]\text{ for all }j \in [i+1 \dd |X|) \\
    0 & \text{otherwise}
  \end{cases}
  \right.\]
The $\mathit{prev}$-encodings of all prefixes of a string $T$ can be computed by sorting all pairs $(T[i],i)$. If $T$ is over an integer alphabet, this is performed in $\pi^{\mathit{prev}}(n)=\Oh(n)$ time. Then $\mathit{prev}(T[i \dd j])$ can be computed from $\mathit{prev}(T[1 \dd j])$ in $\rho^{\mathit{prev}}(n)=\Oh(1)$ time.

For order-preserving matching, one can use an encoding as pairs $(\alpha(X),\beta(X))$. Here
\[\alpha(X)\text{ is the largest }j<|X|\text{ such that }X[j]=\max\{X[k]\ :\ k \in [1\dd |X|),\ X[k] \le X[|X|]\},\]
and if there is no such $j$, then $\alpha(X)=0$. Similarly,
\[\beta(X)\text{ is the largest }j<|X|\text{ such that }X[j]=\min\{X[k]\ :\ k \in [1 \dd |X|),\ X[k]\ge X[|X|]\},\]
and $\beta(X)=0$ if no such $j$ exists; see~\cite{DBLP:journals/tcs/CrochemoreIKKLP16,DBLP:journals/tcs/KimEFHIPPT14,DBLP:journals/ipl/KubicaKRRW13}.
As we require the encoding function to return integers, we can set $\E(X)=\alpha(X) \cdot |X|+\beta(X)$, since $\alpha(X),\beta(X) \in [0 \dd |X|)$. Then \cite[Lemma 4]{DBLP:journals/tcs/CrochemoreIKKLP16} implies that $\E$ is an encoding function for order-preserving matching and \cite[Lemma 24]{DBLP:journals/tcs/CrochemoreIKKLP16} gives $\pi^\E(n)=\Oh(n\sqrt{\log n})$, $\rho^\E(n)=\Oh(\log n/\log \log n)$.

For Cartesian-tree matching, \cite[Theorem 1]{DBLP:journals/tcs/ParkBALP20} shows that the following \emph{parent-distance} representation:
\[\mathit{PD}(X)=\left.
  \begin{cases}
    |X|-\max_{1 \le j < |X|} \{j\,:\,X[j] \le X[|X|]\} & \text{if such $j$ exists} \\
    0 & \text{otherwise}
  \end{cases}
  \right.\]
is an encoding function. Encodings of all prefixes of a string $T$ can be computed in $\pi^{\mathit{PD}}(n)=\Oh(n)$ time using a folklore nearest smaller value algorithm. In \cite[Section 5.2]{DBLP:journals/tcs/ParkBALP20} it is noted that $\mathit{PD}(T[i \dd j])$ can be computed from $\mathit{PD}(T[1 \dd j])$ in $\rho^{\mathit{PD}}(n)=\Oh(1)$ time. 

By $X^R$ we denote the reverse of string $X$. In \cite[Lemma 2]{DBLP:journals/tcs/IIT13} it is shown that the length of the longest suffix palindrome of $X$, formally:
\[\mathit{Lpal}(X)=\max\{|X|-k+1\,:\,X[k \dd |X|] = X[k \dd |X|]^R\},\]
is an encoding function for palindrome matching. 

Computing encodings of substrings $\mathit{Lpal}(T[i \dd j])$ can be reduced in linear time to 2D range successor queries as follows.

In the 2D range successor problem we are given $n$ points in an $n \times n$ grid and we are to answer queries that, given a rectangle in the grid, report a point in the rectangle with the smallest first coordinate, if any. Such queries can be answered in $\Oh(\log \log n)$ time after $\Oh(n \sqrt{\log n})$ preprocessing~\cite{DBLP:conf/esa/Gao0N20}.

In the reduction, we compute all maximal palindromes in $T$ in $\Oh(n)$ time using Manacher's algorithm~\cite{DBLP:journals/jacm/Manacher75}. Let us show how we deal with odd-length palindromes; the approach for even-length palindromes is analogous. For a maximal palindrome $T[c-r \dd c+r]$ with $c \in [1 \dd n]$ and $r \ge 0$, we create a point $(c,c+r)$. To compute $\mathit{Lpal}(T[i \dd j])$, it suffices to find the point in the rectangle $[\ceil{(i+j)/2}\dd j] \times [j \dd \infty)$ with the smallest first coordinate. If $x$ is the sought coordinate, the longest odd-length suffix palindrome of $T[i \dd j]$ has length $2(j-x)+1$.

By \cite{DBLP:conf/esa/Gao0N20}, we get $\pi^\mathit{Lpal}(n)=\Oh(n \sqrt{\log n})$ and $\rho^\mathit{Lpal}(n)=\Oh(\log \log n)$.
\end{example}

In \cref{sec:1} we show the next theorem on efficient counting of $\approx$-squares.

\begin{theorem}\label{thm:main1}
Let $\approx$ be an SCER and $\E$ be its encoding function. The number of non-equivalent $\approx$-square substrings in a length-$n$ string can be computed in $\Oh(\pi^\E(n)+n\rho^\E(n)+n\log n)$ time.

The number of $\approx$-square substrings in a length-$n$ string that are distinct as strings can be computed in the same time complexity.
\end{theorem}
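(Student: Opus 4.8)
The plan is to reduce everything to a question about the \emph{encoded suffixes} of $T$ and then to count square classes by a leftmost-occurrence scheme. Write $g_i$ for the integer sequence whose $d$-th entry is $\E(T[i \dd i+d))$. By~\eqref{eq:E}, two equal-length substrings satisfy $T[i \dd i+m) \approx T[j \dd j+m)$ exactly when $g_i$ and $g_j$ agree on their first $m$ entries; hence, writing $\mathrm{lcp}_\approx(i,j)$ for the length of the longest common prefix of $g_i$ and $g_j$, the string $T[i \dd i+2\ell)$ is an $\approx$-square iff $\mathrm{lcp}_\approx(i,i+\ell)\ge \ell$. I would also use that being an $\approx$-square is a property of the whole $\approx$-class: if $XY$ is an $\approx$-square and $XY \approx X'Y'$, then splitting both strings into halves via SCER property~(2) and using transitivity shows $X'Y'$ is an $\approx$-square too. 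Thus counting non-equivalent $\approx$-squares is the same as counting even-length $\approx$-classes of substrings of $T$ that happen to be $\approx$-squares, and each such class may be represented by any one of its occurrences.

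First I would set up the data structures. After the $\pi^\E(n)$ preprocessing, any single $\E(T[i \dd j])$ is available in $\rho^\E(n)$ time; computing $\E(T[1 \dd j])$ for all $j$ costs $\Oh(n\rho^\E(n))$ and, with the rule that turns $\E(T[1 \dd j])$ into $\E(T[i \dd j])$, yields a single-element oracle for the sequences $g_i$. On top of this I build an $\approx$-suffix array of $g_1,\dots,g_n$ together with an oracle for $\mathrm{lcp}_\approx(i,j)$: sorting the encoded suffixes lexicographically and recovering their pairwise longest common prefixes uses $\Oh(n\log n)$ comparisons, each resolved by the encoding oracle. From the $\approx$-suffix array I read off, for every $i$, the quantity $\mathrm{pl}_i := \max_{i'<i}\mathrm{lcp}_\approx(i',i)$, the longest $\approx$-prefix shared between the suffix at $i$ and an earlier-starting suffix; a length-$m$ class has its \emph{leftmost} occurrence at $i$ precisely when $m>\mathrm{pl}_i$.

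The counting step charges each $\approx$-class to its leftmost occurrence, so that
\[
\#\{\text{non-equivalent }\approx\text{-squares}\}\;=\;\sum_{i=1}^{n}\bigl|\{\ell \,:\, \mathrm{lcp}_\approx(i,i+\ell)\ge \ell \ \text{and}\ 2\ell>\mathrm{pl}_i\}\bigr|,
\]
and one checks that this charges every class exactly once, using that $\approx$-square-ness is class-invariant (so the $\approx$-square test is automatically satisfied at the leftmost occurrence). The set $L_i=\{\ell:\mathrm{lcp}_\approx(i,i+\ell)\ge\ell\}$ of half-lengths of $\approx$-squares starting at $i$ must not be enumerated naively, since it can be large (for p-squares its size grows with $\sigma$, yet the target time is $\sigma$-independent). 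Instead I would show, through a periodicity argument phrased at the level of $\mathrm{lcp}_\approx$, that $L_i$ decomposes into $\Oh(\log n)$ arithmetic progressions—one per maximal $\approx$-periodicity through position $i$—so that intersecting each progression with the threshold $2\ell>\mathrm{pl}_i$ and counting it takes $\Oh(1)$, and the whole sum is evaluated in $\Oh(n\log n)$. The distinct-as-strings variant uses the identical scheme with the grouping relation $\approx$ replaced by plain equality: an ordinary suffix array of $T$ canonicalises each \emph{string} to its leftmost occurrence (supplying the threshold in place of $\mathrm{pl}_i$), while the membership test $\mathrm{lcp}_\approx(i,i+\ell)\ge\ell$ is retained unchanged to decide which distinct strings are $\approx$-squares.

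I expect the two delicate points to be the following. First, the generic construction of the $\mathrm{lcp}_\approx$ oracle within the stated budget: although each $\E$-query is cheap, the substring encoding $\E(T[i\dd j])$ differs from the global prefix encoding $\E(T[1\dd j])$ by a window-dependent correction (a long-range dependency that must not cross the left end of the window), and the sorting/LCE machinery has to respect this correction while using only $\Oh(n)$ encoding evaluations and $\Oh(n\log n)$ comparison work. Second, establishing that $L_i$ splits into $\Oh(\log n)$ progressions in the SCER setting, where the classical three-squares/Fine–Wilf machinery is not available verbatim and must be recovered on the encoded sequences; this is the combinatorial heart, and it is what makes the bulk counting $\sigma$-independent. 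Once both are in place, the leftmost-occurrence accounting guarantees each class is counted once, giving total time $\Oh(\pi^\E(n)+n\rho^\E(n)+n\log n)$ in both the non-equivalent and the distinct-as-strings cases.
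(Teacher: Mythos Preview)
Your leftmost-occurrence accounting and the use of $\mathrm{pl}_i$ (which is exactly the paper's $\LPF^\approx[i]$) match the paper's framework. The divergence is in how the set of $\approx$-square occurrences is compressed enough to sum over. You slice by starting position and claim that $L_i=\{\ell:\mathrm{lcp}_\approx(i,i+\ell)\ge\ell\}$ breaks into $\Oh(\log n)$ arithmetic progressions. The paper instead slices by half-length $p$ and shows that $\Squares_p^\approx=\{i:\mathrm{lcp}_\approx(i,i+p)\ge p\}$ is a union of intervals of total size $\Oh(n\log n)$ over all $p$ (\cref{lem:esqupe}); this is established by running the Stoye--Gusfield branching-tandem-repeat enumeration on the $\approx$-suffix tree, and the resulting sum~\eqref{eq:sumRC} is then evaluated by a left-to-right sweep that exploits the fact that $\LPF^\approx$ is a linear oscillation array (\cref{lem:loLPF}).

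The gap in your plan is the $\Oh(\log n)$-progression claim for $L_i$. You yourself flag it as ``the combinatorial heart,'' and for general SCERs it is neither known nor plausible: the three-squares and Fine--Wilf machinery does \emph{not} transfer to parameterized, order-preserving, or Cartesian-tree matching, which is precisely why the paper avoids periodicity arguments and goes via the suffix-tree enumeration instead. Nothing in the paper's $\Oh(n\log n)$ row-interval bound on $\Squares_p^\approx$ implies a per-position (column) bound of the form you need, and no such result exists in the literature for arbitrary SCERs. Without it, your counting step is unbounded. A secondary issue is your construction of the $\mathrm{lcp}_\approx$ oracle: ``$\Oh(n\log n)$ comparisons, each resolved by the encoding oracle'' is not enough, since a single lexicographic comparison of two encoded suffixes done character-by-character costs $\Theta(n\cdot\rho^\E(n))$; the paper handles this by invoking the Cole--Hariharan quasi-suffix-collection theorem (\cref{thm:CH}, \cref{cor:CH}) to build the $\approx$-suffix tree directly, which you would also need.
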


Consequently, in a length-$n$ string, the numbers of non-equivalent p-squares, op-squares, CT-squares, and palindrome-squares, as well as the numbers of the respective squares that are distinct as strings can be computed in $\Oh(n \log n)$ time, as the respective SCERs enjoy encoding functions with $\pi^\E(n)=\Oh(n \log n)$ and $\rho^\E(n)=\Oh(\log n)$ as shown in \cref{ex:particular}. In particular, the algorithm of \cref{thm:main1} in the case that $\sigma=\omega(\log n)$ improves upon the $\Oh(n\sigma)$-time algorithm of Gawrychowski, Ghazawi, and Landau~\cite{DBLP:conf/cpm/GawrychowskiGL23} for reporting (thus, counting) op-squares that are distinct as strings. Moreover, the obtained complexity for counting non-equivalent p-squares (p-squares that are distinct as strings, respectively) is better than the one in \cref{thm:main2} if $\sigma=\omega(\log n/\log \log n)$ ($\sigma=\Omega(\log \log n)$, respectively).

\cref{thm:main1} involves constructing a $\approx$-suffix tree data structure (see \cref{sec:ST} for a definition).

\section{Upper Bound on the Number of \texorpdfstring{$k$}{k}-Runs}\label{sec:KK++}
This section introduces an upper bound on the number of $k$-runs (proof of \cref{thm:KK++}) which implies their efficient computation by the algorithm of Kolpakov and Kucherov~\cite{DBLP:conf/esa/KolpakovK01,DBLP:journals/tcs/KolpakovK03}.
A position $i$ in string $T$ is called an \emph{$\ell$-mismatching position} if $i \in [1\dd n-\ell]$ and $T[i] \ne T[i+\ell]$.
In particular, $T[i \dd i+2\ell)$ is a $k$-mismatch square if and only if $T[i \dd i+\ell)$ contains at most $k$ $\ell$-mismatching positions.
Let us formally define a uniform $k$-run.

\begin{definition}

A substring $T[a \dd b)$ is called a \emph{uniform $k$-run of period $\ell$} if $b-a \ge 2\ell$ and the sets $\{j \in [i \dd i+\ell)\,:\,T[j] \ne T[j+\ell]\}$ of $\ell$-mismatching positions for all $i \in [a \dd b-2\ell]$ have cardinality at most $k$ and are all the same.
We are interested in maximal uniform $k$-runs.
\end{definition}

A \emph{gapped repeat} in a string $T$ is a fragment of the form $UVU$, for $|U|>0$. Fragments denoted by $U$ are called \emph{arms} of the gapped repeat (left and right arm). The \emph{period} of the gapped repeat is defined as $|UV|$. An $\alpha$-gapped repeat (for $\alpha \ge 1$) in $T$ is a gapped repeat $UVU$ such that $|UV| \le \alpha|U|$. An ($\alpha$-)gapped repeat is called \emph{maximal} if its arms cannot be extended simultaneously with the same character to the right or to the left. A maximal ($\alpha$-)gapped repeat will be called an ($\alpha$-)MGR, for short. Gawrychowski, I, Inenaga, Köppl, and Manea~\cite{DBLP:journals/mst/GawrychowskiIIK18} showed that, for a real $\alpha \ge 1$, a string of length $n$ contains $\Oh(n\alpha)$ $\alpha$-MGRs.

A \emph{generalised run} is a pair $(T[x \dd y),p)$ such that $p$ satisfies $2p \le y-x$ and (1) $x=1$ or $T[x-1 \dd y)$ does not have period $p$, and (2) $y=|T|+1$ or $T[x \dd y]$ does not have period $p$. That is, a generalised run is a 0-run. A run is a generalised run in which $p$ is the smallest period of $T[x \dd y)$. A string of length $n$ contains less than $n$ runs and $\Oh(n)$ generalised runs~\cite{DBLP:journals/siamcomp/BannaiIINTT17}.

\begin{definition}
We say that an MGR $UVU=T[x \dd y)$ with period $\ell=|UV|$ in $T$ \emph{induces} a uniform $k$-run $T[a \dd b)$ of period $\ell$ if $[x \dd y-\ell) \cap [a \dd b-\ell) \ne \emptyset$.

We further say that a generalised run $(T[x \dd y),\ell)$ \emph{induces} a uniform $k$-run $T[a \dd b)$ of period $\ell$ if $[x \dd y-\ell) \cap [a \dd b-\ell) \ne \emptyset$.
\end{definition}

Let us emphasize that the uniform $k$-run does not need to be a substring of the MGR or generalised run.
Intuitively, an MGR induces a uniform $k$-run if the left half of at least one $k$-mismatch square of the $k$-run contains a position of the left arm of the MGR; see \cref{fig:kruns_gr}.
If an MGR or a generalised run induces a uniform $k$-run, we say that the $k$-run is \emph{induced} by the MGR or run, respectively. We also say that the MGR or generalised run induces all $k$-mismatch squares in the uniform $k$-run that it induces.

\begin{figure}[htpb]
\centering
\begin{tikzpicture}[xscale=0.3]
    \foreach \x/\a/\c in {
    1/a/black,
    2/b/black,
    3/a/black,
    4/c/red,
    5/a/black,
    6/a/black,
    7/b/red,
    8/a/black,
    9/a/black,
    10/b/black,
    11/a/red,
    12/b/black,
    13/a/black,
    14/a/black,
    15/c/red,
    16/a/black,
    17/a/red,
    18/b/red,
    19/c/black,
    20/b/black,
    21/a/black,
    22/a/black,
    23/b/black,
    24/a/black,
    25/c/black,
    26/a/black}{
        \draw (\x,0) node[above] {\textcolor{\c}{\texttt{\a}}};
    }
    \begin{scope}[yshift=0.5cm]
    \foreach \x/\a/\c in {
    1/a/black,
    2/b/black,
    3/a/black,
    4/c/red,
    5/a/black,
    6/a/black,
    7/b/red,
    8/a/black,
    9/a/black,
    10/b/black,
    11/a/black,
    12/b/black,
    13/a/black,
    14/a/black,
    15/c/black,
    16/a/black,
    17/a/black,
    18/b/black}{
        \draw (\x,0) node[above] {\textcolor{\c}{\texttt{\a}}};
    }
    \end{scope}
    
    \begin{scope}[yshift=1cm]
    \foreach \x/\a/\c in {
    5/a/black,
    6/a/black,
    7/b/red,
    8/a/black,
    9/a/black,
    10/b/black,
    11/a/red,
    12/b/black,
    13/a/black,
    14/a/black,
    15/c/black,
    16/a/black,
    17/a/black,
    18/b/black,
    19/c/black,
    20/b/black,
    21/a/black,
    22/a/black}{
        \draw (\x,0) node[above] {\textcolor{\c}{\texttt{\a}}};
    }
    \end{scope}

    \begin{scope}[yshift=1.5cm]
    \foreach \x/\a/\c in {
    8/a/black,
    9/a/black,
    10/b/black,
    11/a/red,
    12/b/black,
    13/a/black,
    14/a/black,
    15/c/red,
    16/a/black,
    17/a/black,
    18/b/black,
    19/c/black,
    20/b/black,
    21/a/black,
    22/a/black,
    23/b/black,
    24/a/black}{
        \draw (\x,0) node[above] {\textcolor{\c}{\texttt{\a}}};
    }
    \end{scope}
    \foreach \x in {1,...,26}{
    \draw (\x,-0.3) node[above] {\tiny \x};
    }
    \draw[very thick] (7.5,0) rectangle (10.5,0.4);
    \draw[very thick,xshift=8cm] (7.5,0) rectangle (10.5,0.4);
    \draw[very thick, densely dotted] (10.5,0) rectangle (15.5,0.4);

\end{tikzpicture}
\caption{String $T[1 \dd 26]$ from \cref{fig:kruns} with 8-mismatching positions shown in red. The string contains three uniform 2-runs with period 8, $T[1\dd 18]$, $T[5 \dd 22]$, and $T[8 \dd 24]$ (in \cref{fig:kruns}, the last two form a single 2-run). They are all induced by a 3-MGR $UVU$ for $U=T[8 \dd 10]=T[16 \dd 18]$ and $V=T[11\dd 15]$. In the proof of \cref{lem:MGR_packages} for this MGR, we have $L=\{7,4,0\}$ and $R=\{11,15,17\}$.}
\label{fig:kruns_gr}
\end{figure}

\begin{lemma}\label{lem:MGR_packages}
An MGR induces at most $2k+1$ uniform $k$-runs.
\end{lemma}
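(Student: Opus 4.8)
The plan is to encode each uniform $k$-run of period $\ell$ by the common set of $\ell$-mismatching positions appearing in its windows, and then to count the distinct such sets that can overlap the left arm of the MGR. Write $UVU=T[x \dd y)$, so the left arm occupies $[x \dd y-\ell)$; every position there is non-$\ell$-mismatching (since $T[j]=T[j+\ell]$ on the arm), while maximality of the MGR forces $x-1$ and $y-\ell$ to be $\ell$-mismatching (or string boundaries). Let $\cdots<m_{-1}<m_0<m_1<m_2<\cdots$ be the increasing sequence of all $\ell$-mismatching positions, indexed so that the two flanking the arm are $m_0=x-1$ and $m_1=y-\ell$ (the positions strictly between them form the arm and carry no mismatch). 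For a uniform $k$-run $T[a \dd b)$, all of its window sets coincide with $S:=\{j\in[a \dd b-\ell)\,:\,T[j]\ne T[j+\ell]\}$, which is a block of consecutive mismatching positions $\{m_\sigma,\ldots,m_\tau\}$ with $|S|=\tau-\sigma+1\le k$. Since $S$ pins down the maximal run uniquely, distinct induced runs correspond to distinct pairs $(\sigma,\tau)$; runs with $S=\emptyset$ (exact period $\ell$) fall outside this encoding and are accounted for by the companion statement for generalised runs.

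First I would translate ``induces'' into index constraints. By definition the run is induced exactly when the arm $[x \dd y-\ell)$ meets the window union $[a \dd b-\ell)$. Because the only $\ell$-mismatching positions inside $[a \dd b-\ell)$ are $m_\sigma,\ldots,m_\tau$, its endpoints satisfy $m_{\sigma-1}<a\le m_\sigma$ and $m_\tau<b-\ell\le m_{\tau+1}$. A short check then shows that the arm, lying strictly between $m_0$ and $m_1$, can meet $[a \dd b-\ell)$ only if $a<m_1$ and $b-\ell>m_0$, which force $\sigma\le 1$ and $\tau\ge 0$. Combined with $\tau-\sigma\le k-1$ this yields $1-k\le\sigma\le 1$ and $0\le\tau\le k$, whence
\[
\sigma+\tau=2\sigma+(\tau-\sigma)\le k+1\quad\text{and}\quad \sigma+\tau=2\tau-(\tau-\sigma)\ge 1-k,
\]
so $\sigma+\tau$ lies in an interval containing only $2k+1$ integers.

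The crux is then a monotonicity (staircase) argument. I would slide a length-$\ell$ window over the positions $i$ for which $[i \dd i+\ell)$ meets the arm; as $i$ increases, both $\min S$ and $\max S$ are non-decreasing, so the pair $(\sigma(i),\tau(i))$ is coordinatewise non-decreasing. Every induced run realises its own $(\sigma,\tau)$ at some such $i$ (a window of the run must meet the arm), so all induced runs appear along a single coordinatewise-monotone chain in the $(\sigma,\tau)$-grid. On such a chain distinct points have strictly increasing $\sigma+\tau$; as $\sigma+\tau$ takes at most $2k+1$ values, there are at most $2k+1$ induced uniform $k$-runs. In the language of the figure, the admissible values of $a-1$ and $b-\ell$ are the $k+1$ nearest mismatching positions (or boundaries) on each side of the arm, namely $L$ and $R$, and the chain threads monotonically through $L\times R$, visiting at most $|L|+|R|-1=2k+1$ cells.

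I expect the main obstacle to be making the staircase step fully rigorous: verifying that ordering the induced runs by position really does make both $\sigma$ and $\tau$ monotone (equivalently, that two induced uniform $k$-runs can be neither crossing nor nested), and carefully handling two secondary points. One is the boundary cases, where fewer than $k$ mismatching positions are available on a side, so that some $m_j$ must be replaced by $0$ or by $n+1$. The other is that a run can fail to extend in two ways — a mismatch \emph{entering} at the near end versus a mismatch \emph{leaving} at the far end of its extreme window — which slightly complicates the direct identification of $a-1$ with an element of $L$ (and of $b-\ell$ with an element of $R$), but does not affect the $\sigma+\tau$ count, which is why I would phrase the core of the proof through $(\sigma,\tau)$ rather than through $L$ and $R$ directly.
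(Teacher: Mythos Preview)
Your approach is essentially the paper's. Both arguments slide a length-$\ell$ window across the positions whose window meets the arm and observe that the set of $\ell$-mismatching positions inside the window changes at most $2k$ times (at most $k$ mismatches can leave on the left, at most $k$ can enter on the right), yielding at most $2k+1$ distinct mismatch sets and hence at most $2k+1$ induced uniform $k$-runs. The paper packages this via the sets $L,R$ of the $k{+}1$ nearest mismatches on each side of the arm and the interval $I=[\min L+1\dd\max R-\ell]$; your $(\sigma,\tau)$ encoding and the staircase bound $|L|+|R|-1$ are the same count in different clothing.

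One point to tighten: your dismissal of the case $S=\emptyset$ is not correct as stated. Saying it is ``accounted for by the companion statement for generalised runs'' mixes up directions---that lemma bounds how many uniform $k$-runs a \emph{generalised run} induces, not whether an \emph{MGR} can induce a uniform $k$-run with $S=\emptyset$. In the paper's formulation the empty-set window is simply one of the at most $2k+1$ constant intervals of the slide, so no separate treatment is needed. If you prefer to keep your $(\sigma,\tau)$ phrasing, observe instead that an induced run with $S=\emptyset$ forces $|V|=0$ (otherwise the arm is too short to carry a generalised run of period~$\ell$); then $m_1-m_0=|U|+1=\ell+1>\ell$, so no length-$\ell$ window contains both $m_0$ and $m_1$, ruling out every pair with $\sigma\le 0$ and $\tau\ge 1$. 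The non-empty count thus drops to at most $2k$, and the total with the empty case is still at most $2k+1$.
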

\begin{proof}
Consider an MGR $T[x \dd y)$ with period $\ell$. Let $R$ be the set of $k+1$ smallest $\ell$-mismatching positions in $[y-\ell \dd y)$ in $T$; if there are no such $k+1$ positions, we select all $\ell$-mismatching positions in $[y-\ell \dd y)$ in $T$ to the set $R$. Similarly, let $L$ be the set of $k+1$ largest $\ell$-mismatching positions in $[x-\ell+1 \dd x)$ in $T$; if there are no $k+1$ such positions, we select all $\ell$-mismatching positions in $[x-\ell+1 \dd x)$. If $|R| \le k$, we insert $y$ to $R$. If $|L| \le k$, we insert $x-\ell$ to $L$. This way, if a $k$-mismatch square $T[i \dd i+2\ell)$ is induced by the MGR $T[x \dd y)$, then $i \in I$ for $I=[\min L + 1 \dd \max R - \ell]$ (as otherwise the left half of the square would contain more than $k$ $\ell$-mismatching positions or would not touch $T[x \dd y-\ell)$).

Let us consider a window of length $\ell$ sliding left-to-right over $T$ with the left end point of the window located in interval $I$. There are at most $2k$ moments when the number of $\ell$-mismatching positions in the window changes. Between them, we obtain uniform $k$-runs if the number of $\ell$-mismatching positions in the window is at most $k$.
\end{proof}

An analogous lemma for generalised runs is obtained with the same proof (replacing ``MGR'' by ``generalised run''); see \cref{fig:kruns_run}.

\begin{figure}[htpb]
\centering
\begin{tikzpicture}[xscale=0.3]
    \foreach \x/\a/\c in {
    1/b/red,
    2/b/black,
    3/d/red,
    4/a/black,
    5/a/red,
    6/a/black,
    7/a/black,
    8/b/black,
    9/a/black,
    10/a/black,
    11/b/black,
    12/a/black,
    13/a/black,
    14/b/black,
    15/a/black,
    16/a/red,
    17/b/black,
    18/a/black,
    19/a/black,
    20/b/red,
    21/a/black,
    22/c/black,
    23/b/black,
    24/a/black,
    25/a/black,
    26/c/black}{
        \draw (\x,0) node[above] {\textcolor{\c}{\texttt{\a}}};
    }
    \begin{scope}[yshift=0.5cm]
    \foreach \x/\a/\c in {
    2/b/black,
    3/d/red,
    4/a/black,
    5/a/red,
    6/a/black,
    7/a/black,
    8/b/black,
    9/a/black,
    10/a/black,
    11/b/black,
    12/a/black,
    13/a/black,
    14/b/black}{
        \draw (\x,0) node[above] {\textcolor{\c}{\texttt{\a}}};
    }
    \end{scope}
    
    \begin{scope}[yshift=1cm]
    \foreach \x/\a/\c in {
    4/a/black,
    5/a/red,
    6/a/black,
    7/a/black,
    8/b/black,
    9/a/black,
    10/a/black,
    11/b/black,
    12/a/black,
    13/a/black,
    14/b/black,
    15/a/black,
    16/a/black}{
        \draw (\x,0) node[above] {\textcolor{\c}{\texttt{\a}}};
    }
    \end{scope}

    \begin{scope}[yshift=1.5cm]
    \foreach \x/\a/\c in {
    6/a/black,
    7/a/black,
    8/b/black,
    9/a/black,
    10/a/black,
    11/b/black,
    12/a/black,
    13/a/black,
    14/b/black,
    15/a/black,
    16/a/black,
    17/b/black,
    18/a/black,
    19/a/black,
    20/b/black,
    21/a/black}{
        \draw (\x,0) node[above] {\textcolor{\c}{\texttt{\a}}};
    }
    \end{scope}

    \begin{scope}[yshift=2cm]
    \foreach \x/\a/\c in {
    11/b/black,
    12/a/black,
    13/a/black,
    14/b/black,
    15/a/black,
    16/a/red,
    17/b/black,
    18/a/black,
    19/a/black,
    20/b/black,
    21/a/black,
    22/c/black,
    23/b/black,
    24/a/black,
    25/a/black}{
        \draw (\x,0) node[above] {\textcolor{\c}{\texttt{\a}}};
    }
    \end{scope}

    \begin{scope}[yshift=2.5cm]
    \foreach \x/\a/\c in {
    15/a/black,
    16/a/red,
    17/b/black,
    18/a/black,
    19/a/black,
    20/b/red,
    21/a/black,
    22/c/black,
    23/b/black,
    24/a/black,
    25/a/black,
    26/c/black}{
        \draw (\x,0) node[above] {\textcolor{\c}{\texttt{\a}}};
    }
    \end{scope}

    \foreach \x in {1,...,26}{
    \draw (\x,-0.3) node[above] {\tiny \x};
    }
    \draw[very thick] (5.5,0) rectangle (21.5,0.4);

\end{tikzpicture}
\caption{String $T[1 \dd 26]$ with its 6-mismatching positions shown in red. A generalised run $T[6 \dd 21]$ with period 6 (rectangle) induces five uniform 2-runs with period 6 shown above. In this example, the generalised run corresponds to a run with period 3.}
\label{fig:kruns_run}
\end{figure}

\begin{lemma}\label{lem:run_packages}
A generalised run induces at most $2k+1$ uniform $k$-runs.
\end{lemma}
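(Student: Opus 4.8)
The plan is to prove this by mirroring the proof of \cref{lem:MGR_packages}, after first isolating which property of the fragment $T[x \dd y)$ that proof actually uses. Reading that argument, the only structural fact about the MGR that is invoked is that $T[x \dd y)$ has period $\ell$, equivalently that the interval $[x \dd y - \ell)$ contains \emph{no} $\ell$-mismatching position. This is exactly what licenses the claim that every $\ell$-mismatching position inside a length-$\ell$ window $[i \dd i+\ell)$ meeting $[x \dd y-\ell)$ must lie either in $[i \dd x)$ (hence be captured by $L$) or in $[y-\ell \dd i+\ell)$ (hence be captured by $R$), and thus that the left endpoint $i$ of any induced $k$-mismatch square is confined to $I = [\min L + 1 \dd \max R - \ell]$. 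The arms $U$, the gap $V$, the $\alpha$-gapped inequality $|UV| \le \alpha|U|$, and the arm-maximality condition never enter the counting at all.

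The next step is to check that a generalised run $(T[x \dd y),\ell)$ supplies the same property. By definition such a run has $2\ell \le y-x$ and $T[x \dd y)$ has period $\ell$, so $T[j]=T[j+\ell]$ for every $j \in [x \dd y-\ell)$; that is, $[x \dd y-\ell)$ is free of $\ell$-mismatching positions, precisely as in the MGR case. Moreover, inducing is defined by the identical intersection condition $[x \dd y-\ell) \cap [a \dd b-\ell) \ne \emptyset$, so that notion transfers with no change whatsoever. With the shared property established, I would then repeat the argument of \cref{lem:MGR_packages} verbatim: form $R$ and $L$ from the $\ell$-mismatching positions just right of $y-\ell$ and just left of $x$, inserting the sentinels $y$ into $R$ and $x-\ell$ into $L$ whenever fewer than $k+1$ such positions are available, confine $i$ to $I$, and slide a length-$\ell$ window whose left endpoint ranges over $I$.

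The conclusion follows from the same event count: as $i$ traverses $I$, at most $k$ mismatching positions (elements of $L$ strictly above $\min L$) can leave the window and at most $k$ (elements of $R$ strictly below $\max R$) can enter it, so the set of $\ell$-mismatching positions inside the window changes at most $2k$ times. This splits $I$ into at most $2k+1$ maximal sub-intervals on which that set is constant, and those sub-intervals whose set has cardinality at most $k$ are exactly the maximal uniform $k$-runs induced, yielding the bound $2k+1$. I expect the only real obstacle to be the bookkeeping of the first step: one must audit the proof of \cref{lem:MGR_packages} carefully enough to be certain that no line secretly relies on arm/gap structure, so that substituting ``generalised run'' for ``MGR'' is genuinely sound. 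Once that audit confirms the proof depends solely on the mismatch-free period region and its two boundaries, there is nothing further to establish.
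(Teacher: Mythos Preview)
Your proposal is correct and matches the paper's own treatment exactly: the paper states that the lemma for generalised runs ``is obtained with the same proof (replacing `MGR' by `generalised run')'', and your audit correctly identifies that the only property of $T[x \dd y)$ used in the proof of \cref{lem:MGR_packages} is that $[x \dd y-\ell)$ contains no $\ell$-mismatching positions, which a generalised run of period $\ell$ also satisfies.
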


A uniform $k$-run can be induced by several MGRs and generalised runs.

\begin{example}
The middle uniform 2-run $T[5 \dd 22]$ in \cref{fig:kruns_gr} is induced also by two other MGRs with period 8, $T[5 \dd 14]$ (with arms $T[5 \dd 6]=T[13 \dd 14]=\mathtt{aa}$) and $T[12 \dd 22]$ (with arms $T[12 \dd 14]=T[20 \dd 22]=\mathtt{baa}$).
\end{example}

For a gapped repeat $UVU$, the fraction $\frac{|UV|}{|U|}$ is called the \emph{gap ratio}. That is, the gap ratio is the minimum $\alpha$ such that $UVU$ is an $\alpha$-gapped repeat. We assign to each gapped repeat $UVU$ in $T$ a \emph{weight} equal to the reciprocal of its gap ratio, that is, $\frac{|U|}{|UV|}$.

\begin{lemma}\label{lem:packages_MGR}
If $\ell \ge 4k$ and $T[a \dd b)$ is a uniform $k$-run of period $\ell$, then it is induced by a generalised run or by some number of $(2k+2)$-MGRs of total weight at least $\frac14$.
\end{lemma}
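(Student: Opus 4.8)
The plan is to charge the uniform $k$-run $T[a \dd b)$ to maximal blocks of positions that agree under the shift by $\ell$. First I would extract from the definition the only two facts I actually need: that $b-a \ge 2\ell$, and that the interval $[a \dd b-\ell)$ contains at most $k$ $\ell$-mismatching positions. The latter holds because every position of $[a \dd b-\ell)$ lies in some window $[i \dd i+\ell)$ with $i \in [a \dd b-2\ell]$, and all these windows carry the \emph{same} mismatch set of cardinality at most $k$; hence the set $M$ of $\ell$-mismatching positions inside $[a \dd b-\ell)$ is exactly that common set, with $|M| \le k$.

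Next I would look at the $\ell$-matching positions, i.e.\ $[a \dd b-\ell) \setminus M$. Deleting the at most $k$ positions of $M$ splits $[a \dd b-\ell)$ into at most $k+1$ maximal blocks $I_1,\dots,I_m$ of consecutive $\ell$-matching positions, so $m \le |M|+1 \le k+1$. Their total length is $(b-a-\ell)-|M| \ge \ell-k$, and here I would invoke the hypothesis $\ell \ge 4k$ to conclude that this total is at least $\tfrac34\ell$. I would then extend each $I_t$ to the maximal run $\widehat{I}_t=[\hat p_t \dd \hat q_t)$ of $\ell$-matching positions of $T$ that contains it; since consecutive blocks are separated by a position of $M$, the extended runs $\widehat{I}_t$ stay pairwise disjoint. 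The factor $T[\hat p_t \dd \hat q_t+\ell)$ has period $\ell$, and the maximality of $\widehat{I}_t$ (a mismatch or a string end on each side) is precisely the non-extendability condition, so this factor is either a generalised run of period $\ell$ (when $|\widehat I_t|\ge\ell$) or an MGR of period $\ell$ (when $|\widehat I_t|<\ell$). In either case $[\hat p_t \dd \hat q_t)\supseteq I_t$ meets $[a \dd b-\ell)$, so the object \emph{induces} $T[a \dd b)$.

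The remaining step splits into two cases. If some $\widehat I_t$ has length at least $\ell$, we are done at once: it is a generalised run of period $\ell$ inducing $T[a \dd b)$. Otherwise every $\widehat I_t$ yields an MGR of period $\ell$ and weight $|\widehat I_t|/\ell$. I would call a block \emph{good} if $|\widehat I_t|\ge \ell/(2k+2)$, equivalently if its gap ratio is at most $2k+2$ so that its MGR is a $(2k+2)$-MGR, and \emph{bad} otherwise. As there are at most $k+1$ blocks, the bad ones contribute total $\ell$-matching length less than $(k+1)\cdot \ell/(2k+2)=\ell/2$; subtracting this from the lower bound $\tfrac34\ell$ shows the good blocks carry matching length at least $\ell/4$, so their $(2k+2)$-MGRs have total weight $\sum |\widehat I_t|/\ell \ge \tfrac14$. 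This establishes the claim.

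The main obstacle I expect is bookkeeping rather than a new idea: correctly reading off $|M|\le k$ from the uniform-run definition, and carefully verifying that the maximally extended matching blocks are \emph{genuine} MGRs or generalised runs of period exactly $\ell$ meeting the formal ``induces'' condition. The constant $\tfrac14$ is tight against the hypothesis: $\ell \ge 4k$ is used exactly to guarantee $\tfrac34\ell-\tfrac12\ell=\tfrac14\ell$, so a weaker lower bound on $\ell$ would not suffice to reach weight $\tfrac14$.
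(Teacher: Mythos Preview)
Your proposal is correct and follows essentially the same approach as the paper: both arguments split $[a\dd b-\ell)$ at its at most $k$ $\ell$-mismatching positions into at most $k+1$ matching blocks, extend each block maximally to obtain an MGR or generalised run of period $\ell$ that induces the uniform $k$-run, and then use the $(2k+2)$-threshold with the bound $\ell\ge 4k$ to show the good MGRs carry total weight at least $\tfrac14$. Your phrasing via explicit maximal matching runs $\widehat{I}_t$ is slightly cleaner than the paper's sentinel positions $p_0,p_{t+1}$ (you avoid the edge cases $p_{s+1}=a$ or $p_s=b-\ell-1$), and your arithmetic $\tfrac34\ell-\tfrac12\ell=\tfrac14\ell$ differs only cosmetically from the paper's $(b-a-\ell)-\tfrac{\ell}{2}-k\ge\tfrac{\ell}{2}-k\ge\tfrac{\ell}{4}$.
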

\begin{proof}
Let $p_1,\ldots,p_t$ be all $\ell$-mismatching positions among $[a \dd b-\ell)$, listed in an increasing order. We define sentinel $p_0$ being the maximum $\ell$-mismatching position that is smaller than $a$; if there is no such position, we set $p_0=0$. Similarly, we define sentinel $p_{t+1}$ as the minimum $\ell$-mismatching position that is at least $b-\ell$; if no such position exists, we set $p_{t+1}=n+1-\ell$.

Let us fix $s \in [0 \dd t]$. We have $U_s:=T[p_s+1 \dd p_{s+1}) = T[p_s+\ell+1 \dd p_{s+1}+\ell)$, so $W_s:=T[p_s+1 \dd p_{s+1}+\ell)$ has period $\ell$. Moreover, we have (1) $p_s=0$ or $T[p_s] \ne T[p_s+\ell]$ and (2) $p_{s+1}+\ell=n+1$ or $T[p_{s+1}] \ne T[p_{s+1}+\ell]$. If $0<|U_s|<\ell$, then $W_s$ is an MGR with arms equal to $U_s$. If $|U_s| \ge \ell$, $W_s$ is a generalised run. The MGR or generalised run induces $T[a \dd b)$ unless $p_{s+1}=a$ or $p_s=b-\ell-1$.

Assume that the uniform $k$-run $T[a \dd b)$ is not induced by a generalised run. If a string $W_s$, for $s \in [0 \dd t]$, is not a $(2k+2)$-MGR of period $\ell$, the length of its left arm $U_s$ is smaller than $\frac{\ell}{2k+2}$. Hence, for strings $W_s$ that are not $(2k+2)$-MGRs, the total length of their left arms $U_s$ is at most $\frac{\ell}{2}$. The remaining $(2k+2)$-MGRs among $W_s$ have total length of left arms at least $(b-a-\ell)-\frac{\ell}{2}-k \ge \frac{\ell}{2}-k$ (accounting for the at most $k$ $\ell$-mismatching positions $p_1,\ldots,p_t$) which is at least $\frac{\ell}{4}$ by the assumption of the lemma. This means that the total weight of $(2k+2)$-MGRs that induce the $k$-run, defined as the sum of the lengths of their left arms divided by their periods, all equal to $\ell$, is at least $\frac14$.
\end{proof}

We show a theorem that implies \cref{thm:KK++}.

\begin{theorem}\label{thm:KK++uni}
A string of length $n$ contains $\Oh(nk\log k)$ uniform $k$-runs.
\end{theorem}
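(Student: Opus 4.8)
The plan is to partition the uniform $k$-runs of $T$ according to their period $\ell$ into those with \emph{small} period ($\ell<4k$) and those with \emph{large} period ($\ell\ge 4k$), and to bound the two groups separately. The large-period runs are exactly the ones to which the inducing machinery of \cref{lem:MGR_packages,lem:run_packages,lem:packages_MGR} applies, and they will dominate the count.

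For a fixed period $\ell$ I would count the uniform $k$-runs of period $\ell$ by sliding a length-$\ell$ window $[i \dd i+\ell)$ over $T$ and tracking the set of $\ell$-mismatching positions it contains. This set changes only when the window's left end $i$ or its right end $i+\ell$ is an $\ell$-mismatching position; hence the number of maximal intervals of $i$ on which the set is constant is $\Oh$ of the number of $\ell$-mismatching positions. Each uniform $k$-run of period $\ell$ corresponds injectively to such a maximal interval (on which, in addition, the set has size at most $k$), since its valid left-endpoint range $[a \dd b-2\ell]$ determines both $a$ and $b$. Thus the number of uniform $k$-runs of period $\ell$ is $\Oh$ of the number of $\ell$-mismatching positions, and summing over $\ell\in[1 \dd 4k)$, using that there are at most $n$ $\ell$-mismatching positions for each $\ell$, yields $\Oh(nk)$ small-period uniform $k$-runs.

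For the large-period runs I would invoke the dichotomy of \cref{lem:packages_MGR}: every uniform $k$-run of period $\ell\ge 4k$ is induced by a generalised run or by $(2k+2)$-MGRs of total weight at least $\frac{1}{4}$. The runs in the first category are easy: by \cref{lem:run_packages} each generalised run induces at most $2k+1$ uniform $k$-runs, and a length-$n$ string has $\Oh(n)$ generalised runs, for a total of $\Oh(nk)$. For the second category, let $\mathcal{R}$ be the set of such runs and set up a double counting over the ``induces'' relation between $(2k+2)$-MGRs and the members of $\mathcal{R}$. Summing the weight inequality of \cref{lem:packages_MGR} over all $R\in\mathcal{R}$ and swapping the order of summation gives
\[
\frac{1}{4}\,|\mathcal{R}| \;\le\; \sum_{M} w(M)\,\bigl|\{R\in\mathcal{R} : M \text{ induces } R\}\bigr| \;\le\; (2k+1)\sum_{M} w(M),
\]
where $M$ ranges over all $(2k+2)$-MGRs, $w(M)$ is the weight (reciprocal gap ratio), and the last inequality uses \cref{lem:MGR_packages}.

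The main obstacle is therefore to bound $\sum_{M} w(M)$ well enough. A trivial estimate---there are $\Oh(nk)$ $(2k+2)$-MGRs, each of weight at most $1$---only gives $\Oh(nk)$, leading to a final bound of $\Oh(nk^2)$, which is too weak. The key idea is to exploit that large-gap-ratio MGRs carry small weight: I would split the $(2k+2)$-MGRs into $\Oh(\log k)$ dyadic classes according to whether their gap ratio lies in $[2^j,2^{j+1})$. The $j$-th class is contained in the set of $2^{j+1}$-MGRs, of which there are $\Oh(n\,2^{j+1})$ by the $\Oh(n\alpha)$ bound on $\alpha$-MGRs, while each member has weight at most $2^{-j}$; hence each class contributes $\Oh(n)$ to the weight sum and $\sum_{M} w(M)=\Oh(n\log k)$. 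Plugging this in gives $|\mathcal{R}|=\Oh(nk\log k)$ for the second category, and combining the small-period, generalised-run, and MGR contributions yields the claimed $\Oh(nk\log k)$ total.
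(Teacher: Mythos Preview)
Your proposal is correct and follows the paper's approach closely: the same small/large period split, the same use of \cref{lem:MGR_packages,lem:run_packages,lem:packages_MGR}, and the same double-counting of edge weights in the bipartite ``induces'' graph. The one notable difference is how you bound $\sum_M w(M)$. The paper partitions the $(2k+2)$-MGRs by \emph{integer} gap-ratio class, letting $x_\alpha$ be the number with ratio in $(\alpha-1,\alpha]$, and then proves a separate optimisation claim (via an exchange argument) that $\sum_{\alpha=2}^{2k+2}\frac{3kx_\alpha}{\alpha-1}=\Oh(nk\log k)$ subject to the prefix-sum constraints $\sum_{i\le\alpha}x_i<13n\alpha$. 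Your dyadic decomposition---grouping by gap ratio in $[2^j,2^{j+1})$ and using $\Oh(n\,2^{j+1})$ $2^{j+1}$-MGRs of weight at most $2^{-j}$---is a cleaner and more direct route to the same $\Oh(n\log k)$ bound; it avoids the auxiliary claim entirely. Your sliding-window argument for small periods is also slightly sharper than the paper's (which just uses ``at most $n$ starting positions per period''), though both yield $\Oh(nk)$.
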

\begin{proof}
Let $T$ be a string of length $n$.
Let us consider a bipartite graph $G=(V_1 \cup V_2,E)$ such that vertices in $V_1$ are uniform $k$-runs in $T$, vertices in $V_2$ are $(2k+2)$-MGRs and generalised runs in $T$, and there is an edge $uw \in E$, for $u \in V_1$ and $w \in V_2$, if $k$-run $u$ is induced by the MGR or generalised run $w$. Our goal is to bound $|V_1|$ from above.

Let us remove from $G$ all vertices in $V_1$ that are uniform $k$-runs with period at most $4k$. There are at most $4kn$ such $k$-runs (as there are at most $4kn$ substrings of $T$ of length at most $4k$).
Let us also remove from $G$ all vertices in $V_2$ that are generalised runs and all vertices in $V_1$ that are adjacent to at least one of the removed vertices in $V_2$. By \cite{DBLP:journals/siamcomp/BannaiIINTT17}, this way at most $1.5n$ vertices are removed from $V_2$, so by \cref{lem:run_packages}, $\Oh(nk)$ vertices are removed from $V_1$. Let $G'=(V'_1\cup V'_2,E')$ be the remaining graph. We assume that each edge has a weight equal to the weight of the MGR in $V'_2$ it is incident to. In order to bound $|V'_1|$, we will consider the \emph{total weight of edges} in $E'$.

For each $\alpha \in [2 \dd 2k+2]$, let $x_\alpha$ be the number of $\alpha$-MGRs in $T$ that are not $(\alpha-1)$-MGRs. Each such MGR has weight at most $\frac{1}{\alpha-1}$ and $2k+1 \le 3k$ edges incident to it (cf.\ \cref{lem:MGR_packages}). Thus the sum of weights of edges in $E'$ is bounded from above by
\begin{equation}\label{eq:ub}
\sum_{\alpha=2}^{2k+2} \frac{3kx_\alpha}{\alpha-1}
\end{equation}
The improved upper bound on the number of $\alpha$-MGRs of I and Köppl \cite{DBLP:journals/tcs/IK19} implies that
\begin{equation}\label{eq:cond}
\sum_{i=2}^\alpha x_i < 13n\alpha \quad\text{for every }\alpha \in [2 \dd 2k+2]
\end{equation}
We show how to bound the number \eqref{eq:ub} in general.
\begin{claim}
For every sequence $(x_2,\ldots,x_{2k+2})$ of non-negative integers that satisfies the condition \eqref{eq:cond}, the value \eqref{eq:ub} is $\Oh(nk\log k)$.
\end{claim}
\begin{proof}
As long as there exists $\alpha \in [3 \dd 2k+2]$ such that $x_\alpha>13n$, we select $i \in [2 \dd \alpha)$ such that $x_i < 13n$ or $i=2$ and $x_i < 26n$, decrement $x_\alpha$ and increment $x_i$. Such $i$ exists by \eqref{eq:cond} and the pigeonhole principle. The operation does not change any lhs in \eqref{eq:cond} and increases \eqref{eq:ub}.

In the end, we have $x_2 \le 26n$ and $x_\alpha \le 13n$ for all $\alpha \in [3 \dd 2k+2]$. Hence, \eqref{eq:ub} is bounded as:
\[\sum_{\alpha=2}^{2k+2} \frac{3kx_\alpha}{\alpha-1}\ \le \ 78k+39nk\sum_{\alpha=2}^{2k+1} \frac{1}{\alpha}\ \le\ 78k+39nk(\ln(2k+1)+1)\ =\ \Oh(nk \log k).\qedhere\]
\end{proof}

By the claim, the sum of weights of edges in $G'$ is $\Oh(nk\log k)$. By \cref{lem:packages_MGR}, for each uniform $k$-run with period $\ell \ge 4k$ that is not induced by a generalised run, the total weight of edges incident to it is at least $\frac14$. This concludes that $|V'_1|=\Oh(nk \log k)$, so the total number of uniform $k$-runs (across all periods $\ell \in [1 \dd \floor{n/2}]$) is $|V_1|=\Oh(nk \log k)$.
\end{proof}

A $k$-run of period $\ell$ contains a prefix being a uniform $k$-run of period $\ell$, constructed by taking $k$-mismatch squares of length $2\ell$ at subsequent positions as long as their sets of $\ell$-mismatching positions are the same. By maximality, no two $k$-runs with equal period start at the same position, so this way each $k$-run produces a different uniform $k$-run. Thus the number of $k$-runs in $T$ does not exceed the number of uniform $k$-runs, which proves \cref{thm:KK++}. By the same argument, each $k$-repetition implies a unique uniform $k$-run and so the number of $k$-repetitions is $\Oh(nk \log k)$. (We refer the reader to the definition of $k$-repetition in \cite{DBLP:conf/esa/KolpakovK01,DBLP:journals/tcs/KolpakovK03,Kucherov2014} as this notion is not used below.)

\section{\texorpdfstring{$\approx$}{≈}-Suffix Trees and Their Applications}\label{sec:ST}
We introduce useful tools for the next two sections.
Cole and Hariharan~\cite{DBLP:journals/siamcomp/ColeH03a} defined a \emph{quasi-suffix collection} as a collection of strings $S_1,S_2,\ldots,S_n$ that satisfies the following conditions:
\begin{enumerate}
\item\label{it1qs} $|S_1| = n$ and $|S_i| = |S_{i-1}| - 1$.
\item\label{it2qs} No $S_i$ is a prefix of another $S_j$.
\item\label{it3qs} Suppose strings $S_i$ and $S_j$ have a common prefix of length $\ell > 0$. Then $S_{i+1}$ and $S_{j+1}$ have a common prefix of length at least $\ell - 1$.
\end{enumerate}
A quasi-suffix collection is specified implicitly by a character oracle that given $i$, $j$ returns $S_i[j]$. They obtain the following result.

\begin{theorem}[\cite{DBLP:journals/siamcomp/ColeH03a}]\label{thm:CH}
The compacted trie of a quasi-suffix collection of $n$ strings can be constructed in $\Oh(n)$ expected time assuming that the character oracle works in $\Oh(1)$ time.
\end{theorem}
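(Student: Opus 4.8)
The plan is to build the compacted trie incrementally, inserting the strings $S_1, S_2, \ldots, S_n$ in the given order and maintaining at each stage the compacted trie $\mathcal{T}_i$ of $\{S_1, \ldots, S_i\}$. By \cref{it2qs} no $S_i$ is a prefix of another, so inserting $S_i$ splits at most one existing edge and appends exactly one new leaf; the final trie therefore has $\Oh(n)$ nodes, and the only nontrivial cost is locating, for each $i$, the \emph{head} $h_i$ of $S_i$ — the locus (possibly inside an edge) of the longest prefix of $S_i$ that is also a prefix of some earlier $S_j$. Once $h_i$ is known, the split-and-attach step is $\Oh(1)$.

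The engine of efficiency is \cref{it3qs}, which plays the role of suffix links in McCreight's linear-time suffix-tree construction. Writing $\ell_i$ for the string depth of $h_i$ and letting $S_j$ (with $j<i$) be the earlier string realizing this longest shared prefix, \cref{it3qs} applied to $S_i,S_j$ yields that $S_{i+1}$ and $S_{j+1}$ share a prefix of length at least $\ell_i-1$; since $j+1 \le i$, the string $S_{j+1}$ is already present when $S_{i+1}$ is inserted, so $\ell_{i+1} \ge \ell_i-1$. Consequently, if after inserting $S_i$ we could jump to the locus at depth $\ell_i-1$ on the path of $S_{i+1}$ and then descend only the few further characters to reach $h_{i+1}$, the number of descend steps would telescope, $\sum_i\!\big((\ell_{i+1}-\ell_i)+1\big)=\Oh(n)$; combined with the skip/count trick (traversing a whole edge in $\Oh(1)$ by comparing one character and skipping by the edge length) this bounds the total navigation work by $\Oh(n)$.

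The main obstacle — and the reason the bound is \emph{expected} linear rather than worst case — is that the jumps above cannot be realized by exact suffix links. The abstraction supplies only the character oracle, and \cref{it3qs} guarantees merely a shared prefix of length $\ell_i-1$, not the structural identity ``$S_{i+1}=S_i$ with its first character deleted'' that an ordinary suffix tree enjoys; hence the target of each jump must be \emph{found} rather than followed. Following Cole and Hariharan, I would locate it with a randomized dynamic dictionary (a dynamic hash table) that stores every explicit node of the current trie keyed by a canonical handle of the string it spells, so that the locus of a prefix specified by a reference into the collection (a string index together with a length) can be retrieved in $\Oh(1)$ expected time and new nodes can be registered in $\Oh(1)$ expected time. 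A potential/amortization argument then charges the $\Oh(n)$ dictionary insertions to the $\Oh(n)$ nodes, the descend steps to the telescoping sum above, and the $\Oh(n)$ jump lookups to the insertions, so that the whole construction performs $\Oh(n)$ dictionary operations and $\Oh(n)$ oracle calls, each $\Oh(1)$ in expectation, for $\Oh(n)$ expected time overall. The delicate point in making this rigorous — and the technical heart of the Cole--Hariharan argument that goes beyond ordinary suffix trees — is to show that, even under the weak prefix guarantee of \cref{it3qs}, every required jump target can still be reached after $\Oh(1)$ amortized descend steps, so that the telescoping analysis survives.
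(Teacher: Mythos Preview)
The paper does not give its own proof of this theorem: \cref{thm:CH} is quoted verbatim from Cole and Hariharan and used as a black box (the paper only verifies, in the proof of \cref{cor:CH}, that the encoded suffixes form a quasi-suffix collection and then invokes \cref{thm:CH}). So there is no ``paper proof'' to compare against; your write-up is a sketch of the \emph{original} Cole--Hariharan argument rather than an alternative to anything in the present paper.

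As a sketch of that original argument, your outline is on the right track: the McCreight-style incremental insertion, the derivation of $\ell_{i+1}\ge \ell_i-1$ from condition~\ref{it3qs}, the resulting telescoping bound on descend steps, and the use of a randomized dictionary (dynamic perfect hashing on the children of each node) to explain the word ``expected'' are all faithful to the source. One point to be careful about if you ever flesh this out: in the actual construction the hashing is used not to key nodes by ``a canonical handle of the string they spell'' (no such handle is available from the oracle), but simply to store, at each explicit node, the outgoing edges indexed by their first character; the ``jump'' after inserting $S_i$ is realized by a suffix-link-like pointer maintained during the algorithm, followed by a rescan (skip/count) phase whose amortized cost is controlled by a node-count potential. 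Your last paragraph already flags this as the delicate part, which is appropriate for a proof plan.
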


\newcommand{\Code}{\mathit{Code}}
For a string $T$ of length $n$, the \emph{$\approx$-suffix tree} of $T$ is a compacted trie of strings
$\Code^\E(T[i \dd n])\#$, $i \in [1 \dd n+1]$, where 
$\Code^\E(X)=\E(X[1])\E(X[1 \dd 2])\cdots \E(X)$ for a string $X$ and $\#$ is an end-marker that is not an encoding of any string.
We extend the sequences $\pi$, $\rho$ so that for any string $T \in [0 \dd \sigma)^n$, after $\pi^\E(n,\sigma)$ preprocessing time, $\E(T[i \dd j])$ for any $i,j$ can be computed in $\rho^\E(n,\sigma)$ time.
By $\gamma^\E(n,\sigma)$ we denote an upper bound on the total number of distinct characters in the strings stored in the $\approx$-suffix tree of a string in $[0 \dd \sigma)^n$.
\cref{thm:CH} implies the following corollary. (A similar result, but only for order-preserving equivalence, was shown in \cite[Lemma 16]{DBLP:conf/cpm/GawrychowskiGL23}.)

\begin{corollary}\label{cor:CH}
Let $\approx$ be an SCER and $\E$ be its encoding function. The $\approx$-suffix tree of a string in $[0 \dd \sigma)^n$ can be constructed in worst case time:
\[\Oh(\pi^\E(n,\sigma)+n\rho^\E(n,\sigma)+\min\{n \log^2 \log n/\log\log\log n,\,n\gamma^\E(n,\sigma)\}).\]
\end{corollary}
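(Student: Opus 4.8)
The plan is to exhibit the $n+1$ strings stored in the $\approx$-suffix tree, namely $S_i := \Code^\E(T[i \dd n])\#$ for $i \in [1 \dd n+1]$, as a quasi-suffix collection, apply \cref{thm:CH}, and then account for two discrepancies with its hypotheses: our character oracle is not constant-time, and the construction must be made deterministic and worst-case over the integer alphabet of encoding values. Observe first that $|S_i| = n-i+2$, that for $j \le n-i+1$ we have $S_i[j] = \E(T[i \dd i+j-1])$, and that $S_i[n-i+2] = \#$.

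The first step is to verify the three defining properties. Property~\ref{it1qs} is immediate, since $|S_1| = n+1$ and $|S_i| = |S_{i-1}|-1$. Property~\ref{it2qs} holds because every $S_i$ ends with the marker $\#$, which occurs nowhere else and is not an encoding: if $S_i$ were a prefix of $S_j$, the trailing $\#$ of $S_i$ would have to align with a $\#$ inside $S_j$, forcing $|S_i| = |S_j|$ and hence $i = j$. Property~\ref{it3qs} is the crux. Suppose $S_i$ and $S_j$, with $i \neq j$, share a common prefix of length $\ell > 0$; by the argument just given this prefix consists only of encoding characters, so $\E(T[i \dd i+m-1]) = \E(T[j \dd j+m-1])$ for all $m \in [1 \dd \ell]$. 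By the encoding-function equivalence~\eqref{eq:E} this says precisely $T[i \dd i+\ell-1] \approx T[j \dd j+\ell-1]$. Since $\approx$ is an SCER, deleting the first character preserves the relation, so $T[i+1 \dd i+\ell-1] \approx T[j+1 \dd j+\ell-1]$, which by~\eqref{eq:E} means $S_{i+1}$ and $S_{j+1}$ agree on their first $\ell-1$ characters (the case $\ell=1$ being vacuous). This establishes the collection.

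The second step supplies the character oracle: evaluating $S_i[j] = \E(T[i \dd i+j-1])$ costs $\rho^\E(n,\sigma)$ time after $\pi^\E(n,\sigma)$ preprocessing, and returning $\#$ at the final position is free. Feeding this oracle into \cref{thm:CH} and charging the $\Oh(n)$ character accesses made by the construction yields the $\pi^\E(n,\sigma) + n\rho^\E(n,\sigma)$ contribution. The delicate part, which I expect to be the main obstacle, is that \cref{thm:CH} guarantees only \emph{expected} time and uses hashing that treats characters as constant-time comparable, whereas we need a worst-case bound over a potentially large integer alphabet of encoding values. I would remove the randomisation by replacing the dynamic dictionary that branches at trie nodes with a deterministic structure over the integer universe of encoding values, which gives two incomparable overheads spread over the $\Oh(n)$ nodes and edges of the trie: a general deterministic dynamic dictionary contributes $\Oh(n\log^2\log n/\log\log\log n)$, whereas exploiting that at most $\gamma^\E(n,\sigma)$ distinct characters ever occur lets one relabel them into a small range and use direct-addressed child tables for $\Oh(n\gamma^\E(n,\sigma))$. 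Taking the smaller of the two produces the claimed $\min$ term, and summing the three contributions gives the stated complexity.
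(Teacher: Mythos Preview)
Your proposal is correct and matches the paper's proof almost exactly: verify the quasi-suffix axioms via the SCER property and~\eqref{eq:E}, plug in the $\E$-oracle, and derandomise Cole--Hariharan by replacing the hashed child dictionaries with either a deterministic predecessor structure (the paper cites Andersson--Thorup for the $\log^2\log n/\log\log\log n$ bound) or a $\gamma^\E$-bounded per-node structure. The only cosmetic difference is in the second branch: the paper simply stores each node's children in an unsorted list, giving $\Oh(\gamma^\E)$ per lookup directly, whereas you relabel globally and use direct-addressed tables---this also works, but note that the relabeling itself needs some dictionary, so the list solution is more self-contained.
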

\begin{proof}
Let us verify that strings $\Code^\E(T[i \dd n])\#$ satisfy the conditions \ref{it1qs}-\ref{it3qs} of a quasi-suffix collection. Condition \ref{it1qs} for $n+1$ is obvious. Condition \ref{it2qs} follows due to the end-marker. As for condition \ref{it3qs}, assume $\LCP(S_i,S_j)=\ell>0$ and let $i \ne j$ as otherwise the conclusion is trivial. Then $\ell < |S_i|,|S_j|$ because of the end-marker. By equivalence \eqref{eq:E}, we have $T[i \dd i+\ell) \approx T[j \dd j+\ell)$. Because $\approx$ is an SCER, we have $T[i+1 \dd i+\ell) \approx T[j+1 \dd j+\ell)$. Hence, $\LCP(S_{i+1},S_{j+1})\ge\ell-1$ by equivalence \eqref{eq:E}.

An oracle for the quasi-suffix collection $S_i$ answers queries in $\Oh(\rho^\E(n,\sigma))$ time after $\Oh(\pi^\E(n,\sigma))$ preprocessing. The only source of randomness in the algorithm behind \cref{thm:CH} is the need to maintain, for each explicit node of the current tree, a dictionary indexed by the next character on an outgoing edge. If we store the respective characters per each node in a dynamic predecessor data structure of Andersson and Thorup~\cite{DBLP:journals/jacm/AnderssonT07}, the total space remains $\Oh(n)$ and each predecessor query is answered in $\Oh(\log^2 \log n/\log\log\log n)$ time in the worst case. Alternatively, one can store all the children of a node in a list, to achieve $\Oh(\gamma^\E(n,\sigma))$ space per an explicit node and $\Oh(\gamma^\E(n,\sigma))$ time for a predecessor query. The complexity follows.
\end{proof}

\begin{remark}
For SCERs mentioned in \cref{ex:particular}, we obtain the following time complexities for constructing $\approx$-suffix trees in the respective settings: $\Oh(n \log^2 \log n/\log\log\log n)$ for parameterized matching and Cartesian tree matching, $\Oh(n \sqrt{\log n})$ for palindrome matching (that matches the complexity from \cite{DBLP:journals/tcs/IIT13}), and $\Oh(n \log n /\log \log n)$ for order-preserving matching (a faster, $\Oh(n \sqrt{\log n})$-time construction was proposed in \cite{DBLP:journals/tcs/CrochemoreIKKLP16}).
\end{remark}

For two strings $X$ and $Y$, by $\LCP^\approx(X,Y)$ we denote $\max\{\ell \ge 0\,:\, X[1 \dd \ell] \approx Y[1 \dd \ell]\}$. As in the case of standard suffix trees, by equivalence \eqref{eq:E}, having an $\approx$-suffix tree of $T$ and the data structure answering lowest common ancestor queries for nodes in $\Oh(1)$ time after $\Oh(n)$ preprocessing~\cite{DBLP:journals/siamcomp/HarelT84,DBLP:conf/latin/BenderF00}, we can answer $\LCP^\approx$ queries about pairs of substrings of $T$ in $\Oh(1)$ time.

The longest previous $\approx$-factor array $\LPF^\approx$ for a string $T$ is an array such that
\[\LPF^\approx[i]=\max\{\ell \ge 0\,:\,T[i \dd i+\ell)\approx T[j \dd j+\ell)\text{ for some }j \in [1 \dd i)\}.\]
Computing this array can be stated in terms of the $\approx$-suffix tree: for a leaf with index $i$ of the tree we are to find a leaf with index $j < i$ such that the lowest common ancestor of the two leaves is as low as possible. The actual value $\LPF^\approx[i]$ is then the (weighted) depth of this ancestor. In \cite[Theorem 16]{DBLP:journals/tcs/CrochemoreIKKLP16} it was shown that this problem, stated for an arbitrary rooted (weighted) tree with $n$ leaves, can be solved in $\Oh(n)$ time. Thus, the $\LPF^\approx$ array for a length-$n$ string can be computed in $\Oh(n)$ time if the $\approx$-suffix tree is available.

\section{Counting p-Squares in \texorpdfstring{$\Oh(n \sigma \log \sigma)$}{O(nσ log σ)} Time}\label{sec:2}
In this section we consider $T \in [0 \dd \sigma)^n$ and $\approx$ denotes the relation of parameterized matching.
We use the following function $\EE$ for $\approx$ (see \cref{ex:EE}):
\[\EE(X)=|\alph(U)|\text{ where }U\text{ is the longest suffix of }X[1 \dd |X|)\text{ without letter }X[|X|].\]

\begin{example}\label{ex:EE}
$\EE(X)$ for $X=[1, 2, 1, 1, 2, 3, 2, 1, 4]$ is as follows:
\begin{center}\begin{tabular}{r|lllllllll}
$i$ & \small 1 & \small 2 & \small 3 & \small 4 & \small 5 & \small 6 & \small 7 & \small 8 & \small 9 \\ \hline
$X[i]$ & 1 & 2 & 1 & 1 & 2 & 3 & 2 & 1 & 4 \\
$\EE(X[1\dd i])$ & 0 & 1 & 1 & 0 & 1 & 2 & 1 & 2 & 3 \\
\end{tabular}\end{center}
\end{example}

A similar function was used in the context of parameterized matching in \cite{DBLP:journals/tcs/KociumakaRRW16}. \cref{lem:EE_enc} below shows that $\EE$  is an encoding function for parameterized matching and \cref{lem:EE_enc2} shows that it can be computed efficiently.

The ($\Leftarrow$) part of the proof of \cref{lem:EE_enc} is similar to the proof of \cite[Lemma 5.4]{DBLP:journals/tcs/KociumakaRRW16}.

\begin{lemma}\label{lem:EE_enc}
$\EE$ is an encoding function for parameterized matching.
\end{lemma}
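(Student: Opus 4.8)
The plan is to verify that $\EE$ satisfies the defining equivalence~\eqref{eq:E} of an encoding function, with $\approx$ taken to be parameterized matching. Throughout I will rely on the standard characterisation that $X \approx Y$ holds if and only if $|X|=|Y|$ and, for all positions $a,b$, one has $X[a]=X[b] \Leftrightarrow Y[a]=Y[b]$ (the two strings have the same ``pattern of equalities''). It is convenient to restate $\EE$ in terms of previous occurrences: for a prefix $X[1\dd i]$, let $p_i$ be the largest $j<i$ with $X[j]=X[i]$, and $p_i=0$ if there is no such $j$; then $X[p_i+1\dd i)$ is precisely the longest suffix of $X[1\dd i)$ avoiding the letter $X[i]$, so $\EE(X[1\dd i])=|\alph(X[p_i+1\dd i))|$. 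Define $q_i$ analogously for $Y$.

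The $(\Rightarrow)$ direction is the easy one. Since parameterized matching is an SCER, $X\approx Y$ gives $X[1\dd i]\approx Y[1\dd i]$ for every $i$, so it suffices to show that $X\approx Y$ implies $\EE(X)=\EE(Y)$. Fix a bijection $f$ with $f(X)=Y$; then $f$ sends $X[|X|]$ to $Y[|Y|]$, hence it maps the longest suffix of $X[1\dd |X|)$ avoiding $X[|X|]$ onto the longest suffix of $Y[1\dd |Y|)$ avoiding $Y[|Y|]$, and as a bijection it preserves the number of distinct letters of this suffix, giving $\EE(X)=\EE(Y)$.

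For the $(\Leftarrow)$ direction I would induct on the prefix length $i$, proving $X[1\dd i]\approx Y[1\dd i]$; the empty and length-one prefixes are trivial. Assuming $X[1\dd i-1]\approx Y[1\dd i-1]$, by the equality-pattern characterisation it remains to check $X[a]=X[i]\Leftrightarrow Y[a]=Y[i]$ for all $a<i$, and the whole step reduces to the single claim $p_i=q_i$. Indeed, once $p_i=q_i$: if $p_i=q_i=0$ then $X[i]$ and $Y[i]$ are both fresh letters, so both sides of the desired equivalence are false and the matching extends by a new assignment; and if $p_i=q_i\ge1$, then the induction hypothesis (in particular $X[a]=X[p_i]\Leftrightarrow Y[a]=Y[q_i]$, valid as $a,p_i,q_i<i$) together with $X[p_i]=X[i]$ and $Y[q_i]=Y[i]$ yields $X[a]=X[i]\Leftrightarrow X[a]=X[p_i]\Leftrightarrow Y[a]=Y[q_i]\Leftrightarrow Y[a]=Y[i]$.

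Proving $p_i=q_i$ is the \emph{main obstacle}, and this is where the value of $\EE$ enters. The induction hypothesis supplies a bijection $f$ with $f(X[a])=Y[a]$ for all $a<i$, hence $f$ maps every substring of $X[1\dd i-1]$ onto the corresponding substring of $Y[1\dd i-1]$ and preserves its number of distinct letters. I would first dispose of the degenerate case: if $q_i=0$ then $\EE(Y[1\dd i])=|\alph(Y[1\dd i-1])|=|\alph(X[1\dd i-1])|$, whereas if $p_i\ge1$ the letter $X[p_i]=X[i]$ belongs to $\alph(X[1\dd i-1])$ but not to $\alph(X[p_i+1\dd i))$, so $\EE(X[1\dd i])<|\alph(X[1\dd i-1])|$, contradicting $\EE(X[1\dd i])=\EE(Y[1\dd i])$; thus $q_i=0$ forces $p_i=0$, and symmetrically. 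For the remaining case $p_i,q_i\ge1$, suppose for contradiction that $p_i<q_i$. Combining $\EE(X[1\dd i])=\EE(Y[1\dd i])$ with $|\alph(Y[q_i+1\dd i))|=|\alph(X[q_i+1\dd i))|$ (from $f$) gives $|\alph(X[p_i+1\dd i))|=|\alph(X[q_i+1\dd i))|$; since $[q_i+1\dd i)\subseteq[p_i+1\dd i)$, the two alphabets coincide, so in particular $X[q_i]$ reoccurs at some $r\in[q_i+1\dd i)$. Applying $f$ gives $Y[r]=f(X[r])=f(X[q_i])=Y[q_i]=Y[i]$ with $q_i<r<i$, contradicting the maximality of $q_i$ as the last occurrence of $Y[i]$ before $i$. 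The case $p_i>q_i$ is symmetric (swapping $X$ and $Y$), so $p_i=q_i$, completing the induction and the proof.
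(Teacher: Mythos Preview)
Your proof is correct and follows essentially the same approach as the paper: both directions are handled the same way, with the $(\Leftarrow)$ direction done by induction on the prefix length, splitting on whether the last letter is fresh. The only cosmetic difference is in the non-fresh case: the paper observes directly that $\EE(X[1\dd i])=k$ identifies $X[i]$ as the $(k{+}1)$st letter of $\alph(X[1\dd i))$ when these letters are ordered by last occurrence (and likewise for $Y$), whence $f(X[i])=Y[i]$; you instead prove the equivalent statement $p_i=q_i$ via an alphabet-size/containment argument, which is a slightly more hands-on route to the same conclusion.
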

\begin{proof}
We show that $\EE$ satisfies the equivalence \eqref{eq:E} (as $\E$). The proof of part $(\Leftarrow)$ is by induction on $|X|$. For length 0 the claim is trivial. Suppose that it holds for all lengths strictly smaller than $x$ and let strings $X$, $Y$ with $|X|=|Y|=x$ satisfy the right hand side of \eqref{eq:E}. For $X'=X[1 \dd x)$ and $Y'=Y[1 \dd x)$, $X' \approx Y'$ with some witness bijection $f:\alph(X') \mapsto \alph(Y')$.

We consider two cases. First, suppose that
\[\EE(X)=\EE(Y)\ge |\alph(X')|=|\alph(Y')|.\]
By definition of $\EE$, this implies that $X[x] \not\in \alph(X')$ and $Y[x] \not\in \alph(Y')$. Consequently, $f$ can be extended with $X[x] \mapsto Y[x]$ which yields the witness bijection for equivalence of $X$ and $Y$.

Next, suppose that
\[\EE(X)=\EE(Y)=k<|\alph(X')|=|\alph(Y')|.\]
This means that $X[x]$ is the $k$th element of $\alph(X')$ ordered according to the last occurrence in $X'$ and $Y[x]$ is the $k$th element of $\alph(Y')$ ordered according to the last occurrence in $Y'$. Since $X' \approx Y'$, the relative positions of these last occurrences are the same. Hence, we can set $f(X[x])=Y[x]$ and $f$ is the witness bijection for equivalence of $X$ and $Y$.

\smallskip
We proceed to the proof of part ($\Rightarrow$) of equivalence \eqref{eq:E}. If $X \approx Y$, then clearly $|X|=|Y|$. Let $i \in [1 \dd |X|]$. As $\approx$ is an SCER, we have $X[1 \dd i] \approx Y[1 \dd i]$. Again, we consider two cases. If $X[i] \not\in \alph(X[1 \dd i))$, then $Y[i] \not\in \alph(Y[1 \dd i))$ and
\[\EE(X[1 \dd i])=\EE(Y[1 \dd i])=|\alph(X[1 \dd i))|=|\alph(Y[1 \dd i))|\]
since $X[1 \dd i) \approx Y[1 \dd i)$.
If $X[i] \in \alph(X[1 \dd i))$, let $X[j]=X[i]$ be the last occurrence of $X[i]$ in $X[1 \dd i)$. Then $Y[j]=Y[i]$ and
\[\EE(X[1 \dd i])=\EE(Y[1 \dd i])=|\alph(X[j+1 \dd i))|=|\alph(Y[j+1 \dd i))|\]
as $X[j+1 \dd i) \approx Y[j+1 \dd i)$.
\end{proof}

\begin{lemma}\label{lem:EE_enc2}
We have $\pi^\EE(n,\sigma)=\Oh(n\sigma)$, $\rho^\EE(n,\sigma)=\Oh(\sigma)$, and $\gamma^\EE(n,\sigma)=\sigma+1$.
\end{lemma}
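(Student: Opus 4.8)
The plan is to prove three separate bounds. For the preprocessing time $\pi^\EE(n,\sigma)=\Oh(n\sigma)$, I would design a left-to-right scan over $T$ that maintains, for the current prefix $T[1\dd i]$, enough information to evaluate $\EE(T[1\dd j])$ for every $j \le i$ quickly. The definition of $\EE(X)$ asks for the number of distinct letters in the longest suffix of $X[1\dd |X|)$ that avoids the last letter $X[|X|]$; equivalently, if $X[|X|]=c$ and $p$ is the position of the last occurrence of $c$ strictly before the final position (or $0$ if none), then $\EE(X)$ is the number of distinct letters occurring in the window $X[p+1\dd |X|)$. The natural plan is to precompute, for each position $i$, the previous-occurrence pointer $\mathit{prev}[i]$ (the largest $j<i$ with $T[j]=T[i]$, or $0$) in $\Oh(n)$ time by the same sort-based method used for $\mathit{prev}$-encoding in \cref{ex:particular}, and then to count the number of distinct letters in the open window $(\,\mathit{prev}[i]\dd i\,)$. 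Counting distinct letters in such windows is where the $\Oh(\sigma)$ factor should enter.

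Next I would argue the query bound $\rho^\EE(n,\sigma)=\Oh(\sigma)$. For the $\approx$-suffix tree we need $\EE(T[i\dd j])$ for arbitrary substrings, not just prefixes, so the key observation is that $\EE(T[i\dd j])$ depends only on the window $T[\max(i,q+1)\dd j)$, where $q$ is the last occurrence of the letter $T[j]$ strictly inside $T[i\dd j)$. Given $i$ and $j$, I would locate this last occurrence by walking back through the $\mathit{prev}$ pointers starting from $j$ until either the position drops below $i$ or we find the relevant earlier occurrence of $T[j]$; since the relevant window can contain at most $\sigma$ distinct letters and the count we return is bounded by $\sigma$, a direct scan of the window using a fresh boolean array of size $\sigma$ (or a timestamped array to avoid reinitialisation) counts distinct letters in $\Oh(\sigma)$ time. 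The main care here is to make the window scan genuinely $\Oh(\sigma)$ rather than $\Oh(\text{window length})$: the point is that we only need to certify the distinct-letter \emph{count}, which never exceeds $\sigma$, so with an appropriate data structure (for instance, the sorted previous-occurrence lists per letter computed in preprocessing) we touch only $\Oh(\sigma)$ positions.

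Finally, for $\gamma^\EE(n,\sigma)=\sigma+1$, I would bound the number of distinct characters appearing in the strings $\Code^\EE(T[i\dd n])\#$ stored in the $\approx$-suffix tree. Each such character is a value $\EE(T[i\dd j])$ for some substring, and by the definition $\EE$ returns the number of distinct letters in a window of a string over a $\sigma$-letter alphabet, so its value always lies in $[0\dd \sigma)$; adding the end-marker $\#$ gives at most $\sigma+1$ distinct characters overall. This is the easiest of the three and follows immediately from the codomain of $\EE$.

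I expect the query bound to be the main obstacle. The subtlety is that $\EE(T[i\dd j])$ is defined relative to the \emph{substring} $T[i\dd j)$, so the last occurrence of $T[j]$ must be taken within $[i\dd j)$, not within $[1\dd j)$; a naive use of the global $\mathit{prev}$ pointer can point to an occurrence before position $i$ and give the wrong window. Handling this correctly — deciding in $\Oh(\sigma)$ time whether $T[j]$ recurs inside $[i\dd j)$ and, if so, at which position — while simultaneously counting distinct letters in the resulting window without spending time proportional to the window length, is the delicate part. The preprocessing bound should then fall out as an amortised consequence of supporting these queries, since building whatever per-letter occurrence structures the query routine needs costs $\Oh(n\sigma)$ overall.
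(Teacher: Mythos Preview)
Your bound on $\gamma^\EE$ is exactly the paper's argument. Your identification of the window $T[i'\dd j)$ with $i'=\max(i,\mathit{prev}[j]+1)$ is also correct, and a single lookup of $\mathit{prev}[j]$ suffices to determine $i'$; there is no need to ``walk back through the $\mathit{prev}$ pointers'', since $\mathit{prev}[j]$ already is the last occurrence of $T[j]$ before $j$.

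The genuine gap is in the query procedure. You correctly observe that the distinct-letter count in the window is at most $\sigma$, but that fact alone does not let you compute it in $\Oh(\sigma)$ time: the window $T[i'\dd j)$ can have length $\Theta(n)$ even when it contains only one distinct letter, so any position-by-position scan is $\Theta(n)$ in the worst case. Your fallback suggestion of ``sorted previous-occurrence lists per letter'' would need a binary search per letter and give $\Oh(\sigma\log n)$, not $\Oh(\sigma)$. What is missing is a concrete data structure that answers ``does letter $c$ occur in $T[a\dd b)$?'' in $\Oh(1)$ time.

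The paper's solution is the simplest possible one: for every $c\in[0\dd\sigma)$ store the prefix-count array $\mathit{cnt}_c[0\dd n]$ with $\mathit{cnt}_c[i]=|\{j\le i:T[j]=c\}|$. This takes $\Oh(n\sigma)$ time and space to build (hence $\pi^\EE(n,\sigma)=\Oh(n\sigma)$), and then the number of distinct letters in $T[i'\dd j)$ is simply $|\{c:\mathit{cnt}_c[j-1]>\mathit{cnt}_c[i'-1]\}|$, computable in $\Oh(\sigma)$ time (hence $\rho^\EE(n,\sigma)=\Oh(\sigma)$). Once you add this one ingredient, your outline becomes the paper's proof.
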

\begin{proof}
Let $T \in [0 \dd \sigma)^n$. By definition, $\EE(T[i \dd j]) \in [0 \dd \sigma)$ for any indices $i,j$. Hence, $\gamma^\EE(n,\sigma)=\sigma+1$ (including the sentinel). In order to compute encodings of substrings of $T$, we can store for every $c \in [0 \dd \sigma)$, the numbers of characters $c$ in respective prefixes of $T$. Moreover, for every index $i$, we store the position $\mathit{prev}[i] \in [0 \dd i)$ of the last occurrence of character $T[i]$ in $T[1 \dd i)$; $\mathit{prev}[i]=0$ if there is no such occurrence. 
The array $\mathit{prev}$ can be computed from left to right in $\Oh(n+\sigma)$ time by storing an array of size $\sigma$ of rightmost occurrences of each character. Then indeed $\pi^\EE(n,\sigma)=\Oh(n\sigma)$. When computing $\EE(T[i \dd j])$, we can compare the counts of every character $c$ at prefixes $T[1 \dd j)$ and $T[1 \dd i')$ 
where $i'=\max(\mathit{prev}[j]+1,i)$. Hence, $\rho^\EE(n,\sigma)=\Oh(\sigma)$.
\end{proof}

We define strings $\TR$, $\TL$ of length $n$ such that
\[\TR[i]=\EE(T[1 \dd i]),\quad \TL[i]=\EE((T[i \dd n])^R).\]
Our goal is to reduce reporting non-equivalent p-squares to computing uniform $k$-runs. We use two auxiliary lemmas.

\begin{lemma}\label{lem:comb2}
If $T[i \dd i+2\ell)$ is a p-square, then $\TR[i \dd i+2\ell)$ and $\TL[i \dd i+2\ell)$ are $\sigma$-mismatch squares.
\end{lemma}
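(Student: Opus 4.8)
The plan is to establish the claim for $\TR$ directly and then obtain the claim for $\TL$ by a reversal argument. Write $P=T[i \dd i+\ell)$ and $Q=T[i+\ell \dd i+2\ell)$. Since $T[i \dd i+2\ell)$ is a p-square, $P \approx Q$, so there is a bijection $f$ with $f(P)=Q$; in particular $T[j+\ell]=f(T[j])$ for every $j \in [i \dd i+\ell)$. To prove that $\TR[i \dd i+2\ell)$ is a $\sigma$-mismatch square it suffices to show that $\Ham(\TR[i \dd i+\ell),\TR[i+\ell \dd i+2\ell)) \le \sigma$, i.e.\ that $\TR[j]=\EE(T[1 \dd j])$ and $\TR[j+\ell]=\EE(T[1 \dd j+\ell])$ differ for at most $\sigma$ indices $j \in [i \dd i+\ell)$.

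The starting observation is that $\EE$ is \emph{local}: if $p$ denotes the last occurrence of $T[j]$ in $T[1 \dd j)$ (and $p=0$ if there is none), then by definition $\EE(T[1 \dd j])=|\alph(T[p+1 \dd j))|$. The main step is to show that $\TR[j]=\TR[j+\ell]$ whenever $T[j]$ occurs somewhere in $T[i \dd j)$, that is, whenever $j$ is not the first position of its letter inside $P$. In that case, taking $p \in [i \dd j)$ to be the last occurrence of $T[j]$ in $P$, I would argue that (i) $p+\ell$ is the last occurrence of $T[j+\ell]$ in $T[1 \dd j+\ell)$ and (ii) $T[p+\ell+1 \dd j+\ell)=f(T[p+1 \dd j))$; both follow from $f(P)=Q$ together with the fact that $f$ is a bijection, which makes the first/last occurrence patterns of $P$ and $Q$ coincide and preserves the number of distinct letters in corresponding windows. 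Consequently $\EE(T[1 \dd j])=|\alph(T[p+1 \dd j))|=|\alph(T[p+\ell+1 \dd j+\ell))|=\EE(T[1 \dd j+\ell])$. The only indices left unaccounted for are those $j$ at which $T[j]$ makes its first appearance inside $P$; there are exactly $|\alph(P)| \le \sigma$ of them, one per distinct letter, so at most $\sigma$ mismatches arise and $\TR[i \dd i+2\ell)$ is a $\sigma$-mismatch square.

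For $\TL$, I would exploit that $\TL[j]=\EE((T[j \dd n])^R)$ equals the corresponding entry of the $\TR$ array of the reversed string $T^R$ at the mirror position, and that the reversal of a p-square is again a p-square (from $P \approx Q$ one gets $Q^R \approx P^R$). Applying the already-proved statement for $\TR$ to $T^R$ then yields the claim for $\TL$. I expect the main obstacle to be the careful alignment in the main step: verifying that the previous-occurrence position shifts by exactly $\ell$ and that the two look-back windows are related by $f$, which is precisely where the bijection structure of the parameterized match must be used rather than mere equality of $\EE$-values.
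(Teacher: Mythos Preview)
Your proposal is correct and follows essentially the same argument as the paper. Both proofs identify the at most $\sigma$ potential mismatch positions as the first occurrences of each letter inside $P$, and for every other position use the parameterized match to show that the previous-occurrence position shifts by exactly $\ell$ and that the two look-back windows have the same alphabet size; the paper phrases this last step via the SCER property while you invoke the bijection $f$ directly, and your explicit reversal argument for $\TL$ is a valid unpacking of what the paper simply calls ``symmetric''.
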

\begin{proof}
We show a proof that $\TR[i \dd i+2\ell)$ is a $\sigma$-mismatch square; the proof that $\TL[i \dd i+2\ell)$ is a $\sigma$-mismatch square is symmetric.

Let $i_1,\ldots,i_t \in [0 \dd \ell)$, for $t \in [1 \dd \sigma]$, be the positions of leftmost occurrences of characters from $[0 \dd \sigma)$ in $T[i \dd i+\ell)$; if some character is not present in $T[i \dd i+\ell)$, it is not included in the sequence. Let $j \in [0 \dd \ell) \setminus \{i_1,\ldots,i_t\}$. Let $j' \in [0 \dd j)$ be the maximum index such that $T[i+j']=T[i+j]$, so $T[i+\ell+j']=T[i+\ell+j]$ and $T[i+\ell+j''] \ne T[i+\ell+j]$ for all $j'' \in [j'+1 \dd j)$ because $T[i \dd i+\ell) \approx T[i+\ell \dd i+2\ell)$. Then
\[\TR[i+j]\,=\,|\alph(T[i+j'+1 \dd i+j))|\,=\,|\alph(T[i+\ell+j'+1 \dd i+\ell+j))|\,=\,\TR[i+\ell+j]\]
by the fact that $\approx$ is an SCER.
Hence, $\TR[i \dd i+\ell)$ and $\TR[i+\ell \dd i+2\ell)$ have at most $t \le \sigma$ mismatches, as required.
\end{proof}

\begin{lemma}\label{lem:comb1}
If $T[i \dd i+2\ell)$ is a p-square and $\TR[i+\ell]=\TR[i+2\ell]$, then $T[i+1 \dd i+2\ell]$ is a p-square.
Similarly, if $T[i \dd i+2\ell)$ is a p-square and $\TL[i-1]=\TL[i+\ell-1]$, then $T[i-1 \dd i+2\ell-1)$ is a p-square.
\end{lemma}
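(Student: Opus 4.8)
The plan is to prove the first statement directly from the encoding-function characterization \eqref{eq:E} for $\EE$ (valid by \cref{lem:EE_enc}), and to obtain the second statement by reverse-string symmetry. Write $A = T[i \dd i+\ell)$ and $B = T[i+\ell \dd i+2\ell)$, so the hypothesis is that $A \approx B$. The target square $T[i+1 \dd i+2\ell]$ splits into halves $A' = T[i+1 \dd i+\ell+1)$ and $B' = T[i+\ell+1 \dd i+2\ell+1)$, and note that $A' = A[2 \dd \ell]\,T[i+\ell]$ and $B' = B[2 \dd \ell]\,T[i+2\ell]$. By \eqref{eq:E} it suffices to show $\EE(A'[1 \dd m]) = \EE(B'[1 \dd m])$ for every $m \in [1 \dd \ell]$.

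For $m < \ell$ this is immediate: $A'[1 \dd m] = A[2 \dd m+1]$ and $B'[1 \dd m] = B[2 \dd m+1]$ are substrings of $A$ and $B$ at equal positions, so $A \approx B$ together with the SCER property gives $A[2 \dd m+1] \approx B[2 \dd m+1]$, and applying \eqref{eq:E} once more (at full length) yields $\EE(A'[1 \dd m]) = \EE(B'[1 \dd m])$. Thus the entire difficulty concentrates in the case $m = \ell$, i.e.\ in showing $\EE(A') = \EE(B')$, and this is exactly where the hypothesis $\TR[i+\ell] = \TR[i+2\ell]$ enters.

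The key step I would establish is the ``capping'' identity
\[\EE(A') = \min\bigl(\TR[i+\ell],\,|\alph(A[2 \dd \ell])|\bigr),\]
and symmetrically $\EE(B') = \min(\TR[i+2\ell],\,|\alph(B[2 \dd \ell])|)$. The argument: both $\EE(A')$ and $\TR[i+\ell]$ count distinct characters in the maximal suffix avoiding $T[i+\ell]$ that ends at position $i+\ell-1$, the only difference being that $\EE(A')$ looks back only as far as position $i+1$ whereas $\TR[i+\ell]$ may look back to position $1$. If $T[i+\ell]$ reoccurs inside $T[i+1 \dd i+\ell)$ the two suffixes coincide, so $\EE(A') = \TR[i+\ell] \le |\alph(A[2\dd\ell])|$; otherwise the suffix for $\EE(A')$ is the whole window $A[2\dd\ell]$, giving $\EE(A') = |\alph(A[2\dd\ell])| \le \TR[i+\ell]$. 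In both cases the stated minimum is attained. Since $A[2 \dd \ell] \approx B[2 \dd \ell]$ and parameterized matching preserves the number of distinct characters, $|\alph(A[2\dd\ell])| = |\alph(B[2\dd\ell])|$; combined with $\TR[i+\ell] = \TR[i+2\ell]$ the two minima are equal, so $\EE(A') = \EE(B')$. By \eqref{eq:E}, $A' \approx B'$, so $T[i+1 \dd i+2\ell]$ is a p-square.

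For the second statement I would not repeat the computation but apply the first statement to the reverse string $T^R$: parameterized matching, hence the p-square property, is preserved under reversal, and $\TL$ of $T$ coincides with $\TR$ of $T^R$ after reindexing, so a left shift of a p-square in $T$ under the condition $\TL[i-1]=\TL[i+\ell-1]$ corresponds to a right shift in $T^R$ under the matching $\TR$-condition. I expect the main obstacle to be precisely the $m=\ell$ case, namely justifying the capping identity: $\EE(A')$ is a \emph{substring} encoding whereas $\TR$ stores only \emph{prefix} encodings, and the hypothesis controls only the latter, so the whole argument hinges on showing that the discrepancy between substring and prefix encodings is exactly a cap by the window's alphabet size, which is equal on the two halves.
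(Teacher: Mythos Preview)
Your proof is correct and follows essentially the same approach as the paper's: both reduce the $m<\ell$ prefixes to the SCER property and handle $m=\ell$ by a two-case analysis depending on whether $T[i+\ell]$ reoccurs in the window $T[i+1\dd i+\ell)$. Your ``capping identity'' $\EE(A')=\min(\TR[i+\ell],|\alph(A[2\dd\ell])|)$ is a clean way of packaging these two cases into a single formula, whereas the paper keeps them separate (in one case arguing $\TR[i+\ell]=\EE(T[i+1\dd i+\ell])$ directly, in the other observing that both last characters are fresh and extending the witness bijection); the underlying combinatorics is the same.
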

\begin{proof}
Again, we only prove the first part of the lemma.
We have $T[i \dd i+\ell) \approx T[i+\ell \dd i+2\ell)$, so $T[i+1\dd i+\ell) \approx T[i+1+\ell \dd i+2\ell)$ since $\approx$ is an SCER. If
\[\TR[i+2\ell]=\TR[i+\ell] < |\alph(T[i+1 \dd i+\ell))|=|\alph(T[i+1+\ell\dd i+2\ell))|,\]
then $\TR[i+\ell]=\EE(T[i+1 \dd i+\ell])$, $\TR[i+2\ell]=\EE(T[i+1+\ell \dd i+2\ell])$, and $T[i+1 \dd i+\ell] \approx T[i+1+\ell \dd i+2\ell]$ by the fact that $\EE$ is an encoding function for parameterized matching (\cref{lem:EE_enc}). Otherwise,
\[T[i+\ell] \not\in\alph(T[i+1 \dd i+\ell))\quad\text{and}\quad T[i+2\ell] \not\in\alph(T[i+1+\ell \dd i+2\ell)),\]
so we immediately obtain $T[i+1 \dd i+\ell] \approx T[i+1+\ell \dd i+2\ell]$. In both cases, $T[i+1 \dd i+2\ell]$ is a p-square.
\end{proof}

If $T[a \dd b)$ is a uniform $k$-run with period $\ell$, then there is no $\ell$-mismatching position in $T[a+\ell \dd b-\ell)$, i.e., $T[a+\ell \dd b-\ell)=T[a+2\ell \dd b)$.
Hence, if $\TR[a \dd b)$ is a uniform $k$-run with period $\ell$ and $T[a \dd a+2\ell)$ is a p-square, then by \cref{lem:comb1}, each string $T[i \dd i+2\ell)$ for $i \in [a \dd b-2\ell]$ is a p-square.
With this intuition, we are ready to obtain the reduction.

\begin{lemma}\label{lem:red}
Let $T \in [0 \dd \sigma)^n$.
Reporting non-equivalent p-squares in $T$ reduces in $\Oh(n\sigma+r)$ time to computing uniform $\sigma$-runs in $\TR$ and $\TL$, where $r$ is the number of these $\sigma$-runs.

Reporting p-squares in $T$ that are distinct as strings reduces to the same problem in $\Oh(n\sigma+r+\mathsf{output})$ time, where $\mathsf{output}$ is the number of p-squares reported.
\end{lemma}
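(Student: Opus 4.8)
The plan is to assemble three ingredients and then glue them together. First, in $\Oh(n\sigma)$ time I would compute the arrays $\TR$ and $\TL$ via \cref{lem:EE_enc2}, build the $\approx$-suffix tree with an LCA structure so that $\LCP^\approx$ of two suffixes (hence a single-query test ``is $T[i \dd i+2\ell)$ a p-square?'') is available in $\Oh(1)$ (by \cref{cor:CH} and \cref{lem:EE_enc2} this is $\Oh(n\sigma)$), and in $\Oh(n)$ time compute both $\LPF^\approx$ and the ordinary $\LPF$. Then I would call the oracle to obtain the uniform $\sigma$-runs of $\TR$ and of $\TL$; let $r$ be their total number. From the $\TR$-runs I precompute, for each period $\ell$, the next $\ell$-mismatching position of $\TR$ to the right of any index, and symmetrically for $\TL$.

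The structural heart is to recover all p-square occurrences from the runs without ever scanning an interval interior. By \cref{lem:comb2} every p-square occurrence is a $\sigma$-mismatch square in both $\TR$ and $\TL$, hence lives inside uniform $\sigma$-runs of both. By \cref{lem:comb1}, a right slide $i\mapsto i+1$ is licensed exactly when $\TR[i+\ell]=\TR[i+2\ell]$; inside a uniform $\sigma$-run of $\TR$ all such slides are licensed. I would therefore \emph{seed} at the left end of every uniform $\sigma$-run of $\TL$ (there are $\Oh(r)$ of them), test it for being a p-square in $\Oh(1)$, and, if so, record the interval stretching rightward up to the next $\ell$-mismatch of $\TR$ (read off in $\Oh(1)$): by \cref{lem:comb1} this whole stretch consists of p-squares. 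Correctness of \emph{coverage} rests on the claim that the leftmost p-square of each maximal right-slide stretch is a $\TL$-run left end: if its left neighbour is not a p-square this follows from the contrapositive of the left-slide half of \cref{lem:comb1}, and the remaining case (two consecutive p-squares separated by a broken $\TR$-slide) is handled by the bridging fact that, for two consecutive p-squares, the $\TR$-slide condition $\TR[i+\ell]=\TR[i+2\ell]$ holds \emph{iff} the $\TL$-slide condition $\TL[i]=\TL[i+\ell]$ holds. This yields the p-square positions as $\Oh(r)$ explicit intervals in total time $\Oh(n\sigma+r)$.

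For reporting I emit one representative per equivalence class (resp.\ per string) by the previous-factor test: $T[i \dd i+2\ell)$ is the globally leftmost occurrence of its $\approx$-class iff $\LPF^\approx[i]<2\ell$, and leftmost as a string iff $\LPF[i]<2\ell$. On each recorded interval the windows are parameterized-periodic of period $\ell$, so their $\approx$-classes (resp.\ their string values) recur with the minimal parameterized (resp.\ ordinary) period $q\le\ell$, and the $q$ rotations within one period are pairwise non-equivalent (resp.\ distinct). I would walk an interval only until this recurrence is detected—an $\Oh(q)$ walk, the repeat being caught by one $\LCP^\approx$ (resp.\ $\LCP$) comparison against the window $q$ positions back—emitting exactly those windows passing the $\LPF^\approx$ (resp.\ $\LPF$) test. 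The emitted non-equivalent classes number $<n\sigma$, and the distinct-as-strings case additionally pays $\output$.

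The step I expect to be the main obstacle is precisely the interaction in the last two paragraphs. First, the slide-equivalence lemma must be proved: it bridges the prefix-based encoding governing $\TR$ with the suffix-based encoding governing $\TL$, using the same prefix-versus-substring $\EE$-analysis that underlies \cref{lem:comb1}. Second, one must justify the output-sensitive accounting $\sum_{\text{intervals}} q=\Oh(n\sigma+r)$, i.e.\ that the per-interval walks telescope; this is delicate because the naive per-occurrence scan is $\Theta(n^2)$ (e.g.\ on $a^n$), and—as small examples show—the leftmost-of-class positions do \emph{not} form a prefix of a p-square interval (one sees interleaved non-leftmost/leftmost patterns), so short-circuiting is valid only after refining into right-slide intervals and exploiting their parameterized periodicity together with the $\LPF^\approx$ filter.
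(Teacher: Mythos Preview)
Your plan has the right pieces, but the two steps you yourself flag as obstacles are genuine gaps, and the paper bypasses both with much simpler machinery.

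\textbf{Finding the p-square intervals.} The paper does not seed at $\TL$-run left ends and slide via $\TR$. Instead, for each period $\ell$ it takes the family $\PR_\ell$ of starting-position intervals of uniform $\sigma$-runs in $\TR$ and the analogous family $\PL'_\ell$ coming from $\TL$, and forms all pairwise intersections $\R_\ell$; there are still only $\Oh(r)$ intervals in total. The key claim (Claim~2 in the paper) is that each $[a,b]\in\R_\ell$ is all-or-nothing for p-squares: inside the enclosing $\TR$-run there is no $\ell$-mismatch on the stretch $[a+\ell,b+\ell)$, so the right half of \cref{lem:comb1} pushes a p-square at $a$ to every $i\in[a,b]$; symmetrically the enclosing $\TL$-run and the left half of \cref{lem:comb1} push a p-square at any $i$ back to $a$. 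One $\LCP^\approx$ query per interval then decides. This eliminates your slide-equivalence lemma entirely; that lemma equates a prefix-based $\EE$-value with a suffix-based one through positions that may lie far outside the square, and I do not see a short proof, nor am I confident it holds.

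\textbf{Reporting.} The paper does not walk intervals or detect periodicity. It builds a Range-Minimum-Query structure over $\LPF^\approx$ and, on each p-square interval $[a,b]$, recursively extracts all positions with $\LPF^\approx[i]<2\ell$: query the minimum; if it is $\ge 2\ell$ stop; otherwise report that position and recurse on the two sides. This costs $\Oh(1)$ per reported square plus $\Oh(1)$ per interval, hence $\Oh(r+n\sigma)$ directly from the $<n\sigma$ bound on non-equivalent p-squares. Your periodicity-walk is shakier than you indicate: the claim that on a p-square interval the windows' $\approx$-classes ``recur with minimal parameterized period $q\le\ell$'' is not obvious---parameterized equivalence does not concatenate like exact equality, so two consecutive p-squares sharing a half need not be $\approx$-equivalent as length-$2\ell$ strings---and even granting it, the accounting $\sum q=\Oh(n\sigma+r)$ has no evident telescoping across different $\ell$. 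The RMQ trick is the missing idea; swapping $\LPF^\approx$ for ordinary $\LPF$ handles the distinct-as-strings variant in the same way.
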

\begin{proof}
For a uniform $k$-run $T[a \dd b)$ of period $\ell$, we define its \emph{interval} as $[a \dd b-2\ell]$.
For each $\ell \in [1 \dd \floor{n/2}]$, let $\PR_\ell$ and $\PL_\ell$ be sets of intervals of uniform $\sigma$-runs of period $\ell$ in $\TR$ and $(\TL)^R$, respectively. Let $\PL'_\ell=\{[n-b-2\ell\dd n-a-2\ell]\,:\,[a\dd b] \in \PL_\ell\}$. Finally, let $\P_\ell=\left(\bigcup \PR_\ell\right) \cap \left(\bigcup \PL'_\ell\right)$.

\begin{claim}\label{clm:1}
If $T[i \dd i+2\ell)$ is a p-square in $T$, then $i \in \P_\ell$.
\end{claim} 
\begin{proof}
By \cref{lem:comb2}, if $T[i \dd i+2\ell)$ is a p-square in $T$, then $\TR[i \dd i+2\ell)$ and $\TL[i \dd i+2\ell)=(\TL)^R[n-i-2\ell \dd n-i)$ are $\sigma$-mismatch squares. Therefore, there exist intervals $[a \dd b] \in \PR_\ell$ and $[a' \dd b'] \in \PL_\ell$ such that $i \in [a \dd b]$ and $n-i-2\ell \in [a' \dd b']$. In particular, $i \in \bigcup \PR_\ell$. We have $[n-b'-2\ell \dd n-a'-2\ell] \in \PL'_\ell$ and $i \in [n-b'-2\ell \dd n-a'-2\ell]$, so $i \in \bigcup \PL'_\ell$.
\end{proof}

We store $\P_\ell$ as a union of non-empty intervals of the form $\{I \cap J\,:\,I \in \PR_\ell,\,J \in \PL'_\ell\}$. Let this representation be denoted as $\R_\ell$.

All the representations $\R_\ell$ can be computed from $\PR_\ell$ and $\PL'_\ell$ in $\Oh(n+\sum_\ell(|\PR_\ell|+|\PL_\ell|))$ total time. Let us bucket sort all endpoints of intervals in $\PR_\ell$ and $\PL'_\ell$, for each $\ell$. When processing the endpoints for a given $\ell$, we keep track of the number of intervals containing a given position. This counter never exceeds 2 as intervals in each of $\PR_\ell$ and $\PL'_\ell$ are pairwise disjoint. Whenever the counter reaches 2, for some endpoint $a$, it will drop at the next endpoint encountered, say $b$. Then $[a \dd b)$ is inserted into $\R_\ell$.

The next claim shows that, for each interval in $\R_\ell$, all positions correspond to p-squares of length $2\ell$ or none of them does.

\begin{claim}\label{clm:2}
Let $[a \dd b] \in \R_\ell$. For each $i \in [a \dd b]$, $T[i \dd i+2\ell)$ is a p-square if and only if $T[a \dd a+2\ell)$ is a p-square.
\end{claim}
\begin{proof}
Let us fix $i \in [a \dd b]$. Let $[a' \dd b'] \in \PR_\ell$ and $[a'' \dd b''] \in \PL'_\ell$ be intervals such that $[a \dd b]=[a' \dd b'] \cap [a'' \dd b'']$.

Assume first that $T[a \dd a+2\ell)$ is a p-square.
As $\PR_\ell$ was computed from uniform $\sigma$-runs, we have $\TR[a'+2\ell \dd b'+2\ell)=\TR[a'+\ell \dd b'+\ell)$, so $\TR[a+2\ell \dd i+2\ell)=\TR[a+\ell \dd i+\ell)$.  By \cref{lem:comb1} applied for each subsequent position in $[a \dd i)$, $T[i \dd i+2\ell)$ is a p-square.

Now assume that $T[i \dd i+2\ell)$ is a p-square.
As $[a'' \dd b''] \in \PL'_\ell$, $[n-b''-2\ell \dd n-a''-2\ell] \in \PL_\ell$. By definition, we have $(\TL)^R[n-b'' \dd n-a'')=(\TL)^R[n-b''-\ell\dd n-a''-\ell)$, i.e., $\TL[a'' \dd b'')=\TL[a''+\ell \dd b''+\ell)$. In particular, $\TL[a \dd i)=\TL[a+\ell \dd i+\ell)$. By \cref{lem:comb1} applied for each position in $[a+1 \dd i]$ in decreasing order, $T[a \dd a+2\ell)$ is a p-square.
\end{proof}

By \cref{cor:CH,lem:EE_enc,lem:EE_enc2}, after $\Oh(n\sigma)$ preprocessing we can check if a given even-length substring of $T$ is a p-square in $\Oh(1)$ time using an $\LCP^\approx$ query. For each $\ell \in [1 \dd \floor{n/2}]$ and each interval $[a \dd b] \in \R_\ell$, we check if $T[a \dd a+2\ell)$ is a p-square. By \cref{clm:2}, if so, we obtain an interval of occurrences of p-squares, and if not, then none of the positions in $[a \dd b]$ is the start of a p-square of length $2\ell$ in $T$. The total number of intervals in $\R_\ell$ is linear in the number of uniform $\sigma$-runs in $\TR$ and $\TL$, so we obtain $\Oh(r)$ intervals of occurrences of p-squares. By \cref{clm:1}, we do not miss any p-squares.

The final step of reporting non-equivalent p-squares mimics an analogous algorithm for standard squares of Bannai, Inenaga, and Köppl~\cite{DBLP:conf/cpm/BannaiIK17}. We report non-equivalent p-squares at their leftmost occurrence in $T$ and use the $\LPF^\approx$ array to identify them in intervals of occurrences. More precisely, we compute the $\LPF^\approx$ array in $\Oh(n\sigma)$ time and construct in $\Oh(n)$ time a data structure for answering Range Minimum Queries (RmQs) over $\LPF^\approx$ in $\Oh(1)$ time per query; see \cite{DBLP:conf/latin/BenderF00,DBLP:journals/siamcomp/HarelT84}. For each of the $\Oh(r)$ intervals $[a \dd b]$ of starting positions of p-squares of length $2\ell$, we want to list all positions $i \in [a \dd b]$ such that $\LPF^\approx[i]<2\ell$ and report corresponding leftmost occurrences of p-squares $T[i \dd i+2\ell)$. We ask an RmQ on $\LPF^\approx$ on the interval $[a \dd b]$; let the minimum be attained for $i \in [a \dd b]$. If $\LPF^\approx[i] \ge 2\ell$, the algorithm is finished. Otherwise, we report $T[i \dd i+2\ell)$ and run the algorithm recursively on intervals $[a \dd i-1]$ and $[i+1 \dd b]$. The total time complexity is proportional to the number of reported p-squares plus 1. By~\cite{DBLP:conf/spire/HamaiTNIB24}, $T$ contains $\Oh(n\sigma)$ non-equivalent p-squares. This completes an $\Oh(n\sigma+r)$-time reduction to computing uniform $\sigma$-runs.

When reporting all p-squares that are distinct as strings, it suffices to replace the $\LPF^\approx$ array with the standard $\LPF=\LPF^=$ array. Such an array can be computed in $\Oh(n)$ time after the letters of the string have been sorted~\cite{DBLP:journals/ipl/CrochemoreI08}. Then the computations take $\Oh(n\sigma+r+\mathsf{output})$ time.
\end{proof}

By \cref{thm:KK++uni}, string $\TR$ (and $\TL$) contains $\Oh(n \sigma \log \sigma)$ uniform $\sigma$-runs. They can be computed in $\Oh(n \sigma \log \sigma)$ time; we use the algorithm of Kolpakov and Kucherov~\cite{DBLP:conf/esa/KolpakovK01,DBLP:journals/tcs/KolpakovK03} to compute all $\sigma$-runs in $\TR$ and then partition each $\sigma$-run to uniform $\sigma$-runs using kangaroo jumps. That is, let $\TR[a \dd b)$ be a $\sigma$-run with period. We compute two values:
\begin{align*}
    d_1 &= 1+\LCP(\TR[a \dd b-2\ell), \TR[a+\ell \dd b-\ell))\\
    d_2 &= 1+\LCP(\TR[a+\ell \dd b-\ell), \TR[a+2\ell \dd b))
\end{align*}
that, intuitively, find the first $\ell$-mismatching position in the $\sigma$-run, if any ($d_1$), and the first $\ell$-mismatching position after position $a+\ell-1$, if any ($d_2$). Let $d=\min(d_1,d_2)$. We then report a uniform $\sigma$-run $\TR[a \dd a+d+2\ell)$ and continue processing the (non-maximal) $\sigma$-run $\TR[a+d \dd b)$ until its length drops below $2\ell$. The $\LCP$-queries in $(\TR)^R$ are answered in $\Oh(1)$ time~\cite{DBLP:conf/latin/BenderF00,DBLP:journals/siamcomp/HarelT84}. Thus \cref{lem:red} implies \cref{thm:main2}.

\section{Counting Generalised Squares in \texorpdfstring{$\Oh(n \log n)$}{O(n log n)} Time}\label{sec:1}
In this section we show an algorithm that counts non-equivalent $\approx$-squares, for an SCER $\approx$ with encoding function $\E$, in $\Oh(\pi^\E(n)+n\rho^\E(n)+n\log n)$ time. 

For a string $T$ of length $n$ and positive integer $p \le n/2$, we denote
\[\Squares_p^\approx=\{i \in [1 \dd n-2p+1]\,:\,T[i \dd i+2p)\text{ is an $\approx$-square}\}.\]
An \emph{interval representation} of a set $X$ of integers is
$X=[i_1\dd j_1] \cup [i_2\dd j_2] \cup \dots \cup [i_t\dd j_t]$,
where $j_1+1 < i_2$, \ldots, $j_{t-1}+1 < i_t$; $t$ is called the \emph{size} of the representation.

Counterparts of the following lemma corresponding to order-preserving squares and Cartesian-tree squares were shown in \cite{DBLP:journals/iandc/GourdelKRRSW20} and \cite{DBLP:journals/tcs/ParkBALP20}, respectively. Our proof generalises these proofs; it can be found in \cref{app:sqp}.

\begin{restatable}{lemma}{lemsqp}\label{lem:esqupe}
Let $\approx$ be an SCER and $\E$ be its encoding function. Given a string $T$ of length $n$, the interval representations of the sets $\Squares_p^\approx$ for all $1 \le p \le n/2$ have total size $\Oh(n \log n)$ and can be computed in $\Oh(\pi^\E(n) +n\rho^\E(n)+n \log n)$ time.
\end{restatable}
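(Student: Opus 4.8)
The plan is to turn every membership test ``$i\in\Squares_p^\approx$'' into a single forward $\approx$-longest-common-prefix query and then to read off the canonical interval representation one period at a time. By equivalence~\eqref{eq:E}, $i\in\Squares_p^\approx$ exactly when $T[i\dd i+p)\approx T[i+p\dd i+2p)$, which holds iff $g_p(i):=\LCP^\approx(T[i\dd n],T[i+p\dd n])\ge p$. So I would first build the $\approx$-suffix tree of $T$ through \cref{cor:CH}, augment it with a linear-time LCA structure, and thereby answer each $\LCP^\approx$ query (hence each ``does position $i$ start an $\approx$-square of period $p$?'' test) in $\Oh(1)$ time. This fits the stated budget: the combinatorial construction term $\Oh(n\log^2\log n/\log\log\log n)$ is $o(n\log n)$, leaving the encoding cost $\Oh(\pi^\E(n)+n\rho^\E(n))$ together with the additive $\Oh(n\log n)$.

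The engine of the per-period computation is the quasi-suffix property verified in the proof of \cref{cor:CH}: applied to $S_i=\Code^\E(T[i\dd n])\#$ it gives $g_p(i+1)\ge g_p(i)-1$, i.e.\ $g_p$ can drop by at most one as the window slides right. Two consequences drive the algorithm. First, whenever $g_p(i)=G\ge p$, the SCER substring axiom applied to the matched length-$G$ windows $T[i\dd i+G)\approx T[i+p\dd i+p+G)$ shows that \emph{all} of $[i\dd i+G-p]$ lie in $\Squares_p^\approx$; this \emph{forward extension} turns one $\approx$-square into a whole block. Second, a position $b$ ends a maximal run of consecutive square starts precisely when $g_p(b)=p$ and $g_p(b+1)=p-1$ (the slope bound forces these exact values), i.e.\ when $\E(T[b\dd b+p])\ne\E(T[b+p\dd b+2p])$ although all shorter prefix encodings agree. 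I would therefore, for each fixed $p$, sample $\Theta(n/p)$ anchors, grow forward-extension blocks from the anchors, and merge the (overlapping) blocks into the maximal runs of consecutive square starts that form the canonical representation of $\Squares_p^\approx$. Summed over all periods the $\Theta(n/p)$ anchors already account for the $\Oh(n\log n)$ term, and the scan is powered entirely by $\Oh(1)$-time $\LCP^\approx$ queries; the final counting of non-equivalent squares (and of squares distinct as strings) then mimics the RMQ-over-$\LPF^\approx$ (resp.\ $\LPF$) sweep of \cref{sec:2}, adding only $\Oh(n\log n)$.

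The combinatorial heart, and the main obstacle, is the bound: the total number of maximal runs, summed over all $p$, must be $\Oh(n\log n)$. For exact matching this is immediate, because the periodicity lemma forces every run of consecutive square starts to be a single periodic region that a single pair of forward/backward extensions recovers in $\Oh(1)$ time. For a general SCER this is simply \emph{false}: already for parameterized matching two consecutive $\approx$-square starts of period $p$ need not combine into a length-$(p{+}1)$ $\approx$-match (e.g.\ $T=1\,2\,1\,3\,4$ has $\approx$-square starts at both leading positions for $p=2$, yet the length-$3$ windows $121$ and $134$ do not parameterized-match), so a run need not be a periodic region, blocks of forward extension may have length one indefinitely, and short runs can fall entirely between two $p$-spaced anchors. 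Hence I cannot lean on classical periodicity. The route I would take is to generalise the order-preserving and Cartesian-tree arguments of \cite{DBLP:journals/iandc/GourdelKRRSW20,DBLP:journals/tcs/ParkBALP20}: express each run boundary as the encoding-mismatch event $g_p(b)=p,\ g_p(b+1)=p-1$ described above, and charge these events against the structure of the quasi-suffix collection $\{S_i\}$, proving a runs-type counting theorem for such collections (the weakening from true suffixes to quasi-suffixes is exactly what turns the classical $\Oh(n)$ bound into $\Oh(n\log n)$). Making this charging quantitatively tight, and simultaneously ensuring the enumeration recovers the short runs missed by sparse anchors within the same $\Oh(n\log n)$ time, is where essentially all the difficulty of \cref{lem:esqupe} resides; the reduction and the $\LCP^\approx$/$\LPF^\approx$ bookkeeping around it are routine.
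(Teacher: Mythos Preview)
Your characterisation of run endpoints is correct and is in fact exactly what the paper uses: a right endpoint of a maximal interval of $\Squares_p^\approx$ is a position $b$ with $g_p(b)=p$ (the paper calls $T[b\dd b+2p)$ a \emph{right non-extendible} $\approx$-square), and filtering with $g_p(b+1)=p-1$ gives the \emph{right non-shiftable} ones. What is missing is the one observation that makes both the count and the enumeration immediate: $g_p(i)=p$ holds iff, in the $\approx$-suffix tree, the lowest common ancestor of the leaves labelled $i$ and $i+p$ is a branching node of string-depth exactly $p$. Thus ``non-extendible $\approx$-squares'' are precisely the pairs of leaves $(i,i+p)$ whose LCA has weight $p$, and the Stoye--Gusfield branching-tandem-repeat procedure (small-to-large over subtrees: for each internal node $v$ with heavy child $u_1$, iterate over leaves $i$ outside $u_1$ and test $i\pm \mathrm{depth}(v)$) lists all of them in $\Oh(n\log n)$ time and, simultaneously, proves there are $\Oh(n\log n)$ of them. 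The interval endpoints of all $\Squares_p^\approx$ are a subset of these, so the size bound follows for free; one further $\LCP^\approx$ check per reported pair filters down to the non-shiftable ones, and sorting left/right endpoints per $p$ gives the interval representations.

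Relative to this, your anchor-based scan is the wrong tool. You yourself note that under a general SCER a run of consecutive $\approx$-square starts need not be a periodic region, so forward extension from $\Theta(n/p)$ anchors can genuinely miss entire short runs; there is no repair for this within the anchor framework without already knowing the endpoints. Likewise, the proposed ``runs-type counting theorem for quasi-suffix collections'' is not a proof; it is the hard part restated. The suffix-tree/LCA reformulation you already have is all that is needed---the $\Oh(n\log n)$ bound is not a separate combinatorial theorem but the standard small-to-large cost of enumerating LCA-constrained leaf pairs in a tree.
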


\newcommand{\RangeCount}{\mathsf{RangeCount}}
We count non-equivalent $\approx$-squares using the $\LPF^\approx$ array. We would like to count an $\approx$-square at the position of its leftmost occurrence. 
For an integer array $A[1 \dd n]$, we denote a range count query for $i,j \in [1 \dd n]$, $x \in \mathbb{Z}$ as 
$\RangeCount_A(i,j,x)\,=\,|\{k \in [i \dd j]\,:\,A[k]<x\}|$.
Then our problem reduces to computing the sum
\begin{equation}\label{eq:sumRC}
\sum_{p=1}^{\floor{n/2}}\sum_{[i \dd j] \in \Squares_p} \RangeCount_{\LPF^\approx}(i,j,2p).
\end{equation}

We say that an array $A[1 \dd n]$ is a \emph{linear oscillation array} if $\sum_{i=1}^{n-1} |A[i+1]-A[i]| = \Oh(n)$. The following fact is folklore for the $\LPF$ array; for completeness, we prove it for $\LPF^\approx$.

\begin{lemma}\label{lem:loLPF}
For any SCER $\approx$, $\LPF^\approx$ is a linear oscillation array.
\end{lemma}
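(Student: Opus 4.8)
The plan is to show that consecutive entries of $\LPF^\approx$ differ by a bounded amount on average, so that the sum of absolute differences telescopes to $\Oh(n)$. The key structural fact I would exploit is the standard monotonicity-style relation between $\LPF^\approx[i]$ and $\LPF^\approx[i+1]$ familiar from the classical $\LPF$ array: extending a previous-factor match by one position on the left typically costs at most one unit, so $\LPF^\approx[i+1] \ge \LPF^\approx[i]-1$. Concretely, if $T[i \dd i+\ell) \approx T[j \dd j+\ell)$ for some $j<i$ realising $\LPF^\approx[i]=\ell$, then because $\approx$ is an SCER (condition (2) of an SCER, applied to drop the first character), we have $T[i+1 \dd i+\ell) \approx T[j+1 \dd j+\ell)$, and since $j+1 < i+1$ this witnesses $\LPF^\approx[i+1] \ge \ell-1$. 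Thus the downward jumps are each bounded by $1$.

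Having established $A[i+1] \ge A[i]-1$ for $A=\LPF^\approx$, I would turn the one-sided bound into a bound on total variation by the usual telescoping trick. First I would write $A[i+1]-A[i] = (A[i+1]-A[i])^+ - (A[i+1]-A[i])^-$, where the negative part is at most $1$ per index by the above, so $\sum_{i} (A[i+1]-A[i])^- \le n-1 = \Oh(n)$. Then I would use
\[
\sum_{i=1}^{n-1}|A[i+1]-A[i]| \;=\; \sum_{i=1}^{n-1}(A[i+1]-A[i])^+ + \sum_{i=1}^{n-1}(A[i+1]-A[i])^-,
\]
and observe that the positive and negative parts satisfy $\sum (A[i+1]-A[i])^+ = \sum (A[i+1]-A[i])^- + (A[n]-A[1])$ by telescoping the signed sum. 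Since every entry of $\LPF^\approx$ lies in $[0 \dd n]$, we have $|A[n]-A[1]| \le n$, so the positive part is also $\Oh(n)$ and the total variation is $\Oh(n)$, as required.

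The main obstacle I anticipate is purely in nailing down the $\LPF^\approx[i+1] \ge \LPF^\approx[i]-1$ inequality rigorously for an arbitrary SCER, rather than for ordinary equality. The subtlety is that the witnessing previous occurrence at position $j$ for index $i$ must remain a \emph{previous} occurrence (i.e.\ start strictly before $i+1$) after shifting, which holds because $j<i$ gives $j+1<i+1$; and that the SCER closure under taking the suffix $[2 \dd \ell]$ of the equivalence is exactly condition (2) instantiated with the interval $[2 \dd \ell]$. I would state this cleanly as the one lemma-internal claim and then let the telescoping argument run mechanically. No case analysis on the particular relation is needed, so once the SCER-shift claim is in place the rest is routine.
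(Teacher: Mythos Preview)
Your proposal is correct and follows essentially the same approach as the paper: first establish $\LPF^\approx[i+1]\ge \LPF^\approx[i]-1$ via the SCER property applied to drop the first character of a witnessing match, then bound the total variation by splitting into positive and negative parts and telescoping. The paper's proof is organized identically, with the same one-line SCER argument and the same $g$/$s$ (your $(\cdot)^+$/$(\cdot)^-$) telescoping bound.
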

\begin{proof}
Let us note that for every $i \in [1 \dd n)$, we have
\begin{equation}\label{eq:ip1}
\LPF^{\approx}[i+1] \ge \LPF^{\approx}[i]-1.
\end{equation}
Indeed, let $\ell=\LPF^{\approx}[i]$. If $\ell \le 1$, \cref{eq:ip1} is trivial. If $\ell>1$, there exists an index $j \in [1 \dd i)$ such that $T[j \dd j+\ell)\approx T[i \dd i+\ell)$. As $\approx$ is an SCER, we have $T[j+1 \dd j+\ell) \approx T[i+1 \dd i+\ell)$, which proves that $\LPF^{\approx}[i+1] \ge \ell-1$.

We will show that every array $A[1 \dd n]$ that satisfies $A[i+1] \ge A[i]-1$ for all $i \in [1 \dd n)$ and has values in $[0 \dd n)$ is a linear oscillation array.
Let us denote
\[g=\sum_{i=1}^{n-1} \max(A[i+1]-A[i],0),\quad s=\sum_{i=1}^{n-1} \max(A[i]-A[i+1],0).\]
By $A[i] \le A[i+1]+1$, we have $s < n$. Further,
\[A[n]-A[1]=\sum_{i=1}^{n-1} (A[i+1]-A[i]) = g-s.\]
We have $A[n]-A[1]<n$, so $g<s+n<2n$. Therefore, $\sum_{i=1}^{n-1} |A[i+1]-A[i]|=g+s < 3n$, which concludes the proof.
\end{proof}

We will now show that the sum from \cref{eq:sumRC} can be computed in $\Oh(n \log n)$ time. We use a very simple abstract problem.

\defdsproblem{Counting Problem}{
A set $Y \subseteq [1 \dd n]$, initially empty, and an integer $\ell$, initially $\ell=0$.
}{One of: (1) insert an element $x \in [1 \dd n]$ to $Z$; (2) delete an element $x \in Y$ from $Y$; (3) increment $\ell$ by 1; (4) decrement $\ell$ by 1. After each operation, output $|Y \cap [\ell+1 \dd n]|$.}

The Counting Problem can be solved with a basic indicator array for $Y$.

\begin{lemma}\label{lem:ACP}
After $\Oh(n)$-time preprocessing, each operation in the Counting Problem can be answered in $\Oh(1)$ time.
\end{lemma}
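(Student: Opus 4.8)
The plan is to maintain an indicator array together with a single running counter and to update both in constant time per operation. Concretely, I would store a Boolean array $\mathrm{ind}[1 \dd n]$, where $\mathrm{ind}[x]=1$ iff $x \in Y$, together with a counter $c$ that always holds the value $|Y \cap [\ell+1 \dd n]|$ to be reported. The preprocessing consists only of allocating $\mathrm{ind}$ and initialising all its entries to $0$, and setting $c=0$ and $\ell=0$; this takes $\Oh(n)$ time and matches the empty initial configuration in which $Y=\emptyset$ and $\ell=0$.

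For each operation I would update $\mathrm{ind}$, $\ell$, and $c$ so as to preserve the invariant $c=|Y \cap [\ell+1 \dd n]|$, and then output $c$. For an insertion of $x$ I set $\mathrm{ind}[x] \gets 1$ and, provided this actually flips the bit and $x>\ell$, increment $c$; a deletion of $x$ is symmetric, setting $\mathrm{ind}[x] \gets 0$ and decrementing $c$ when the bit flips and $x>\ell$. For the two operations on $\ell$, the key observation is that changing $\ell$ by one alters the reported range $[\ell+1 \dd n]$ at exactly one endpoint. Incrementing $\ell$ removes the element $\ell+1$ from the counted range, so I first set $c \gets c - \mathrm{ind}[\ell+1]$ and then set $\ell \gets \ell+1$; decrementing $\ell$ adds the element $\ell$ to the counted range, so I first set $c \gets c + \mathrm{ind}[\ell]$ and then set $\ell \gets \ell-1$. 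Each case touches $\Oh(1)$ cells of $\mathrm{ind}$ and performs $\Oh(1)$ arithmetic, so every operation runs in $\Oh(1)$ time.

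There is no genuine obstacle here; the only point requiring care is the bookkeeping of which boundary element enters or leaves the range $[\ell+1 \dd n]$ on an increment or decrement of $\ell$, and in particular performing the counter update before adjusting $\ell$ (equivalently, reading $\mathrm{ind}[\ell+1]$ on an increment versus $\mathrm{ind}[\ell]$ on a decrement). Correctness of every reported value then follows from the invariant $c=|Y \cap [\ell+1 \dd n]|$, which each of the four update rules preserves by construction, so a routine induction on the operation sequence completes the argument.
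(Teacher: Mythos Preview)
Your proposal is correct and is essentially identical to the paper's own proof: maintain an indicator array for $Y$ together with a running counter $|Y\cap[\ell+1\dd n]|$, and update both in $\Oh(1)$ per operation by touching only the single boundary element affected when $\ell$ changes. Your write-up is in fact slightly more careful about the order of updating $\ell$ versus reading the indicator array than the paper's.
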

\begin{proof}
We store an indicator array of bits $C[1 \dd n]$ such that $C[i]=1$ if and only if $i \in Y$. Initially, $C \equiv 0$.
We also store a value $a=|Y \cap [\ell+1 \dd n]|$; initially $a=0$. After each operation, the current value of $a$ is returned.

When we insert an element $x \in [1 \dd n]$ to $Y$, we set $C[x]$ to 1 and increment $a$ if $x>\ell$. 
Symmetrically, when we delete $x \in Y$ from $Y$, we set $C[x]$ to 0 and decrement $a$ if $x<\ell$. When we increment $\ell$, we subtract $C[\ell]$ from $a$. When we decrement $\ell$, we add $C[\ell]$ to $a$.
\end{proof}

We are ready to obtain the final main result.

\begin{proof}[Proof of \cref{thm:main1}]
First we show how to compute the number of non-equivalent $\approx$-squares in a length-$n$ string. The $\LPF^\approx$ array can be computed in $\Oh(\pi^\E(n)+n\rho^\E(n)+n \log^2 \log n/\log \log \log n)$ time (\cref{cor:CH}).
We compute the interval representations of the sets $\Squares_p^\approx$ using \cref{lem:esqupe} in $\Oh(\pi^\E(n)+n\rho^\E(n)+n \log n)$ time.
The interval representations have total size $\Oh(n \log n)$.
We need to compute the sum of results of $\Oh(n \log n)$ $\RangeCount$ queries as in \cref{eq:sumRC}. We will do it using the Counting Problem.

Let us bucket sort all start and end points of intervals across all sets $\Squares_p^\approx$ in $\Oh(n \log n)$ total time. We create an instance of the Counting Problem. For each position $k$ from 1 to $n$, we proceed as follows. First, for each interval $[k \dd j] \in \Squares_p^\approx$, for any $p$, we insert $2p$ to the set $Y$. Then, we change the current value $\ell$ in the problem to $\LPF^\approx[k]$ by performing increments or decrements, as appropriate. Afterwards, we add the returned value of the Counting Problem to the final result. Finally, for each interval $[i \dd k] \in \Squares_p^\approx$, for any $p$, we delete $2p$ from the set $Y$.

For each $p \in [1 \dd \floor{n/2}]$, the intervals in $\Squares_p$ are pairwise disjoint. By \cref{lem:esqupe}, $\Oh(n \log n)$ insertions and deletions are performed in the Counting Problem. The total number of increments and decrements is bounded by $\Oh(n)$ by \cref{lem:loLPF} (and the fact that the values in this array are in $[0 \dd n)$). In total, the sum \eqref{eq:sumRC} is computed in $\Oh(n \log n)$ time. We obtain the first part of \cref{thm:main1}.

As before, if one wishes to compute all $\approx$-squares that are distinct as substrings, it suffices to replace the $\LPF^\approx$ array in the algorithm by the standard $\LPF=\LPF^=$ array.
We obtain the second part of \cref{thm:main1}.
\end{proof}

\bibliographystyle{plainurl}
\bibliography{references} 

\appendix

\section{Proof of \texorpdfstring{\cref{lem:esqupe}}{Lemma 25}}\label{app:sqp}
Let $\approx$ be an SCER.
We say that an $\approx$-square $T[i\dd i+2p)$ is \emph{right shiftable} if
$T[i+1\dd i+2p]$ is an $\approx$-square and \emph{right non-shiftable} otherwise.
Similarly, we say that the $\approx$-square is \emph{left shiftable} if $T[i-1\dd i+2p-2]$ is an $\approx$-square
and \emph{left non-shiftable} otherwise.

An $\approx$-square $T[i\dd i+2p)$ is called \emph{right extendible} if $T[i\dd i+p] \approx T[i+p\dd i+2p]$ and \emph{right non-extendible} otherwise.
Similarly, the $\approx$-square is called \emph{left extendible} if $T[i-1\dd i+p) \approx T[i+p-1\dd i+2p)$ and \emph{left non-extendible} otherwise.
See \cref{fig:ext}.
The fact that $\approx$ is an SCER implies the following.

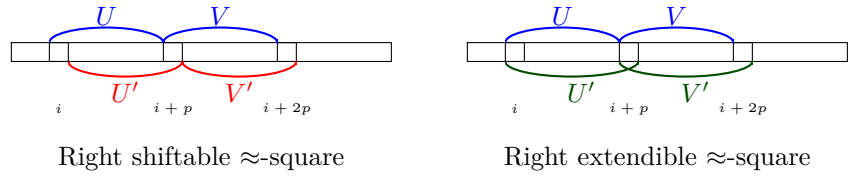
\begin{figure}[htpb]
\centering
\begin{tikzpicture}

\begin{scope}
\draw (0,0) rectangle (5, 0.25);
\draw (0.5,0) rectangle +(0.25,0.25) node[midway,yshift=-0.75cm] {\tiny $i$};
\draw (2,0) rectangle +(0.25,0.25) node[midway,yshift=-0.75cm] {\tiny $i+p$};
\draw (3.5,0) rectangle +(0.25,0.25) node[midway,yshift=-0.75cm] {\tiny $i+2p$};

\begin{scope}
    \clip(0.5,0.25) rectangle +(3cm,2.0cm);
    \draw[thick,decorate,blue, decoration={bumps, segment length=3cm, amplitude=0.2cm}]
        (0.5,0.25)--+(10,0);
    \node[blue] at (1.25,0.6) {$U$};
    \node[blue] at (2.75,0.6) {$V$};
\end{scope}

\begin{scope}
    \clip(0.75,0) rectangle +(3cm,-2.0cm);
    \draw[thick,decorate,red, decoration={bumps, mirror, segment length=3cm, amplitude=0.2cm}]
        (0.75,0)--+(10,0);
    \node[red] at (1.5,-0.4) {$U'$};
    \node[red] at (3.0,-0.4) {$V'$};
\end{scope}

\node at (2.5,-1) [below] {Right shiftable $\approx$-square};
\end{scope}


\begin{scope}[xshift=6cm]
\draw (0,0) rectangle (5, 0.25);
\draw (0.5,0) rectangle +(0.25,0.25) node[midway,yshift=-0.75cm] {\tiny $i$};
\draw (2,0) rectangle +(0.25,0.25) node[midway,yshift=-0.75cm] {\tiny $i+p$};
\draw (3.5,0) rectangle +(0.25,0.25) node[midway,yshift=-0.75cm] {\tiny $i+2p$};

\begin{scope}
    \clip(0.5,0.25) rectangle +(3cm,2.0cm);
    \draw[thick,decorate,blue, decoration={bumps, segment length=3cm, amplitude=0.2cm}]
        (0.5,0.25)--+(10,0);
    \node[blue] at (1.25,0.6) {$U$};
    \node[blue] at (2.75,0.6) {$V$};
\end{scope}

\begin{scope}
    \clip(0.5,0) rectangle +(1.75cm,-2.0cm);
    \draw[thick,decorate,green!30!black, decoration={bumps, mirror, segment length=3.5cm, amplitude=0.2cm}]
        (0.5,0)--+(10,0);
\end{scope}
\begin{scope}
    \clip(2.0,0) rectangle +(1.75cm,-2.0cm);
    \draw[thick,decorate,green!30!black, decoration={bumps, mirror, segment length=3.5cm, amplitude=0.2cm}]
        (2.0,0)--+(10,0);
\end{scope}

\node[green!30!black] at (1.5,-0.4) {$U'$};
\node[green!30!black] at (3.0,-0.4) {$V'$};
\node at (2.5,-1) [below] {Right extendible $\approx$-square};

\end{scope}

\end{tikzpicture}
\caption{Right shiftable square and right extendible square (if $U\approx V$ and $U'\approx V'$).}\label{fig:ext}
\end{figure}

\begin{observation}
A right (left) non-shiftable $\approx$-square is also right (left) non-extendible.
\end{observation}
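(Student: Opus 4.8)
The plan is to prove the contrapositive: a right \emph{extendible} $\approx$-square is right \emph{shiftable} (and, symmetrically, a left extendible square is left shiftable). So I would start by assuming that $T[i \dd i+2p)$ is a right extendible $\approx$-square, which by definition means $T[i \dd i+p] \approx T[i+p \dd i+2p]$. Note that each of these two strings has length $p+1$. The goal is to deduce that $T[i+1 \dd i+2p]$ is an $\approx$-square, i.e.\ that its two halves $T[i+1 \dd i+p]$ and $T[i+p+1 \dd i+2p]$ satisfy $\approx$.

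The key---and essentially the only---step is to invoke property (2) in the definition of an SCER, which lets me pass to an aligned substring of two $\approx$-equivalent strings while preserving $\approx$. I would apply it to the two length-$(p+1)$ strings $T[i \dd i+p]$ and $T[i+p \dd i+2p]$, taking the index range that discards the first character of each (i.e.\ local positions $2$ through $p+1$). This yields $T[i+1 \dd i+p] \approx T[i+p+1 \dd i+2p]$. These are exactly the two length-$p$ halves of $T[i+1 \dd i+2p]$, so the shifted fragment is an $\approx$-square and the original square is right shiftable, as desired.

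For the left statement I would argue by the mirror image: from left extendibility $T[i-1 \dd i+p) \approx T[i+p-1 \dd i+2p)$ (again two strings of length $p+1$) I apply property (2) with the range that discards the \emph{last} character of each (local positions $1$ through $p$), obtaining $T[i-1 \dd i+p-1) \approx T[i+p-1 \dd i+2p-1)$, which are the two halves of the left-shifted square $T[i-1 \dd i+2p-2]$. Hence left extendible implies left shiftable.

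I do not expect a genuine obstacle here; all of the content sits in the substring-consistency axiom, and the statement is really just an unfolding of the definitions. The only thing that needs care is the index bookkeeping at the boundary: making sure that trimming the single overlapping character from the length-$(p+1)$ \emph{extended} halves lands exactly on the length-$p$ halves of the \emph{shifted} square (dropping the leading character on the right, and the trailing character on the left).
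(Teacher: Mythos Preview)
Your proof is correct and matches the paper's intent: the paper does not spell out a proof at all, simply flagging that ``the fact that $\approx$ is an SCER implies the following'' and leaving the observation unproved. Your contrapositive argument via property~(2) of an SCER is exactly the one-line unpacking the paper has in mind.
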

Note that the converse is not necessarily true, i.e.\ some right (left) non-extendible
$\approx$-squares might be right (left) shiftable.

We start with the following lemma (based on \cite[Lemma 18]{DBLP:journals/tcs/CrochemoreIKKLP16}).

\begin{lemma}\label{lem:nonext_lcp}
Let $T$ be a string. Then $T[i\dd i+2p)$ is a non-extendible $\approx$-square if and only if $\LCP^\approx(T[i \dd n],T[i+p \dd n])=p$.
\end{lemma}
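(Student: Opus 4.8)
The plan is to prove the two implications of the biconditional directly from the definitions, using the fact that $\approx$ is an SCER in both directions. Recall that $T[i\dd i+2p)$ is right extendible iff $T[i\dd i+p]\approx T[i+p\dd i+2p]$ and left extendible iff $T[i-1\dd i+p)\approx T[i+p-1\dd i+2p)$; non-extendible means neither holds. The key observation, which I would establish first, is that the quantity $\LCP^\approx(T[i\dd n],T[i+p\dd n])$ measures exactly how far the $\approx$-match between the two suffixes starting at $i$ and $i+p$ extends. Since $T[i\dd i+2p)$ is an $\approx$-square, we have $T[i\dd i+p)\approx T[i+p\dd i+2p)$, so this $\LCP^\approx$ is at least $p$; the content of the lemma is that it equals $p$ precisely when the square cannot be extended on either side.

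For the ($\Leftarrow$) direction, I would assume $\LCP^\approx(T[i\dd n],T[i+p\dd n])=p$ and show non-extendibility. Right extendibility would give $T[i\dd i+p]\approx T[i+p\dd i+2p]$, i.e.\ a common $\approx$-prefix of length $p+1$, forcing the $\LCP^\approx$ to be at least $p+1$, a contradiction. Left non-extendibility requires more care because the left-extension condition $T[i-1\dd i+p)\approx T[i+p-1\dd i+2p)$ concerns the suffixes starting one position earlier, at $i-1$ and $i+p-1$, rather than at $i$ and $i+p$. Here I would use the SCER property: if the square were left extendible, then applying condition (2) of an SCER to drop the first character of each arm would yield $T[i\dd i+p)\approx T[i+p\dd i+2p)$ (which we already know) but, more to the point, I would argue via the contrapositive established in the ($\Rightarrow$) direction, so that the two directions together cover both sides. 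The cleanest route is to prove the equivalence ``right extendible OR left extendible $\Leftrightarrow$ $\LCP^\approx > p$'' by handling the two extension directions separately and relating each to the $\LCP^\approx$ value.

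For the ($\Rightarrow$) direction I would assume $T[i\dd i+2p)$ is non-extendible and show the $\LCP^\approx$ equals $p$. It is at least $p$ since the square means the length-$p$ prefixes $\approx$-match. Suppose for contradiction it were some $\ell\ge p+1$; then $T[i\dd i+\ell)\approx T[i+p\dd i+p+\ell)$. Taking the length-$(p+1)$ prefix and invoking the SCER substring-consistency property (condition (2)), I would extract $T[i\dd i+p]\approx T[i+p\dd i+2p]$, which is exactly right extendibility, contradicting non-extendibility. The main obstacle, and the step requiring the most attention, is correctly bridging the index shift between the $\LCP^\approx$ of suffixes at $i$ and $i+p$ and the \emph{left}-extendibility condition, which is phrased in terms of position $i-1$; I expect the honest resolution is that this lemma pairs the $\LCP^\approx$ with \emph{right}-extendibility directly, while left-extendibility is caught by a symmetric argument on the reversed string or by the observation that a non-extendible square is simultaneously right- and left-non-extendible, so I would state carefully which suffixes the $\LCP^\approx$ controls and verify that both extension directions are genuinely ruled out by the single equality $\LCP^\approx=p$ together with the SCER axioms.
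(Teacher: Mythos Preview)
Your core reasoning for the right-hand side is correct and matches the paper's argument: $\LCP^\approx(T[i\dd n],T[i+p\dd n])\ge p$ because the square gives $T[i\dd i+p)\approx T[i+p\dd i+2p)$, and $\LCP^\approx\ge p+1$ is equivalent (via the SCER prefix property) to $T[i\dd i+p]\approx T[i+p\dd i+2p]$, i.e.\ right extendibility. The paper phrases this through the encoding $\Code^\E$, but the content is identical.

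Where you go astray is in reading ``non-extendible'' as ``left non-extendible \emph{and} right non-extendible.'' In this lemma (and in the paper's own proof) the term means \emph{right} non-extendible only; the left case is handled later by the symmetric argument on the reversal, exactly as you yourself suspect. Of the two resolutions you float at the end, the first one is the correct one. Your alternative plan to ``verify that both extension directions are genuinely ruled out by the single equality $\LCP^\approx=p$'' cannot succeed: $\LCP^\approx(T[i\dd n],T[i+p\dd n])=p$ says nothing about positions $i-1$ and $i+p-1$. A concrete counterexample with $\approx$ being ordinary equality is $T=\texttt{aaab}$, $i=2$, $p=1$: here $\LCP(T[2\dd 4],T[3\dd 4])=\LCP(\texttt{aab},\texttt{ab})=1=p$, yet $T[1\dd 3)=T[2\dd 4)=\texttt{aa}$, so the square $T[2\dd 4)$ is left extendible.

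Once you commit to the reading ``non-extendible $=$ right non-extendible'', your two paragraphs for $(\Rightarrow)$ and the right half of $(\Leftarrow)$ already constitute a complete proof, essentially the same as the paper's.
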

\begin{proof}
Let $\E$ be an encoding function for $\approx$.

($\Rightarrow$) If $T[i\dd i+2p)$ is a non-extendible 
$\approx$-square, then the longest common prefix of $\Code^\E(T[i\dd i+p])$ and $\Code^\E(T[i+p\dd i+2p])$ is exactly $\Code^\E(T[i\dd i+p))$. This yields that indeed $\LCP^\approx(T[i \dd n],T[i+p \dd n])=p$.

($\Leftarrow$) If $\LCP^\approx(T[i \dd n],T[i+p \dd n])=p$, then 
\[\Code^\E(T[i\dd i+p))=\Code^\E(T[i+p\dd i+2p))\text{ but }\Code^\E(T[i \dd i+p])\not=\Code^\E(T[i+p \dd i+2p]).\]
Hence, indeed $T[i \dd i+2p)$ is a non-extendible $\approx$-square.
\end{proof}

All non-extendible $\approx$-squares can be located
using the same algorithm as in computing branching tandem repeats (\cite[Theorem 1]{DBLP:journals/tcs/StoyeG02}), encapsulated in the following \cref{lem:Stoye}. We give a proof of the lemma for completeness.

A \emph{weighted tree} is a rooted tree with non-negative integer weights on nodes such that the weight of a node is smaller than the weight of its parent. We consider only trees in which each non-root and non-leaf node is branching.

\newcommand{\T}{\mathcal{T}}
\newcommand{\LL}{\mathit{LL}}
\begin{lemma}\label{lem:Stoye}
Let $\T$ be a weighted tree with $n$ leaves that are labeled by distinct numbers in $[1 \dd n]$. All pairs $(i,j)$ of leaf labels in $\T$, $1 \le i<j \le n$, such that the weight of their lowest common ancestor is $j-i$, can be listed in $\Oh(n \log n)$ time.
\end{lemma}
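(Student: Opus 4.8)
The plan is to adapt the classical branching-tandem-repeat technique of Stoye and Gusfield, reorganised around the node weights. The key reformulation is that a pair $(i,j)$ with $i<j$ qualifies exactly when $v=\mathrm{lca}(i,j)$ satisfies $w(v)=j-i$, and this happens precisely when $i$ and $j$ lie in two \emph{distinct} child subtrees of $v$ (so that $v$ is their exact lca) and their labels differ by exactly $w(v)$. Hence it suffices to process each internal node $v$ of $\T$ once, set $D:=w(v)$, and report every pair of leaves lying in different children of $v$ whose labels differ by $D$. Summing over all nodes then produces each qualifying pair exactly once, since the lca of a pair is unique, so no cross-node deduplication is needed.

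First I would run a DFS to precompute, for every node $u$, its Euler-tour interval $[\mathrm{in}(u)\dd\mathrm{out}(u)]$, the number of leaves in its subtree, and the contiguous range those leaves occupy in a fixed leaf-order array; I would also store $\mathrm{pos}[x]$, the entry time of the leaf labelled $x$, for every $x\in[1\dd n]$ (the labels form a permutation of $[1\dd n]$). With these tables the two membership tests I need — ``is the leaf labelled $x$ in the subtree of $u$?'' and ``is it in a specified child of $u$?'' — reduce to comparing $\mathrm{pos}[x]$ against an interval and cost $\Oh(1)$ each, and iterating the leaves of a child is a sweep of its range in the leaf-order array.

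The engine is the smaller-half (heavy-child) technique. At node $v$ I identify the child with the most leaves and iterate only over the leaves of all the \emph{other} children. For each scanned leaf $\ell$, say in child $c$, I probe the two candidate partners $\ell+D$ and $\ell-D$ with the $\Oh(1)$ tests: the forward partner $\ell+D$ is reported whenever it is a leaf of $v$ lying in a child other than $c$ (yielding the pair $(\ell,\ell+D)$), while the backward partner $\ell-D$ is reported \emph{only} when it lies in the unscanned largest child (yielding $(\ell-D,\ell)$). A standard doubling argument bounds the time: whenever a leaf sits in a non-largest child $c$ of $v$, the largest child has at least as many leaves as $c$, so $v$'s subtree has at least twice as many leaves as $c$; thus the subtree containing a given leaf at least doubles each time that leaf is scanned, so each leaf is scanned $\Oh(\log n)$ times and the total work, including the $\Oh(1)$ probes, is $\Oh(n\log n)$, dominating the $\Oh(n)$ preprocessing.

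The step I expect to be most delicate is guaranteeing that each qualifying pair is reported exactly once while staying within the smaller-half budget, which is exactly what the forward/backward asymmetry secures. If both endpoints lie in non-largest children, the forward probe from the smaller endpoint reports the pair and the backward probe from the larger endpoint is suppressed, because its partner is not in the largest child; if one endpoint lies in the largest child, that child is never scanned, and the pair is caught by a single probe from the scanned endpoint — forward when the scanned endpoint is the smaller one, and backward (whose guard passes precisely because the partner is in the largest child) when it is the larger one. Checking these cases, together with the fact that the ``different child'' requirement forbids emitting any pair whose lca lies strictly below $v$, completes the correctness argument and establishes the $\Oh(n\log n)$ bound.
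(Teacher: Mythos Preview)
Your proposal is correct and follows essentially the same approach as the paper: Euler-tour/DFS preprocessing for $\Oh(1)$ subtree membership, then the smaller-half (heavy-child) scan at every internal node with the standard doubling argument for the $\Oh(n\log n)$ bound. The one difference is that you introduce the forward/backward asymmetry to guarantee each pair is output exactly once; the paper simply checks, for each scanned leaf $i$ in a non-largest child $u_t$, whether $i\pm w$ lies in $v$'s subtree but outside $u_t$, which can list a pair twice when both endpoints sit in non-largest children---harmless for the stated time bound, but your version is a clean refinement.
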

\begin{proof}
All leaves of $\T$ are stored in a global leaf list $LL$, left-to-right. Each node $v$ stores the sublist of $\LL$ that corresponds to the leaves in its subtree, denoted as $\LL(v)$, represented by the ends of the sublist, as well as the size $|\LL(v)|$ of the sublist. For each number in $[1 \dd n]$, the element of $\LL$ with this label is stored. This data can be computed in $\Oh(n)$ time via depth-first search.

For each node of $\T$, we store its pre-order and post-order numbers from the depth-first search. Then we can check if a given leaf $\ell$ belongs to a subtree of a given node $v$ in $\Oh(1)$ time (by checking inclusion of intervals).

In the main part of the algorithm, for each node $v$ of $\T$ with weight $w$, we compute all pairs of leaf labels $(i,j)$ such that $j-i=w$ and the corresponding leaves are located in different subtrees of $v$. To this end, let $u_1,\ldots,u_b$ be all children of $v$ and assume that $|\LL(u_1)|=\max_{t \in [1 \dd b]}\{|\LL(u_t)|\}$. For each leaf $i \in \LL(u_t)$, for $t > 1$, for each $j \in \{i-w, i+w\} \cap [1 \dd n]$, we check if the leaf with number $j$ is in the subtree of $v$ but not in the subtree of $u_t$; if so, we list the pair $(\min(i,j),\max(i,j))$.

The complexity is $\Oh(n \log n)$ by a known argument on small-to-large merging~\cite{DBLP:journals/jacm/Tarjan75}.
\end{proof}

\begin{lemma}
\label{lem:non-extendible-approx-squares}
All the (left and right) non-extendible $\approx$-squares in a string of length $n$ can be computed in $\Oh(\pi^\E(n) +n\rho^\E(n)+n \log n)$ time (here $\E$ is an encoding function for $\approx$).
\end{lemma}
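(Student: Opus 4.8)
The plan is to reduce the computation of all non-extendible $\approx$-squares to a single tree-based listing problem over the $\approx$-suffix tree, and then invoke \cref{lem:Stoye}. The central observation, supplied by \cref{lem:nonext_lcp}, is that $T[i \dd i+2p)$ is a (right) non-extendible $\approx$-square exactly when $\LCP^\approx(T[i \dd n], T[i+p \dd n]) = p$. Since $\LCP^\approx$ values are precisely read off as weighted depths of lowest common ancestors in the $\approx$-suffix tree, the condition ``$\LCP^\approx(T[i \dd n], T[j \dd n]) = j - i$'' (with $j = i+p$) becomes ``the lowest common ancestor of the leaves labelled $i$ and $j$ has weight $j-i$''. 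This is exactly the combinatorial problem solved by \cref{lem:Stoye}.

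First I would build the $\approx$-suffix tree of $T$ using \cref{cor:CH}, which costs $\Oh(\pi^\E(n) + n\rho^\E(n) + n\log^2\log n/\log\log\log n)$ time; this is dominated by the target bound since $n\log^2\log n/\log\log\log n = \Oh(n\log n)$. I would label each leaf of this tree by the starting index $i$ of the suffix $T[i \dd n]$ it represents, and assign each explicit node the weight equal to its string depth (the length of the $\approx$-equivalence prefix it encodes). By \cref{obs:general} and equivalence~\eqref{eq:E}, this tree is a genuine weighted tree in the sense required by \cref{lem:Stoye}: node weights strictly decrease from parent to child, leaves carry distinct labels in $[1 \dd n]$, and the weighted depth of the lowest common ancestor of leaves $i$ and $j$ equals $\LCP^\approx(T[i \dd n], T[j \dd n])$.

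Next I would apply \cref{lem:Stoye} to list, in $\Oh(n\log n)$ time, every pair $(i,j)$ with $i<j$ whose lowest common ancestor has weight exactly $j-i$. By \cref{lem:nonext_lcp}, each such pair corresponds with $p = j-i$ to a right non-extendible $\approx$-square $T[i \dd i+2p)$, and conversely every right non-extendible $\approx$-square arises this way. This yields all right non-extendible $\approx$-squares. To obtain the left non-extendible ones, I would repeat the entire construction on the reversed string $T^R$: a left non-extendible $\approx$-square in $T$ corresponds (after reindexing positions $k \mapsto n+1-k$) to a right non-extendible $\approx$-square in $T^R$, using that $\approx$ is an SCER so that equivalence is preserved under the symmetric reflection of the extendibility condition. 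The preprocessing bounds $\pi^\E, \rho^\E$ apply verbatim to $T^R$, so the total time remains $\Oh(\pi^\E(n) + n\rho^\E(n) + n\log n)$.

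The step I expect to require the most care is verifying that the weighted $\approx$-suffix tree exactly realises $\LCP^\approx$ as weighted LCA depth, and that the branching structure matches the hypotheses of \cref{lem:Stoye} (every internal non-root node is branching). Edges in the compacted trie may span several character-encodings, so I must confirm that the node weight used is the genuine string depth and that equivalence~\eqref{eq:E} guarantees that two suffixes agree in their $\approx$-prefix up to precisely the depth of their branching node — no more and no less. The remaining subtlety is the reversal argument for the left non-extendible case: I would state explicitly the index transformation and check that the SCER property makes left extendibility in $T$ correspond to right extendibility in $T^R$, so that no squares are missed or double-counted.
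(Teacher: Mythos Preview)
Your proposal is essentially the paper's proof: build the $\approx$-suffix tree via \cref{cor:CH}, translate \cref{lem:nonext_lcp} into the LCA-weight condition on leaves, and invoke \cref{lem:Stoye}; the paper likewise dispatches the left case with a one-line ``symmetric'' remark.

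One caveat on the step you yourself flag as delicate. Your plan for the left case is to run the same procedure on $T^R$ with the \emph{same} relation $\approx$, relying on the SCER axiom to transport left extendibility in $T$ to right extendibility in $T^R$. That transport requires $X\approx Y \Leftrightarrow X^R\approx Y^R$, which the SCER axioms do \emph{not} guarantee. Cartesian-tree matching with the usual leftmost-minimum convention is a concrete counterexample: $(1,1,2)$ and $(1,2,2)$ Cartesian-tree match, but their reversals $(2,1,1)$ and $(2,2,1)$ do not. The correct ``symmetric'' construction uses a \emph{suffix}-based encoding $\E'$ (satisfying $X\approx Y \Leftrightarrow \forall i\,\E'(X[i\dd|X|])=\E'(Y[i\dd|Y|])$; its existence follows as in \cref{obs:general}) to build a compacted trie of encoded \emph{prefixes} of $T$, after which the analog of \cref{lem:nonext_lcp} and \cref{lem:Stoye} apply verbatim. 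The paper's proof is no more explicit on this point, and for every SCER in \cref{ex:particular} a reverse encoding with the same $\pi,\rho$ parameters is immediate, so the stated bound stands; but your proposed verification ``that the SCER property makes left extendibility in $T$ correspond to right extendibility in $T^R$'' would fail as written.
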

\begin{proof}
We show the algorithm for right non-extendible $\approx$-squares; the computations for left non-extendible op-squares are symmetric. Let $T$ be a string of length $n$. We construct the $\approx$-suffix tree of $T$ in $\Oh(\pi^\E(n) +n\rho^\E(n))$ time (\cref{cor:CH}).

By \cref{lem:nonext_lcp}, $T[i \dd i+2p)$ is a non-extendible $\approx$-square if and only if the node $v$ with string label $\Code^\E(T[i \dd p))$ in the $\approx$-suffix tree is branching and the lowest common ancestor of leaves with string labels $\Code^\E(T[i \dd n])$ and $\Code^\E(T[i+p \dd n])$ is $v$. The algorithm behind \cref{lem:Stoye} allows us to report all such pairs $(i,i+p)$ in $\Oh(n \log n)$ time.
\end{proof}

\begin{lemma}\label{lem:nshift}
All the (left and right) non-shiftable $\approx$-squares in a string of length $n$ can be computed in $\Oh(\pi^\E(n) +n\rho^\E(n)+n \log n)$ time (here $\E$ is an encoding function for $\approx$).
\end{lemma}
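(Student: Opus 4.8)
The plan is to obtain the non-shiftable $\approx$-squares by \emph{filtering} the non-extendible ones. By the observation above, every right (respectively, left) non-shiftable $\approx$-square is right (respectively, left) non-extendible, so the set of non-shiftable squares is contained in the set already produced by \cref{lem:non-extendible-approx-squares}. Hence I would first run that lemma to list all (left and right) non-extendible $\approx$-squares in $\Oh(\pi^\E(n)+n\rho^\E(n)+n\log n)$ time; as a byproduct this builds the $\approx$-suffix tree, which together with the LCA structure of \cref{sec:ST} answers $\LCP^\approx$ queries about substrings of $T$ in $\Oh(1)$ time. The completeness of this candidate set, which rests entirely on the observation, is the conceptual heart of the reduction: it is what lets us avoid enumerating shiftable squares directly.

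Next I would classify each candidate using a single $\LCP^\approx$ query. Fix a right non-extendible $\approx$-square $T[i\dd i+2p)$. By definition it is right shiftable exactly when $T[i+1\dd i+2p]$ is an $\approx$-square, i.e., when its two halves satisfy $T[i+1\dd i+1+p)\approx T[i+p+1\dd i+2p+1)$; this holds if and only if $\LCP^\approx(T[i+1\dd n],T[i+p+1\dd n])\ge p$, one $\Oh(1)$-time query (with the boundary case $i+2p>n$, where the shifted window runs off the string, interpreted as ``not shiftable''). I report $T[i\dd i+2p)$ as right non-shiftable precisely when this query fails. The left direction is symmetric and again only a \emph{forward} query is needed: $T[i-1\dd i+2p-2]$ is an $\approx$-square iff $\LCP^\approx(T[i-1\dd n],T[i+p-1\dd n])\ge p$, and $i=1$ is treated as left non-shiftable.

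Correctness is then immediate in both directions: the candidate set already contains every right (left) non-shiftable square by the observation, while the $\LCP^\approx$ test decides shiftability exactly, so I neither miss a non-shiftable square nor report a shiftable one. For the running time, \cref{lem:non-extendible-approx-squares} outputs $\Oh(n\log n)$ non-extendible squares (the bound coming from \cref{lem:Stoye}), and each is classified in $\Oh(1)$ time after the $\approx$-suffix-tree and LCA preprocessing, so the filtering adds only $\Oh(n\log n)$ on top of the construction. This yields the claimed $\Oh(\pi^\E(n)+n\rho^\E(n)+n\log n)$ bound.

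I do not expect a genuine obstacle here, since the lemma reuses machinery already assembled; the only care needed is in the index bookkeeping and boundary conditions of the $\LCP^\approx$ tests, and in remembering that ``non-extendible'' is a \emph{strict} superset of ``non-shiftable'' (as noted right after the observation), so the per-candidate test is essential rather than vacuous. The main idea to get right is simply that the efficiently enumerable non-extendible squares serve as a complete and sufficiently small candidate pool for the non-shiftable ones.
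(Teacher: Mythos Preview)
Your proposal is correct and essentially identical to the paper's proof: both list the non-extendible $\approx$-squares via \cref{lem:non-extendible-approx-squares} and then filter each candidate with a single $\LCP^\approx$ query on the shifted halves. The only cosmetic difference is that the paper phrases the test as $\LCP^\approx(T[i+1\dd n],T[i+p+1\dd n])=p-1$ whereas you test $\ge p$; since the SCER property guarantees this $\LCP^\approx$ is always at least $p-1$ for a candidate, the two formulations are equivalent.
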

\begin{proof}
We show the algorithm for right non-shiftable $\approx$-squares;
the computations for left non-shiftable $\approx$-squares are
symmetric.
Let $T$ be a string of length $n$. We use \cref{lem:non-extendible-approx-squares} for $T$. Now we need to filter out the non-extendible $\approx$-squares that are right shiftable.
For this, for a right non-extendible $\approx$-square $T[i\dd i +2p)$ we need to check if $\LCP^\approx(T[i + 1 \dd n], T[i + p + 1 \dd n]) = p-1$.
This condition can be verified in $\Oh(1)$ time  using \cref{cor:CH}.
\end{proof}



The only remaining part is the computation of an interval representation of the $\Squares_p^\approx$, but this can be done using left/right-non-shiftable squares in $\Oh(n\log n)$ time
due to \cite[Lemma 4.4]{DBLP:journals/iandc/GourdelKRRSW20}.
For the completeness of the description we briefly recall this procedure.

Let us define the following two auxiliary sets:
\begin{align*}{\cal L}_p &= \{ i : T[i\dd i+2p)\ \textrm{is a left non-shiftable $\approx$-square}\},\\
{\cal R}_p &= \{ i : T[i \dd i+2p)\ \textrm{is a right non-shiftable $\approx$-square}\}.
\end{align*}
By \cref{lem:nshift}, all the sets ${\cal L}_p$ and ${\cal R}_p$ can be computed in $\Oh(n\log n)$ time;
also $\sum_p|{\cal L}_p|=\Oh(n\log n)$.

If we present the set $\Squares_p^\approx$ for any $p$ as a sum of maximal intervals, then the left and right endpoints of these intervals will be ${\cal L}_p$ and ${\cal R}_p$, respectively.
Hence $|{\cal L}_p|=|{\cal R}_p|$. Thus if
\[{\cal L}_p=\{\ell_1,\ldots,\ell_k\}\ \textrm{and}\ {\cal R}_p=\{r_1,\ldots,r_k\},\]
then the interval representation of the set $\Squares_p^\approx$
is $[\ell_1\dd r_1]\cup \cdots \cup [\ell_k\dd r_k]$.
Clearly, it can be computed in $\Oh(|{\cal L}_p|)$ time.
We obtain \cref{lem:esqupe}. 

\end{document}